\definecolor{inlinemath}{rgb}{0.75, 0.25, 0.0}
\let\SetColor\color
\everymath\expandafter{\the\everymath\SetColor{inlinemath}}
\definecolor{MyBlue}{RGB}{0,91,148}
\definecolor{MyRed}{RGB}{200,15,62}
\definecolor{MyPurple}{rgb}{0.44, 0.16, 0.39}
\newcommand\lblue[1]{\textcolor{MyBlue}{#1}}
\newcommand\lred[1]{\textcolor{MyRed}{#1}}
\newcommand\lpurple[1]{\textcolor{MyPurple}{#1}}
\titleformat*{\section}{\sc \centering \Large}
\titleformat*{\subsection}{\sc \centering \large}
\titleformat*{\subsubsection}{\sc \large}
\definecolor{UBCblue}{rgb}{0.0, 0.25, 0.75}
\newcommand{\sep}[0]{ \; |\; }
\newcommand\citenop[1]{\citeauthor{#1}, \citeyear{#1}}
\let\cite\textcite
\pgfplotsset{compat = 1.16}
\titlespacing*{\paragraph}{0pt}{0.5ex plus .1ex minus .1ex}{1em}
\def\th@plain{%
    \thm@notefont{}% same as heading font
    \itshape % body font
}
\def\th@definition{%
    \thm@notefont{}% same as heading font
    \normalfont % body font
}
\newtheorem{theorem}{\sc Theorem}
\newtheorem{proposition}{\sc Proposition}
\newtheorem{lemma}{\sc Lemma}
\newtheorem{lemma-app}{\sc Lemma}[section]
\newtheorem{assumption}{\sc Assumption}
\newtheorem{corollary}{\sc Corollary}
\newtheorem{definition}{\sc Definition}
\newtheorem{definition-app}{\sc Definition}[section]
\theoremstyle{definition}
\providecommand{\proofnamefont}{\itshape}% the default
\xpatchcmd{\proof}{\itshape}{\normalfont\proofnamefont}{}{}
\renewcommand{\proofnamefont}{\bfseries}% your preference
\setlist[enumerate, 1]{
    itemsep   = 0.05em,
    parsep    = 0.1em,
    partopsep = 0.1em,
    topsep    = 0.1em,
}
\setlist[itemize, 1]{
    itemsep   = 0.05em,
    parsep    = 0.1em,
    partopsep = 0.1em,
    topsep    = 0.1em,
    label     = $\bullet$,
}
\setlist[itemize, 2]{
    itemsep   = 0.15em,
    parsep    = 0.4em,
    partopsep = 0.45em,
    label     = --,
}
\setlist[itemize, 3]{
    itemsep   = 0.1em,
    parsep    = 0.4em,
    partopsep = 0.45em,
    label     = $\triangleright$
}
\newcommand\invisiblesection[1]{%
    \refstepcounter{section}%
    \addcontentsline{toc}{section}{\protect\numberline{\thesection}#1}%
    \sectionmark{#1}}
\def\addlegendimage{\csname pgfplots@addlegendimage\endcsname}
\crefname{table}{Table}{tables}
\crefname{equation}{Equation}{equations}
\crefname{lemma-app}{Lemma}{lemmas}
\crefname{claim}{Claim}{claims}
\crefname{problem}{Problem}{problems}
\crefname{example}{Example}{examples}
\crefname{assumption}{Assumption}{Assumptions}
\crefname{definition}{Definition}{Definition}
\crefname{definition-app}{Definition}{Definition}
\newcommand{\floor}[1]{\lfloor #1 \rfloor}
\newcommand{\ceil}[1]{\lceil #1 \rceil}
\title{\sc Targeted Advertising in Elections\thanks{
I thank Avidit Acharya, Sandeep Baliga, Arjada Bardhi, Renee Bowen, Steve Callander, Georgy Egorov, Catherine Hafer, Federica Izzo, Aleksandr Levkun, Navin Kartik, Joel Sobel, Francesco Trebbi, Richard Van Weelden, Kun Zhang and the audience members at 
Bilkent,
CEU,
CMU/Pittsburgh,
Columbia,
Essex,
Glasgow,
Linz, 
LSE,
NBER Political Economy,
NYU,
Princeton,
SAET,
SITE,
Stanford GSB,
UC Riverside,
UC San Diego,
UNC Chapel Hill,
Warwick. All errors are my own.}}
\author{Maria Titova\footnote{Department of Economics, Vanderbilt University. E-mail: motitova@gmail.com.}}
\date{\monthyeardate\today
%~\\[0.25cm]\href{https://maria-titova.com/papers/TA_in_Elections.pdf}{\sc \normalsize latest version} 
}
\begin{document}

\maketitle

\vspace{-0.5cm}

\begin{abstract}

    How does targeted advertising influence electoral outcomes? This paper presents a one-dimensional spatial model of voting in which a privately informed challenger persuades voters to support him over the status quo. I show that targeted advertising enables the challenger to persuade voters with opposing preferences and swing elections decided by such voters; under simple majority, the challenger can defeat the status quo even when it is located at the median voter's bliss point. Ex-ante commitment power is unnecessary---the challenger succeeds by strategically revealing different pieces of verifiable information to different voters. Publicizing all political ads would mitigate the negative effects of targeted advertising and help voters collectively make the right choice.

\end{abstract}

{\sc keywords}: Persuasion, verifiable information,
targeted advertising, elections

    {\sc jel classification}: D72, D82, D83

\onehalfspacing

\clearpage

\section{Introduction}

Targeted advertising, broadly defined as private messaging aimed at specific groups of voters, played a key role in many successful electoral campaigns. In 1960, John F. Kennedy's campaign distributed two million copies of ``the blue bomb''---a pamphlet advertising his support of civil rights---to African American churches across the U.S. Decades later, George W. Bush's 2004 reelection campaign used direct mail to communicate his opposition to gay marriage and support for ``traditional family values'' to evangelical Christian households. More recently, the 2016 Brexit referendum and the Trump presidential campaign both employed the services of Cambridge Analytica, a data mining firm, to design and distribute thousands of targeted ads to diverse audiences.
Although these examples suggest broad awareness of a correlation between targeted advertising and electoral success, the precise mechanisms by which tailoring messages to different voters influences election outcomes remain poorly understood.

This paper proposes a simple theoretical model that fills this gap.
The model is grounded in three stylized facts about electoral campaigns.
First, voters have incomplete information and update their beliefs in response to campaign messages (\citenop{kendall_how_2015}; \citenop{spenkuch_political_2018}; \citenop{le_pennec_how_2023}).
Second, politicians use the strategy of ambiguity to avoid making precise statements about their positions on issues (\citenop{Page1978}; \citenop{druckman_campaign_2009}; \citenop{fowler_political_2021}).
Third, politicians tailor messages to specific groups of voters (\citenop{Hillygus2014}).
I identify a novel mechanism by which targeted advertising changes electoral outcomes.
In particular, I show that privately revealing different pieces of information to different voters allows politicians to persuade voters with diametrically opposing preferences and win otherwise unwinnable elections, in which such voters are pivotal.

My model has two components: an advertising campaign followed by an election.\footnote{The analysis is not limited to political campaigns but applies more broadly to environments wherein an informed agent seeks to persuade a group of decision makers with different preferences, for example, in lobbying, corporate governance, hiring and other organizational contexts.}
In the election, a unit mass of voters chooses between two options, the challenger and the status quo.
Voters care about the candidates' {\it policy outcomes}, which represent proposed policies, their implementation, or welfare consequences.
The challenger's policy outcome $x \in [-1,1]$ is initially unknown to the voters, while the status quo policy outcome is commonly known and normalized to zero.
Voters have single-peaked and single-crossing preferences, with the leading example being quadratic loss.\footnote{In this setting, single crossing requires that the voters' utility difference between the status quo and {\it any} lottery over challenger's policy outcomes changes sign at most once if we ordered voters from left to right.} The goal of the office-motivated challenger is to convince a decisive coalition of voters to approve his proposal. I model the challenger's advertising campaign as a game of persuasion with verifiable information (\citenop{grossmanDisclosureLawsTakeover1980}, \citenop{Milgrom1981} and \citenop{Grossman1981}). That is, I assume that the challenger privately knows his policy outcome $x$ and can send any subset of $[-1,1]$ that contains $x$. Conceptually, this communication protocol allows the challenger to lie by omission but not commission: a message $[-0.5,0]$ informs a voter that the challenger's policy outcome is moderately left, but is only partially informative because $x$ could be anywhere between $-0.5$ and $0$. Communication with verifiable information is a reasonable middle ground between the possibilities identified by \cite{Persson2002}, who famously wrote (p. 483), ``It is thus somewhat schizophrenic to study either extreme: where promises have no meaning or where they are all that matter.'' I consider two versions of the game: public advertising  and targeted advertising. The former models a public advertising campaign in which the challenger sends the same message to all voters. The latter models a targeted advertising campaign in which the challenger knows the voters' bliss points and sends private messages to different groups of voters.

The main contribution of the paper is showing that targeted advertising allows the challenger to win elections that are unwinnable with public advertising.\footnote{I say that an election is unwinnable with public or targeted advertising if the challenger's ex-ante odds of winning are zero in every (perfect Bayesian) equilibrium of the corresponding game. An election is winnable otherwise.}
My first two results (Theorems \ref{thm1:PA} and \ref{thm2:TA}) characterize elections (described by a set of voters' bliss points and a set of decisive coalitions) that are unwinnable for the challenger with public and targeted advertising, respectively.
\cref{thm1:PA} states that an election is unwinnable for the challenger with public advertising if and only if there is no decisive coalition of left or right voters.
The intuition is simple: if there is a decisive coalition of left (right) voters, then the challenger can use a fully revealing strategy and win when his policy outcome is left (right) of the status quo.
Otherwise, all decisive coalitions include status quo voters (whose bliss point is the status quo) or left and right voters (who have diametrically opposing preferences, so the status quo is already the best compromise).
\cref{thm2:TA} states that an election is unwinnable for the challenger with targeted advertising if and only if every decisive coalition includes a status quo voter.
In particular, targeted advertising makes it possible for the challenger to convince voters on the opposite sides of the status quo with a positive probability by telling them different things.
Under simple majority, Theorems \ref{thm1:PA} and \ref{thm2:TA} classify elections as follows:
\begin{itemize}
    \item[I.] If the median voter's bliss point is left or right of the status quo, then the election is winnable with public advertising.
    \item[II.] If the median voter's bliss point is at the status quo, but status quo voters do not form a majority, then the election is unwinnable with public advertising but winnable with targeted advertising.
    \item[III.] If the median voter's bliss point is at the status quo and status quo voters do form a majority, then the election is unwinnable with public or targeted advertising.
\end{itemize}

The second part of the paper focuses on the optimal targeted advertising strategy that maximizes the challenger's (ex-ante) odds of winning elections that are unwinnable with public advertising.
I make a further simplification that all left voters have the same bliss point $L<0$ and all right voters have the same bliss point $R>0$.
Below I use a motivating example in which $L = -0.2$, $R = 0.4$, the voters' preferences are quadratic, the minimal decisive coalition includes left and right voters, and $x \sim U[-1,1]$, to illustrate the following two results:
\cref{prop:baseline-TA} identifies the optimal targeted advertising strategy, while
\cref{prop:comp_stats} describes the comparative statics as the right voters become more extreme/the electorate becomes more polarized.

Consider the following strategy of the challenger: to the left voters, he reveals whether his policy outcome is in the set $[-0.4,0.2]$, or not.\footnote{That is, the challenger sends message $[-0.4,0.2]$ if $x \in [-0.4,0.2]$ and message $[-1,0.4)\cup(0.2,1]$ if $x \notin [-0.4,0.2]$ to all left voters} To the right voters, he reveals whether his policy outcome is in $[-0.4,0.8]$, or not. When a left voter receives message $[-0.4,0.2]$, she learns that the challenger's policy outcome could be anywhere  in $[-0.4,0.2]$, which is just enough information to convince her to approve.\footnote{If the prior is uniform on $[-1,1]$, then a left voter's posterior belief (calculated via Bayes rule given the challenger's strategy) after message $[-0.4,0.2]$ is uniform on $[-0.4,0.2]$. Her expected utility is $\int_{-0.4}^{0.2} -\frac{(x+0.2)^2}{0.6} dx = -0.04$ if she approves and $-0.2^2 = -0.04$ (since the status quo policy outcome is normalized to zero) if she rejects.} By similar reasoning, a right voter is convinced after message $[-0.4,0.8]$. This strategy leads to the following electoral outcome: the left voters approve if and only if $x \in [-0.4,0.2]$ and the right voters approve if and only if $x \in [-0.4,0.8]$. Given that left and right voters form a decisive coalition, the challenger wins the election if and only if his policy outcome is between $-0.4$ and $0.2$. His ex-ante odds of winning are 30\% -- a massive improvement over his odds of winning without targeted advertising, which are 0\%. \cref{prop:baseline-TA} confirms that the described electoral outcome is an equilibrium outcome with the highest odds of the challenger winning across all equilibria of the targeted advertising game.

\begin{figure}[ht!]
    \centering
    \begin{tikzpicture}[scale = 6.65]

%%%%% fixed variables: eps, L, rs, labelshift %%%%%

\def\eps{0.00};

\def\vL{-0.2};
\def\aL{-0.4};  % = 2vL+eps
\def\cL{-0.0};  % = -eps
\def\bL{0.2};  % = -\eps+\sqrt{2}*(-\vL-\eps)

\def\blevel{0};
\def\rs{-0.08};
\def\labelshift{3pt};

\def\vR{0.4};
\def\bR{0.8};  % = 2vR-eps
\def\cR{0.0};  % = eps
\def\aR{-0.4};  % = -\eps+\sqrt{2}*(-\vR-\eps)

%%%%% grid and axis %%%%%

\draw (-1,0) -- (1,0) node[right] {};

%% draw tiks

\foreach \x in {-1,-0.9,...,1}{
    \draw (\x cm,0.3pt) -- (\x cm,-0.3pt);
}

\foreach \x in {-1,-0.4,-0.2,0,0.2,0.4,0.8,1}{
    \draw (\x cm,0.48pt) -- (\x cm,-0.48pt) node[below] {$\x$};
}

%% L's message

%%  voter
\draw [color = MyBlue, fill = MyBlue] (\vL,\blevel-\rs) circle (.015);
    \node[above, yshift = \labelshift] at ( \vL,\blevel-\rs) {$\textcolor{MyBlue}{L}$};

% approval set
\draw[MyBlue, line width = 1.10mm] (\aL,\blevel-\rs) -- (\cL,\blevel-\rs);

\draw[MyBlue, line width = 1.10mm, shift={(\aL,\blevel-\rs)}] (0pt,0.49pt) -- (0pt,-0.49pt);

\draw[MyBlue, line width = 1.10mm, shift={(\cL,\blevel-\rs)}] (0pt,0.49pt) -- (0pt,-0.49pt);

% dashed part

\draw[MyBlue, dotted, line width = 1.10mm] (\cL,\blevel-\rs) -- (\bL,\blevel-\rs);

\draw[MyBlue, line width = 1.10mm, shift={(\bL,\blevel-\rs)}] (0pt,0.49pt) -- (0pt,-0.49pt);

%% R's message
\def\rs{-0.04};

%%  voter
\draw [color = MyRed, fill = MyRed] (\vR,\blevel-\rs) circle (.015);
    \node[above, yshift = \labelshift] at ( \vR,\blevel-\rs) {$\textcolor{MyRed}{R}$};

% approval set
\draw[MyRed, line width = 1.10mm] (\cR,\blevel-\rs) -- (\bR,\blevel-\rs);

\draw[MyRed, line width = 1.10mm, shift={(\bR,\blevel-\rs)}] (0pt,0.49pt) -- (0pt,-0.49pt);

\draw[MyRed, line width = 1.10mm, shift={(\cR,\blevel-\rs)}] (0pt,0.49pt) -- (0pt,-0.49pt);

% outside of approval set
\draw[MyRed, dotted, line width = 1.10mm] (\aR,\blevel-\rs) -- (\cR,\blevel-\rs);

\draw[MyRed, line width = 1.10mm, shift={(\aR,\blevel-\rs)}] (0pt,0.49pt) -- (0pt,-0.49pt);

%% winning policies

\draw[black, line width = 1.50mm] (\aR,\blevel) -- (\bL,\blevel);

\draw[black, line width = 1.10mm, shift={(\aR,\blevel)}] (0pt,0.49pt) -- (0pt,-0.49pt);

\draw[black, line width = 1.10mm, shift={(\bL,\blevel)}] (0pt,0.49pt) -- (0pt,-0.49pt);
    
\end{tikzpicture}
    \caption{Targeted messages that convince left voters (in blue) and right voters (in red). The challenger wins the election whenever his policy outcome lies in the intersection of the convincing messages (in black).}
    \label{fig:intro}
\end{figure}
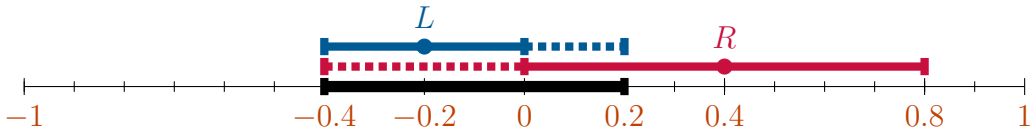

To see how this challenger-preferred equilibrium outcome changes as right voters become more extreme, suppose the right voters' bliss point increases from $R = 0.4$ to $R' = 0.5$. Following the same logic as above, we find that the convincing messages are $[-0.4,0.2]$ for left voters and $[-0.5,1]$ for right voters. The challenger wins when his policy outcome is between $-0.4$ and $0.2$, exactly as before. However, the challenger's equilibrium odds of winning may be even higher. Specifically, observe that when the challenger's policy outcome is between $-0.5$ and $-0.4$, the strategy described above convinces right but not left voters. However, left voters actually prefer policy outcomes in $[-0.5,-0.4]$ to those in $[0.1,0.2]$ as the former are closer to their bliss point. Hence, we can recalculate the message that convinces left voters (making them indifferent between approval and rejection), forcing it to start at $-0.5$. That message is $[-0.5,0.179]$. \cref{fig:intro-2} illustrates the electoral outcome after right voters become more extreme.

\begin{figure}[ht!]
    \centering
    \begin{tikzpicture}[scale = 6.65]

%%%%% fixed variables: eps, L, rs, labelshift %%%%%

\def\eps{0.00};

\def\vL{-0.2};
\def\aL{-0.4};  % = 2vL+eps
\def\cL{-0.0};  % = -eps
\def\bL{0.2};  % = -\eps+\sqrt{2}*(-\vL-\eps)

\def\aLAS{-0.5};
\def\bLAS{0.1791};

\def\blevel{0};
\def\rs{-0.08};
\def\labelshift{3pt};

\def\vR{0.5};
\def\bR{1.0};  % = 2vR-eps
\def\cR{0.0};  % = eps
\def\aR{-0.5};  % = -\eps+\sqrt{2}*(-\vR-\eps)

%%%%% grid and axis %%%%%

\draw (-1,0) -- (1,0) node[right] {};

%% draw tiks

\foreach \x in {-1,-0.9,...,1}{
    \draw (\x cm,0.3pt) -- (\x cm,-0.3pt);
}

\foreach \x in {-1,-0.4,-0.2,0,0.2,0.5,1}{
    \draw (\x cm,0.5pt) -- (\x cm,-0.5pt) node[below] {$\x$};
}

\draw (-0.5 cm,0.5pt) -- (-0.5 cm,-0.5pt) node[below,xshift=-7pt] {$-0.5$};

%% L's message

%%  voter
\draw [color = MyBlue, fill = MyBlue] (\vL,\blevel-\rs) circle (.015);
    \node[above, yshift = \labelshift] at ( \vL,\blevel-\rs) {$\textcolor{MyBlue}{L}$};

% approval set
\draw[MyBlue, line width = 1.10mm] (\aL,\blevel-\rs) -- (\cL,\blevel-\rs);

\draw[MyBlue, line width = 1.10mm, shift={(\aL,\blevel-\rs)}] (0pt,0.49pt) -- (0pt,-0.49pt);

\draw[MyBlue, line width = 1.10mm, shift={(\cL,\blevel-\rs)}] (0pt,0.49pt) -- (0pt,-0.49pt);

% dashed part

\draw[MyBlue, dotted, line width = 1.10mm] (\aLAS,\blevel-\rs) -- (\bLAS,\blevel-\rs);

\draw[MyBlue, line width = 1.10mm, shift={(\aLAS,\blevel-\rs)}] (0pt,0.49pt) -- (0pt,-0.49pt);

\draw[MyBlue, line width = 1.10mm, shift={(\bLAS,\blevel-\rs)}] (0pt,0.49pt) -- (0pt,-0.49pt);

%% R's message
\def\rs{-0.04};

%%  voter
\draw [color = MyRed, fill = MyRed] (\vR,\blevel-\rs) circle (.015);
    \node[above, yshift = \labelshift] at ( \vR,\blevel-\rs) {$\textcolor{MyRed}{R'}$};

% approval set
\draw[MyRed, line width = 1.10mm] (\cR,\blevel-\rs) -- (\bR,\blevel-\rs);

\draw[MyRed, line width = 1.10mm, shift={(\bR,\blevel-\rs)}] (0pt,0.49pt) -- (0pt,-0.49pt);

\draw[MyRed, line width = 1.10mm, shift={(\cR,\blevel-\rs)}] (0pt,0.49pt) -- (0pt,-0.49pt);

% outside of approval set
\draw[MyRed, dotted, line width = 1.10mm] (\aR,\blevel-\rs) -- (\cR,\blevel-\rs);

\draw[MyRed, line width = 1.10mm, shift={(\aR,\blevel-\rs)}] (0pt,0.49pt) -- (0pt,-0.49pt);

%% winning policies

\draw[black, line width = 1.50mm] (\aR,\blevel) -- (\bLAS,\blevel);

\draw[black, line width = 1.10mm, shift={(\aR,\blevel)}] (0pt,0.49pt) -- (0pt,-0.49pt);

\draw[black, line width = 1.10mm, shift={(\bLAS,\blevel)}] (0pt,0.49pt) -- (0pt,-0.49pt);
    
\end{tikzpicture}
    \caption{More extreme right voters are persuadable by policy outcomes further to the left. As a result, the set of the challenger's winning policy outcomes (in black) is larger and shifts to the left.}
    \label{fig:intro-2}
\end{figure}
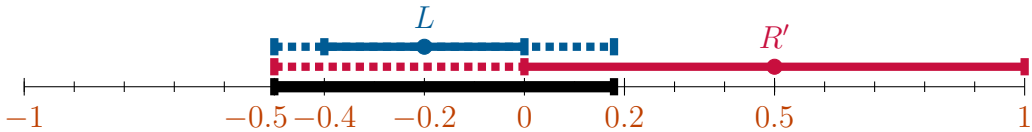

In the new equilibrium, the set of winning policy outcomes is $[-0.5,0.179]$ and the challenger's odds of winning are 33.96\%.
\cref{prop:comp_stats} confirms that when right voters become more extreme, the set of winning policy outcomes shifts in the opposite direction, to the left; also, the challenger's odds of winning increase. Intuitively, when right voters become more extreme, their dissatisfaction with the status quo grows, which makes them persuadable by wider ranges of policy outcomes.

My findings suggest a novel explanation for why politicians use the strategy of ambiguity: advertising different ranges of policy outcomes to different voters allows politicians to persuade voters with diametrically opposing preferences without lying (by commission) to any of them. Previous explanations for why politicians use the strategy of ambiguity include voters’ risk-seeking behavior (\citenop{Shepsle1972}),  candidates’ preference for ambiguity (\citenop{Aragones2000}), subsequent elections (\citenop{Meirowitz2005}, \citenop{Alesina2008}), resolution of uncertainty after an election (\citenop{Kartik2017}). Two previous papers find that ambiguity enables politicians to persuade voters with opposing preferences: in \cite{Callander2008}, voters have context-dependent preferences, and in \cite{tolvanenPoliticalAmbiguityAntiMedian2024}, the voters’ preferences are correlated with the state of the world. I reach a similar conclusion in a setting where voters have standard single-peaked and single-crossing preferences.

This paper builds on the literature comparing public and private communication in both electoral environments and broader sender-receiver settings. 
When messages are verifiable (like in this paper) and candidates are symmetric, unraveling occurs: competing politicians voluntarily disclose all information, whether advertising is public or private (\citenop{Janssen2017}; \citenop{schipper2019}).
However, I show that when candidates are asymmetric---specifically, when only the challenger advertises privately---there are equilibria without unraveling, and the challenger generally prefers private to public communication with verifiable information. 
In cheap-talk models, by contrast, senders often favor public communication because it limits the number of possible deviations in each state of the world (\citenop{Farrell1989}, \citenop{Koessler2008}, \citenop{Goltsman2011}, \citenop{Bar-Isaac2014}).\footnote{An exception is \cite{Schnakenberg2015} where a cheap-talk sender prefers private communication when the policy space is multidimensional, because then he can make statements about different dimensions of the policy to different voters. In contrast, my sender prefers private communication even when the policy space is one-dimensional.}
Consequently, targeted advertising cannot swing unwinnable elections if ads consist only of cheap talk. My analysis thus highlights that persuading voters with opposing preferences requires providing selective evidence or easily verifiable facts. 
Finally, in the information-design literature, the sender has ex-ante commitment power and generally prefers private communication (\citenop{Arieli2019}, \citenop{Chan2019}, \citenop{Heese2019}). 
That said, public and private information design yield the same outcomes if receivers possess private information and choose between two actions (\citenop{kolotilinPersuasionPrivatelyInformed2017}),
if receivers compete (\citenop{asseyerInformationDesignCompeting2025}),
and in elections with unanimity voting (\citenop{Bardhi2018}). 
Since the challenger-preferred equilibrium strategy described in \cref{prop:baseline-TA} is also an information design solution, my analysis identifies an electoral environment in which the sender strictly prefers private to public information design.

My analysis suggests that targeted advertising is bad for democracy because it elects politicians who are guaranteed to lose when voters possess the same information.
For example, under simple majority, targeted advertising allows the challenger to beat the status quo located at the median voter's bliss point, which, according to various versions of the median voter theorem, is unbeatable.
The most effective policy to make targeted advertising obsolete is to publicize all ads transmitted during electoral campaigns.
While voters may still make mistakes due to incomplete (but public) information, having a common belief would be sufficient for them to collectively make the right choice.

\section{Model}\label{section:model}

There is a challenger (he/him) and a unit mass of voters (she/her). The space of policy outcomes is $X:=[-1,1]$.
Let $V := \{ v_1,\ldots,v_n \}$, where $-1 \leq v_1 < \ldots < v_n \leq 1$, denote the ordered set of voters' bliss points; I will refer to $V$ as the {\it electorate}.\footnote{
    To simplify exposition, I assume that $V$ is finite. The results extend to infinite electorates under mild regularity conditions.
}
I refer to a voter with bliss point $v \in V$ as ``voter $v$'' when there is no possibility of confusion. The game proceeds as follows.

\begin{enumerate}
    \item The challenger learns his policy outcome $x \in X$, drawn from a common prior distribution $\mu_0 \in \Delta X$ that has full support and no atoms.\footnote{For a topological space $Y$, I let $\Delta Y$ denote the set of all Borel probability measures over $Y$, endowed with the weak* topology. I say that $\gamma \in \Delta Y$ is degenerate if $\gamma(y) = 1$ for some $y \in Y$, denoted by $\gamma = \delta_y$, and non-degenerate otherwise. Unless otherwise specified, all subsets $W \subseteq X$ are assumed to be Borel. For any $W \subseteq X$, denote its complement by $W^c := X \smallsetminus W$ and let $\floor{W} := \inf W$ and $\ceil{W} := \sup W$. For any $W \subseteq X$ with $\mu_0(W) > 0$, let $\mu_0(\cdot \sep W)$ denote the conditional prior measure, defined by $\mu_{0}(B \sep W) := \mu_{0}(B \cap W)/\mu_{0}(W)$ for all $B \subseteq X$.
    }

    \item The challenger sends messages to voters.
          Each message $m$ is a Borel subset of $X$ (a statement about his policy outcome) that contains a grain of truth, $x \in m$. This communication protocol, introduced by \cite{grossmanDisclosureLawsTakeover1980}, \cite{Milgrom1981}, and \cite{Grossman1981}, allows the challenger to {\it lie by omission} and send messages that contain policy outcomes other than $x$. However, it does not allow the challenger to {\it lie by commission} and send messages that do not include $x$. I consider two versions of the game:
          \begin{itemize}
              \item {\bf targeted advertising}: the challenger chooses a collection of private messages $( m_v )_{v \in V}$, and voters with bliss point $v \in V$ observe message $m_v$ only;
              \item {\bf public advertising}: the challenger chooses a public message $m$ that is the same for all $v \in V$.
          \end{itemize}

    \item Each voter decides whether to {\it approve} the challenger's policy outcome or {\it reject} it in favor of the status quo. I normalize the status quo policy outcome to $0$.

    \item Payoffs are realized. The outcome of the vote is determined by a voting rule that, for any profile of voters' approve/reject decisions, specifies whether the challenger wins (and receives a payoff of $1$) or loses (and gets $0$). 
    I say that a coalition of voters $D \subseteq V$ is {\it decisive} if the challenger {\it (i)} wins whenever all voters in $D$ approve and {\it (ii)} loses whenever all voters in $D$ reject.\footnote{I am grateful to the editor for suggesting this definition.} I let $\mathcal{D}$ denote the set of decisive coalitions. By construction, $\mathcal{D}$ is monotonic ($D \in \mathcal{D}$ and $D \subset D'$ imply $D' \in \mathcal{D}$) and proper ($D \in \mathcal{D}$ implies $V \smallsetminus D \notin \mathcal{D}$).\footnote{Monotonicity follows directly from the definition of a decisive coalition. 
    Properness holds because if both $D$ and $V \smallsetminus D$ were decisive, then the collective choice would be ill-defined when all voters in $D$ approve and all voters in $V \smallsetminus D$ reject.} I assume that $\mathcal{D}$ is nonempty (equivalently, $V \in \mathcal{D}$) and 
    refer to the pair $(V,\mathcal{D})$ as the {\it election}. 

    Voters are expressive, meaning that the payoff $u_v$ of voter $v \in V$ depends on the policy outcome that she votes for (which is $x$ if she approves, and $0$ if she rejects).\footnote{Expressive voters derive utility from expressing support, whether based on ethics, identity, or ideology, for one of the candidates, independent of any effect of the voting act on the electoral outcome. See \cite{Brennan1993} and \cite{Hamlin2011} for the theory of expressive voting behavior and \cite{Felsenthal1985}, \cite{Kan2001}, \cite{Artabe2014} for empirical evidence of it.} I describe voter $v$'s preferences using her {\it net payoff from approval}, $\alpha_v(x) := u_v(x) - u_v(0)$, so that $v$ weakly prefers to approve $x \in X$ whenever $\alpha_v(x) \geq 0$. I let voter $v$'s \textit{approval set} be the set $A_v := \{ x \in X \sep \alpha_v(x) \geq 0 \}$ of policy outcomes that she prefers to approve under complete information.
\end{enumerate}

I assume that $\alpha_v(x)$ is continuous, measurable, bounded, and satisfies two properties standard in spatial voting models:
\begin{assumption}\label{ass1:single-peaked}
    Voters' preferences are \ul{single-peaked} (strictly quasiconcave): for each voter $v \in V$, her net payoff from approval $\alpha_v(x)$ is strictly increasing on $[-1, v]$ and strictly decreasing on $[v, 1]$.
\end{assumption}
One relevant consequence of \cref{ass1:single-peaked} is that voter $v$'s upper contour sets are convex. In particular, the approval set $A_v$ is an interval $[ \floor{A_v}, 0 ]$ for {\it left voters} with $v < 0$, a point $\{ 0 \}$ for {\it status quo voters} with $v = 0$, and an interval $[0,\ceil{A_v}]$ for {\it right voters} with $v > 0$.

\begin{assumption}\label{ass2:single-crossing}
    Voters' preferences are \ul{single-crossing}: for each belief $\mu \in \Delta X \smallsetminus \delta_0$,
    \begin{equation}
        \begin{split}
            \text{for all } v, w \in V \text{ such that } v w < 0, \quad
            \mathbbm{E}_\mu[\alpha_v(x)] \geq 0 \implies \mathbbm{E}_\mu[\alpha_w(x)] < 0,
        \end{split}
        \tag{SC1}\label{SC1:incompatible}
    \end{equation}
    and
    \begin{equation}
        \begin{split}
             & \text{for all } v, w \in V \text{ such that } w < v < 0 \text{ or } 0 < v < w,      \\
             & \mathbbm{E}_\mu [\alpha_v(x)] \geq 0 \implies \mathbbm{E}_\mu [\alpha_w(x)] \geq 0.
        \end{split}
        \tag{SC2}\label{SC2:more-extreme}
    \end{equation}~\\[-2\baselineskip]
\end{assumption}

I refer to \cref{ass2:single-crossing} as a single-crossing property because it states that the sign of $\mathbbm{E}_\mu [\alpha_v(x)]$ is monotone in $v$ for all $\mu$.\footnote{Note that \cref{ass2:single-crossing} is
    implied by single-crossing expectational differences, which requires that ``an agent’s utility difference between any pair of lotteries ... is single crossing in the agent’s preference parameter or type'' (\citenop{Kartik2023}, p. 2982). Since the status quo is fixed and known, \cref{ass2:single-crossing} instead requires that the voters' utility difference between any lottery $\mu$ and the status quo, i.e., $\mathbbm{E}_{\mu}[\alpha_v(x)]$, is single crossing in $v$.} \cref{ass2:single-crossing} has two consequences relevant to our analysis. \eqref{SC1:incompatible} states that if a left voter prefers to approve, then {\it all} right voters prefer to reject, and vice versa. \eqref{SC2:more-extreme}, on the other hand, states that if a left (right) voter prefers to approve, then {\it all} left (right) voters with bliss points further away from the status quo also prefer to approve. For convenience, I refer to such voters as more extreme:
\begin{definition}
    A left voter $w$ is \emph{more extreme} than a left voter $v$ if $w < v < 0$. A right voter $w$ is \emph{more extreme} than a right voter $v$ if $0 < v < w$.
\end{definition}

I illustrate my results for quadratic voter preferences, $\alpha_v(x) = -(v-x)^2 + (v-0)^2 = -x^2 + 2vx$, which is one example of standard preferences that satisfy all the assumptions. \cref{fig:Ls-preferences} illustrates the preferences of a quadratic voter $v < 0$.

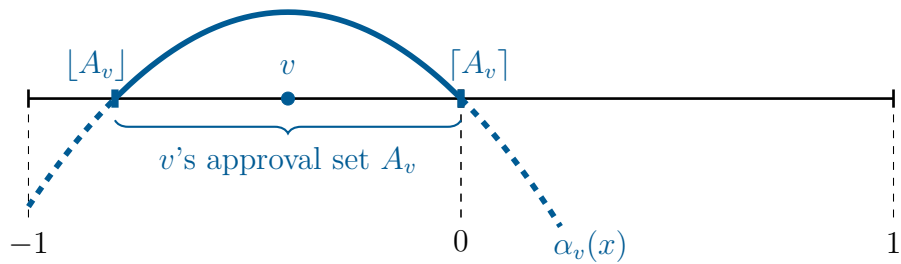
\begin{figure}[ht!]
    \centering
    \begin{tikzpicture}[scale = 1.15]

    \def\r{1.50};

    %% calculating coordinates
    \def\labelshift{5pt}

    \def\eps{0.50};

    \def\vL{-2};
    \def\aL{-4.00};  % = 2vL+eps
    \def\cL{-0.00};  % = -eps

    \def\X{0}; % -vL-\eps
    \def\Y{-1.40};

    %%%%% grid and axis %%%%%

    \draw[line width = 1.00pt] (5,0) -- (-5,0);
    \draw [dashed] (-5,\X) -- (-5,\Y-0.03) node[below] {$\textcolor{black}{-1}$};
    \draw[black, line width=1pt, shift={(-5,0)}] (0pt,3pt) -- (0pt,-3pt);
    \draw [dashed] ( 5,\X) -- ( 5,\Y-0.03) node[below] {$\textcolor{black}{ 1}$};
    \draw[black, line width=1pt, shift={(5,0)}] (0pt,3pt) -- (0pt,-3pt);

    \draw [dashed] ( 0,\X) -- ( 0,\Y-0.00) node[below] {$\textcolor{black}{0}$};
    \draw[black, line width=2.4pt, shift={(0,0)}] (0pt,3pt) -- (0pt,-3pt);

    %\draw [dashed] ( \vL,1.75) -- ( \vL,0);

    \node[above, yshift=\labelshift] at ( \vL,0) {$\textcolor{MyBlue}{v}$};
    \draw [color = MyBlue, fill = MyBlue] (\vL,0) circle (.075);

    \draw[MyBlue, line width = 0.75mm] plot[smooth,domain=\aL:\cL] (\x, { -1/(2*abs(\vL)) * ((\vL-\x)*(\vL-\x) - abs(\vL)*abs(\vL)) } );

    \draw[MyBlue, dashed, line width = 0.75mm]   plot[smooth,domain=-5:\aL] (\x, { -1/(2*abs(\vL)) * ((\vL-\x)*(\vL-\x) - abs(\vL)*abs(\vL)) } );

    \draw[MyBlue, dashed, line width = 0.75mm] plot[smooth,domain=\cL:1.15] (\x, { -1/(2*abs(\vL)) * ((\vL-\x)*(\vL-\x) - abs(\vL)*abs(\vL)) } );

    \node[below] at (1.50,\Y) {$\lblue{\alpha_{v}(x)}$};

    \draw [dashed] (\cL,\X) -- (\cL,\Y);
    %  \node[below] (eps) at ({\cL-0.575},\Y) {$\lblue{\ceil{A_L}=-\varepsilon}$};

    \draw[MyBlue, line width=2.75pt, shift={(\cL,0)}] (0pt,3pt) -- (0pt,-3pt) node[above] {};

    \node[above, xshift=-7pt,yshift=2.5pt] (eps) at (\aL,0) {$\lblue{\floor{A_v}}$};
    \node[above, xshift=7pt,yshift=2.5pt] (eps) at (0,0) {$\lblue{\ceil{A_v}}$};

    \draw[MyBlue, line width=2.75pt, shift={(\aL,0)}] (0pt,3pt) -- (0pt,-3pt) node[above] {};

    %%%%% approval brace %%%%%
    \draw [thick, MyBlue, decorate, decoration = {mirror,brace, amplitude = 5.0pt}] (\aL, -0.25) -- (\cL, -0.25) node[MyBlue, midway, below, yshift = -0.2cm] {$\lblue{v\text{'s approval set } A_v}$};

\end{tikzpicture}
    \caption{The policy outcome space $X = [-1,1]$, the status quo policy outcome $0$, a voter's bliss point $v < 0$, her net payoff from approval $\alpha_v(x) = -x^2 + 2vx$, and her approval set $A_v$. Under complete information, this voter prefers to approve policy outcomes left, but not too far left, of the status quo.}
    \label{fig:Ls-preferences}
\end{figure}

I focus on weak perfect Bayesian equilibria in which the voters' inference is consistent with disclosure on and off the path. In {\it equilibrium}, (i) the challenger sends messages that maximize his expected payoff; (ii) each voter approves whenever her expected net payoff from approval is non-negative under her posterior belief; (iii) each voter calculates her posterior using Bayes' rule on the equilibrium path; and (iv) a voter's posterior belief after an off-path message $m$ is an element of $\Delta m$. I restrict attention to equilibria in which all voters with bliss point $v \in X$ act the same. For ease of exposition, I also assume that status quo voters always reject.\footnote{A status quo voter's net payoff from approval is strictly negative unless $x=0$, so the only belief under which this voter weakly prefers to approve is $\delta_0$. Although there exist equilibria in which status quo voters approve if and only if $x = 0$, the prior measure of that event is zero since $\mu_0$ is atomless.} I denote the challenger's {\it interim} expected utility---i.e., the probability that he wins the election when his policy outcome is $x \in X$---by $U_I(x)$. I refer to the challenger's ex-ante utility as his {\it odds of winning}.

I say that a set of policy outcomes $W \subseteq X$ is {\it implementable} if there exists an equilibrium in which the challenger's interim utility is $U_I(x) = \mathbbm{1}(x \in W)$. In that equilibrium, the challenger wins if and only if $x \in W$ (i.e., $W$ is the set of ``winning'' policy outcomes) and his odds of winning are $\mu_0(W)$.

Most of my constructive results involve equilibria in which the challenger uses a simple pure strategy of revealing whether his policy outcome is or is not in some set.
Let $M:=(M_v)_{v \in V}$ be a collection of subsets of $X$.
A \emph{direct strategy} $\sigma_M$ is defined as $\sigma_{ M } (x) := ( M_v \text{ if } x \in M_v, \text{ otherwise } M_v^c )_{v \in V}$.
In words, when the challenger's policy outcome is $x$, he tells voter $v$ whether $x \in M_v$ (by sending message $M_v$) or not (by sending message $M_v^c$).
A {\it direct public strategy} $\sigma_M$ is one with $M_v = M$ for all $v \in V$.
When the challenger uses a direct strategy, voter $v$ hears one of two on-path messages, $m \in \{ M_v, M_v^c \}$, and thus learns whether $x \in m$ and nothing else; her posterior belief is then $\mu_0(\cdot \sep m)$.
I say that voter $v$ {\it is willing to approve} a set of policy outcomes $M_v$ (if the challenger uses a direct strategy $\sigma_M$ and $v$ receives message $M_v$) if it satisfies her obedience constraint:
\begin{equation}\label{obedience}\tag{obedience}
    \int_{M_v} \alpha_v(x) d\mu_0(x) \geq 0.
\end{equation}

\section{Analysis}\label{section:analysis}

To begin analysis, let us first classify elections in the following way. I say that coalition $D \subseteq V$ of voters is {\it left} ({\it right}), denoted $D < 0$ ($D > 0$), if it consists of left (right) voters only; I say that $D$ is a {\it mixed} coalition if it has both left and right voters.

\begin{definition}\label{dfn:election-classification}
    An election $(V,\mathcal{D})$ is a
    \begin{itemize}
        \item \ul{left- (right-) leaning election} if there exists a left (right) decisive coalition;
        \item \ul{status quo-leaning election} if every decisive coalition includes a status quo voter;
        \item \ul{polarized election} if there exists a mixed decisive coalition and there are no left or right decisive coalitions.
    \end{itemize}
    This classification is exhaustive: every election belongs to exactly one category.\footnote{By monotonicity and properness of $\mathcal{D}$, left and right decisive coalitions cannot coexist. If neither exists, then either (i) all decisive coalitions include a status quo voter (status quo-leaning election) or (ii) a mixed decisive coalition exists (polarized election). }
\end{definition}

Given a coalition of voters $D \subseteq V$, let $\lambda(D) := \max\limits_{v \in D,\, v < 0} v$ and $\rho(D) := \min\limits_{v \in D,\, v > 0} v$ be its \ul{left pivot} and \ul{right pivot}, if they exist, respectively. Let us define a representative voter as the voter whose approval set equals the union of the approval sets of all decisive coalitions.
\begin{definition}
    The \ul{representative voter} of election $(V,\mathcal{D})$ is the voter $v^* \in X$ such that $A_{v^*} = \bigcup\limits_{D \in \mathcal{D}} \bigcap\limits_{v \in D} A_{v}$.
\end{definition}

Voter $v^*$ essentially represents the preferences of the entire electorate because, under complete information, if she prefers to approve, then so does at least one decisive coalition, and if she prefers to reject, then so do all decisive coalitions. The lemma below determines the location of the representative voter in each election type, allowing us to state subsequent results for the public advertising game in terms of a single voter rather than sets of decisive coalitions.

\begin{lemma}\label{lemma:repr-voter}
    Consider an election $(V,\mathcal{D})$. Then,
    \[
        v^* =
        \begin{cases}
            \min\limits_{D \in \mathcal{D},\, D<0} \lambda(D) < 0, & \text{if it is a left-leaning election,}                    \\[1.2em]
            \max\limits_{D \in \mathcal{D},\, D>0} \rho(D) > 0,    & \text{if it is a right-leaning election,}                   \\[1.2em]
            0,                                                     & \text{if it is a status quo-leaning or polarized election.}
        \end{cases}
        ~\\[-\baselineskip]
    \]
\end{lemma}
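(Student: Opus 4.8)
The plan is to compute the aggregate approval set $\bigcup_{D\in\mathcal{D}}\bigcap_{v\in D}A_v$ explicitly and then read off the voter type whose approval set it equals. I would begin by recording the two structural inputs. From \cref{ass1:single-peaked} the approval sets are the intervals $A_v=[\floor{A_v},0]$ for left voters, $A_v=\{0\}$ for status quo voters, and $A_v=[0,\ceil{A_v}]$ for right voters; in particular any left and any right approval set meet only at the status quo, $[\floor{A_v},0]\cap[0,\ceil{A_w}]=\{0\}$. The second input is a nesting property: applying \eqref{SC2:more-extreme} to a degenerate belief $\mu=\delta_x$ with $x\neq 0$ gives, for left voters $w<v<0$, that $x\in A_v\Rightarrow x\in A_w$, hence $A_v\subseteq A_w$; symmetrically $A_v\subseteq A_w$ for right voters $0<v<w$. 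Thus among any set of left (right) voters the least extreme one has the smallest approval set.

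Next I would evaluate the coalitional intersection $\bigcap_{v\in D}A_v$ for each decisive $D$. If $D$ consists only of left voters, nesting makes the intersection equal to the approval set of its least extreme member, i.e.\ $\bigcap_{v\in D}A_v=A_{\lambda(D)}$; symmetrically $\bigcap_{v\in D}A_v=A_{\rho(D)}$ when $D$ consists only of right voters. If instead $D$ is neither purely left nor purely right, then $D$ either contains a status quo voter (contributing the factor $\{0\}$) or contains both a left and a right voter (whose approval sets meet only at $0$); in either case $\bigcap_{v\in D}A_v=\{0\}$, since $0$ lies in every approval set.

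I then take the union over $\mathcal{D}$ according to the classification in \cref{dfn:election-classification}. Monotonicity and properness of $\mathcal{D}$ rule out the coexistence of a purely left and a purely right decisive coalition (a purely right $D'$ would be contained in $V\smallsetminus D$ for a purely left $D$, so $V\smallsetminus D\in\mathcal{D}$, contradicting properness). In a left-leaning election the union therefore splits into the purely left terms $A_{\lambda(D)}$ and terms equal to $\{0\}$; since each $A_{\lambda(D)}$ contains $0$ the latter are absorbed, and the $A_{\lambda(D)}$ are nested intervals sharing the right endpoint $0$, so their (finite, as $V$ is finite) union is the largest one. By nesting this largest set is $A_{v^*}$ with $v^*=\min_{D\in\mathcal{D},\,D<0}\lambda(D)<0$, a genuine voter in $V$. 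The right-leaning case is the mirror image with $v^*=\max_{D\in\mathcal{D},\,D>0}\rho(D)>0$. In a status quo-leaning election every $D$ contains a status quo voter, and in a polarized election no $D$ is purely left or purely right, so in both cases every coalitional intersection is $\{0\}$ and the union is $\{0\}=A_0$, giving $v^*=0$.

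The case bookkeeping is routine; the two steps carrying the real content are the nesting claim, which reduces each coalitional intersection to its pivot's approval set, and the observation that a totally ordered family of nested intervals has its largest member as its union. The one point requiring care is the polarized case: there need be no voter with bliss point $0$ in $V$ (as in the motivating example $V=\{-0.2,0.4\}$), so the conclusion $v^*=0$ should be read as identifying the representative voter with the status quo voter type whose approval set is $\{0\}$, rather than as a literal element of $V$; the same reading makes the statement uniform across the status quo-leaning and polarized cases.
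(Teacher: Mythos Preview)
Your proof is correct and follows essentially the same approach as the paper: both arguments use \cref{ass1:single-peaked} to describe the shape of approval sets, invoke \eqref{SC2:more-extreme} to obtain the nesting $A_v\subseteq A_w$ for more extreme $w$, reduce each coalitional intersection to the pivot's approval set (or to $\{0\}$ for mixed/status quo coalitions), and then take the union. Your version is slightly more explicit in two places---deriving the nesting from \eqref{SC2:more-extreme} via degenerate beliefs $\delta_x$, and flagging that $0$ need not belong to $V$ in the polarized case---but the underlying strategy is identical.
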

\begin{proof}
    Recall that from \cref{ass1:single-peaked}, the approval set of voter $v \in V$ is $[ \floor{A_v},0 ]$ if $v < 0$, $A_v = \{ 0 \}$ if $v = 0$, and $A_v = [0, \ceil{A_v}]$ if $v > 0$. Therefore, for any decisive coalition $D \in \mathcal{D}$ that is mixed or includes a status quo, we have $\bigcap\limits_{v \in D} A_v = \{ 0 \}$. Consequently, for status quo-leaning and polarized elections, we have $v^* = 0$.

    Next, observe that from \eqref{SC2:more-extreme}, if voter $v$ is less extreme than voter $w$, then $A_v \subseteq A_w$. Then, for any left coalition $D<0$, we have $\bigcap\limits_{v \in D} A_v = \bigcap\limits_{v \in D} [ \floor{A_v},0] = [ \floor{A_{\lambda(D)}},0 ] = A_{ \lambda(D) }$, because $\lambda(D)$ is the least extreme voter in $D$. Therefore, for any left-leaning election $(V,\mathcal{D})$, we have
    \begin{align*}
        \bigcup\limits_{D \in \mathcal{D}} \bigcap\limits_{v \in D} A_v =
        \bigcup\limits_{\substack{D \in \mathcal{D} \\ D <0}} \bigcap\limits_{v \in D} A_v =
        \bigcup\limits_{\substack{D \in \mathcal{D} \\ D < 0}} [ \floor{A_{\lambda(D)}},0 ]
        = [ \floor{A_{v^*}},0 ]
        = A_{v^*},
    \end{align*}
    where $v^* = \min\limits_{D \in \mathcal{D},\, D<0} \lambda(D) < 0$. The proof is analogous for a right-leaning election.\qedhere
\end{proof}

\subsubsection*{Public Advertising}

In the public advertising game, the voters' common prior belief is updated to a common posterior. Therefore, the electorate faces a collective choice problem: whether to choose a safe option (the status quo) or a lottery over the challenger's policy outcomes (represented by their common posterior belief $\mu \in \Delta X$). The first result describes which elections are ``unwinnable'' for the challenger with public advertising.

\begin{theorem}\label{thm1:PA} In the public advertising game,
    \begin{enumerate}
        \item The challenger's odds of winning are zero in every equilibrium if and only if the election is status quo-leaning or polarized, i.e., there is no left or right decisive coalition.
        \item A set of policy outcomes is implementable in a left/right-leaning election if it is obedient for the representative voter and includes her approval set.
    \end{enumerate}
\end{theorem}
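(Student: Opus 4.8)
The plan is to exploit the defining feature of public advertising: because all voters see the same message, they hold a common posterior $\mu$, and voter $v$ approves exactly when $\mathbb{E}_\mu[\alpha_v(x)] \ge 0$. The whole argument then reduces to understanding, for each posterior, which coalitions approve, and the two conditions in \cref{ass2:single-crossing} do almost all of the work.

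For the forward (``only if'') direction of Part 1, I would first record the structural fact that under \emph{any} posterior $\mu \neq \delta_0$ the approving set is one-sided: \eqref{SC1:incompatible} forbids a left and a right voter from approving simultaneously, and status quo voters reject by assumption, so the approving set lies entirely in the left voters or entirely in the right voters (or is empty). In a status-quo-leaning or polarized election there is, by definition, neither a left nor a right decisive coalition, so such a one-sided set is never decisive; hence the challenger can win only when the posterior is exactly $\delta_0$. To turn this into a zero-probability statement I would pass to the distribution over posteriors $\tau$ induced by the equilibrium strategy and use Bayes-plausibility $\int \mu\, d\tau(\mu) = \mu_0$: evaluating both sides at $\{0\}$ gives $\tau(\{\delta_0\}) \le \mu_0(\{0\}) = 0$ since $\mu_0$ is atomless, and the odds of winning are at most $\tau(\{\delta_0\})$. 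This detour through the distribution of posteriors is what lets me avoid the measure-theoretic nuisance of a possibly uncountable family of winning messages. The reverse (``if'') direction is immediate: in a left- (right-) leaning election a fully revealing strategy makes every member of a left- (right-) decisive coalition $D^*$ with $\lambda(D^*)=v^*$ approve whenever $x \in A_{v^*}$, an interval of positive $\mu_0$-measure.

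For Part 2 I would fix a left-leaning election (the right case is symmetric), let $D^*$ be a left decisive coalition with $\lambda(D^*)=v^*$, and implement $W$ via the direct public strategy $\sigma_W$ revealing whether $x \in W$. After message $W$ the posterior is $\mu_0(\cdot\mid W)$; the hypothesis that $W$ satisfies \eqref{obedience} for $v^*$ says exactly $\mathbb{E}[\alpha_{v^*}]\ge 0$, so $v^*$ approves, and since every other member of $D^*$ is more extreme, \eqref{SC2:more-extreme} makes all of $D^*$ approve, so the challenger wins. After message $W^c$ the posterior is supported on $W^c \subseteq A_{v^*}^c$ (using $A_{v^*}\subseteq W$), where $\alpha_{v^*}<0$, so $v^*$ rejects. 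I then rule out every decisive coalition approving: \eqref{SC1:incompatible} kills every mixed coalition, there is no right decisive coalition in a left-leaning election, and for a left decisive coalition $D$ we have $\lambda(D)\ge v^*$ by the characterization of $v^*$, so $v^*$ is weakly more extreme than $\lambda(D)$ and \eqref{SC2:more-extreme} forces $\lambda(D)$ (hence $D$) to reject. Thus the challenger loses on $W^c$, yielding $U_I(x)=\mathbbm{1}(x\in W)$.

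The step I expect to be the real obstacle is verifying that $\sigma_W$ is an \emph{equilibrium}, i.e.\ ruling out profitable deviations by specifying admissible off-path beliefs. The idea is to assign each off-path message $m$ a skeptical belief in $\Delta m$ under which $v^*$ rejects; by the argument above, this makes the challenger lose. Such a belief exists precisely when $m \not\subseteq A_{v^*}$, in which case I can take $\delta_y$ for any $y\in m$ with $\alpha_{v^*}(y)<0$. The only messages with $m\subseteq A_{v^*}$ satisfy $m\subseteq A_{v^*}\subseteq W$ and so can be sent only by types $x\in W$, who are already winning and gain nothing; every losing type $x\in W^c \subseteq A_{v^*}^c$ can use only messages $m\ni x$ with $m\not\subseteq A_{v^*}$, which receive the skeptical belief. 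I would close by noting that the degenerate cases (when $W^c$ is off-path because $\mu_0(W^c)=0$, or when $W^c=\varnothing$) are handled by the same skeptical-belief assignment.
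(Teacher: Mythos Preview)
Your proposal is correct and follows essentially the same architecture as the paper's proof: the direct public strategy $\sigma_W$, obedience for $v^*$ plus \eqref{SC2:more-extreme} to make the whole left decisive coalition approve after $W$, rejection after $W^c$ because the posterior lives outside $A_{v^*}$, and skeptical off-path beliefs to kill deviations. Two small presentational differences are worth noting: (i) for the zero-odds direction you route through Bayes-plausibility of the posterior distribution to conclude $\tau(\{\delta_0\})=0$, whereas the paper simply says the event $x=0$ has prior measure zero; your version is the more careful of the two. (ii) To show the challenger loses after $W^c$, you invoke the contrapositive of \eqref{SC2:more-extreme} together with $v^*=\min_{D<0}\lambda(D)$, while the paper appeals directly to $A_{v^*}=\bigcup_{D}\bigcap_{v\in D}A_v$; these are equivalent unpackings of the same lemma. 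Your off-path assignment (skeptical on $A_{v^*}^c$) and the paper's (skeptical on $M^c$) are both admissible and lead to the same conclusion.
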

As the equilibrium constructions from the proofs of Theorems \ref{thm1:PA} and \ref{thm2:TA} are used later in \cref{section:baseline}, I provide the proofs of these theorems in the main text.
\begin{proof}

    I prove sufficiency of part 1 directly. If there is no left or right decisive coalition, then the election is either status quo-leaning or polarized. In both cases, the challenger convinces a decisive coalition only if the public belief is $\delta_0$, or whenever $x = 0$, which has zero prior measure. To see this, observe that for any belief other than $\delta_0$, status quo voters strictly prefer to reject; also, left and right voters never prefer to approve at the same time by \ref{SC1:incompatible}. Therefore, the challenger's odds of winning are zero in every equilibrium of the public advertising game unless there is a left or right decisive coalition.

    To prove the rest of the theorem, we consider (without loss) a left-leaning election with representative voter $L<0$ and construct an equilibrium that implements $M \subseteq X$, where $A_L \subseteq M$ and $\int_M \alpha_L(x) d\mu_0(x) \geq 0$. Note that such $M$ exists---for example, we can let $M = A_L = [ \floor{A_L},0 ]$.

    Suppose that the challenger uses the direct public strategy $\sigma_M$. When message $M$ is heard on the path, the public posterior becomes $\mu_0 (\cdot \sep M)$; the representative voter prefers to approve because $M$ satisfies her obedience constraint. Since $L$ is the least extreme voter in some left decisive coalition $D \in \mathcal{D}$, by \eqref{SC2:more-extreme} we have $\int_M \alpha_v(x) d\mu_0(x) \geq 0$ for all $v \in D$, which means that the {\it entire} decisive coalition prefers to approve after message $M$. 

    We now show that when the public posterior $\mu$ is supported on $M^c$ (for instance, after message $M^c$), the challenger loses the election because {\it every} decisive coalition includes a voter who strictly prefers to reject. First, voter $L$ strictly prefers to reject. Indeed, since $A_L \subseteq M$, we have $\alpha_L(x) < 0$ for all $x \in M^c$, and therefore $\mathbbm{E}_\mu[\alpha_L (x)] < 0$. Next, all left voters less extreme that $L$ also strictly prefer to reject: by the contrapositive of \eqref{SC2:more-extreme}, $\mathbbm{E}_\mu[\alpha_L (x)] < 0$ implies $\mathbbm{E}_\mu[\alpha_v (x)] < 0$ for all $v \in (L,0)$. In the left-leaning election under consideration, every decisive coalition is left, mixed, or includes a status quo voter. Every left decisive coalition includes a voter $v \in [L,0)$ by \cref{lemma:repr-voter}. Every other decisive coalition contains a voter who strictly prefers to reject by the argument in the first paragraph of this proof. Consequently, the challenger loses whenever the public belief is supported on $M^c$ (in particular, when message $M^c$ is heard on the path). Therefore, if the challenger uses strategy $\sigma_M$, he wins if and only if $x \in M$, his interim utility is $U_I(x) = \mathbbm{1}(x \in M)$, and his odds of winning are $\mu_0(M) > 0$.

    Finally, we specify the voters' off-path beliefs and show that the challenger does not have profitable deviations. When $x \in M$, the challenger receives the highest possible payoff and, therefore, has no profitable deviations. We thus do not restrict voters' beliefs for off-path messages $m \subset M$. Conversely, when the challenger's policy outcome is not in $M$, his on-path payoff is zero; a profitable deviation would require sending a message that results in a win. To deter such deviations, we require voters to hold skeptical beliefs. Specifically, for any off-path message $m \not\subset M$, we restrict the voters' posterior to be an element of $\Delta(m \cap M^c)$. This ensures that any message that the challenger can send when $x \notin M$ induces a public posterior supported on $M^c$, after which the challenger loses the election, as shown in the previous paragraph. This completes the equilibrium characterization and the proof.  
\end{proof}

\cref{thm1:PA} essentially states that an election is winnable with public advertising if and only if the representative voter is left or right. Furthermore, in winnable elections, the challenger effectively caters to the representative voter: a set of ``winning'' policy outcomes is implementable if it contains that voter's approval set and satisfies her obedience constraint. The former condition ensures that the representative voter (and thus some voter in every decisive coalition) strictly prefers to reject after {\it every} message available to the challenger when $x \notin M$. The latter condition guarantees that the representative voter (and thus every voter in some decisive coalition) approves after message $M$. In \cref{section:baseline}, we will characterize the largest implementable set of winning policy outcomes for a left-leaning election in which all voters' bliss points are the same.

Under simple majority, we arrive at a familiar characterization of elections that are unwinnable with public advertising.

\begin{corollary}\label{cor1:MVT}
    Under simple majority, the challenger's odds of winning are zero in every equilibrium of the public advertising game if and only if the status quo is the median voter's bliss point.
\end{corollary}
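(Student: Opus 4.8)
The plan is to read the corollary off \cref{thm1:PA}(1), which already says that the challenger's odds of winning vanish in every equilibrium precisely when the election is status quo-leaning or polarized, i.e., when there is neither a left nor a right decisive coalition. It therefore suffices to prove the purely combinatorial claim that, under simple majority, the electorate has no left and no right decisive coalition if and only if the status quo $0$ is the median voter's bliss point.

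First I would reduce the existence of a one-sided decisive coalition to a single comparison. Under simple majority a coalition $D \subseteq V$ is decisive exactly when it comprises a strict majority of the electorate, and by monotonicity of $\mathcal{D}$ a left decisive coalition exists if and only if the set of all left voters $\{ v \in V : v < 0 \}$ is itself decisive: any left coalition is a subset of it, and it is itself a left coalition, so the two statements are equivalent. The same holds on the right. Consequently, ``no left and no right decisive coalition'' is equivalent to the two statements asserting that the left voters do not form a strict majority and the right voters do not form a strict majority.

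Next I would convert these into the median condition. Saying the left voters are not a strict majority is the same as saying the voters with bliss point $\geq 0$ form at least half the electorate; symmetrically, the right voters failing to be a strict majority says the voters with bliss point $\leq 0$ form at least half the electorate. Taken together, these are exactly the defining inequalities for $0$ to be a median of the bliss-point distribution, i.e., for the status quo to be the median voter's bliss point. Chaining this equivalence with \cref{thm1:PA}(1) yields the corollary in both directions.

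The step I would treat most carefully is the interface between the strict notion of ``decisive'' (a strict majority, forced by properness of $\mathcal{D}$) and the weak, at-least-half notion built into ``median''; the argument aligns them because negating a strict-majority statement yields precisely a weak inequality. The only genuinely degenerate configuration is an exactly balanced electorate with half the mass strictly on each side of $0$ and no status quo voter, where the median is non-unique but $0$ still lies in the set of medians. I would handle this by adopting the usual convention that $0$ then counts as the median voter's bliss point (or by noting that the case is non-generic), so that the stated equivalence holds without exception.
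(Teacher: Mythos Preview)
Your proposal is correct and follows essentially the same route as the paper: both invoke \cref{thm1:PA}(1) and then translate ``no left and no right decisive coalition'' under simple majority into ``the status quo is the median voter's bliss point.'' The paper dispatches this in a single sentence (arguing only one direction explicitly), whereas you spell out both directions and handle the knife-edge balanced case; your version is a strictly more careful execution of the same idea.
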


The proof of the corollary is straightforward: if there is no left or right decisive coalition, then the status quo is the median voter's bliss point. Note that \cref{cor1:MVT} is a special case of median voter theorems for collective choice problems under uncertainty, which state that the median voter's bliss point defeats any lottery, degenerate or nondegenerate. The median voter theorem holds for single-peaked and strictly concave voter preferences (\citenop{Shepsle1972}), and those that have single-crossing expectational differences (\citenop{Kartik2023}).\footnote{Note that if $x=0$ is the median voter's bliss point, then it defeats any lottery (degenerate or non-degenerate) if $\alpha_v(x)$ satisfies \eqref{SC1:incompatible}---see the first paragraph of the proof of \cref{thm1:PA}. If $\alpha_v$ is single-peaked and strictly concave, then \eqref{SC1:incompatible} follows from Jensen's inequality; \eqref{SC1:incompatible} is also implied by single-crossing expectational differences.}

\subsubsection*{Targeted Advertising}

Some elections are unwinnable for the challenger with public advertising because the status quo beats any lottery over the challenger's policy outcomes. Targeted advertising allows the challenger to induce different beliefs among different voters and win some of these elections. The next result describes which elections are ``unwinnable'' for the challenger with targeted advertising.

\begin{theorem}\label{thm2:TA} In the targeted advertising game,
    \begin{enumerate}
        \item The challenger's odds of winning are zero in every equilibrium if and only if the election is status quo-leaning, i.e., every decisive coalition includes a status quo voter.
        \item An interval $[a,b]$ such that $a<0<b$ is implementable in a polarized election with a mixed decisive coalition $D \in \mathcal{D}$ if the set $[a,b] \cup A_v$ is obedient for each pivot $v \in \{ \lambda(D), \rho(D) \}$.
    \end{enumerate}
\end{theorem}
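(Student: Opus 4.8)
The plan is to establish part~2 first---its construction is the engine behind the whole theorem---and then deduce part~1 from it together with \cref{thm1:PA}. Throughout write $\ell:=\lambda(D)$, $r:=\rho(D)$, and $M_\ell:=[a,b]\cup A_\ell$, $M_r:=[a,b]\cup A_r$. Since $A_\ell=[\floor{A_\ell},0]$ and $A_r=[0,\ceil{A_r}]$, each of $M_\ell,M_r$ is an interval, and by distributivity $M_\ell\cap M_r=[a,b]\cup(A_\ell\cap A_r)=[a,b]\cup\{0\}=[a,b]$. I may assume $D$ contains no status quo voter: since a polarized election is not status quo-leaning, some decisive coalition omits status quo voters, and as it is neither left nor right it must be mixed, so such a $D$ exists.

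For part~2 I would \emph{not} use the plain two-message direct strategy $\sigma_M$, because off $[a,b]$ the pooled message $M_r^c$ lumps the left tail $[-1,a)$ together with the right tail $(\ceil{M_r},1]$, and sufficiently extreme right voters may be willing to approve $M_r^c$; together with the left voters convinced by $M_\ell$ this could assemble a mixed decisive coalition and destroy $U_I=\mathbbm{1}(x\in[a,b])$. Instead I would have the challenger reveal, to every left voter, which cell of $\{[-1,\floor{M_\ell}),\,M_\ell,\,(b,1]\}$ contains $x$, and to every right voter which cell of $\{[-1,a),\,M_r,\,(\ceil{M_r},1]\}$ contains $x$. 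The two ``outer'' cells that matter, $(b,1]$ for left voters and $[-1,a)$ for right voters, lie entirely on one side of $0$ (because $a<0<b$), so a left voter hearing $(b,1]$ and a right voter hearing $[-1,a)$ face a belief on which $\alpha_v<0$ pointwise and strictly reject. On the winning event $x\in[a,b]=M_\ell\cap M_r$, both pivots hear their middle cell $M_\ell$, resp.\ $M_r$; the obedience hypotheses make $\ell$ and $r$ approve, and \eqref{SC2:more-extreme} propagates approval to every more extreme member of $D$, so all of $D$ approves and the challenger wins.

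The crux is the losing direction, $x\notin[a,b]$. Take $x>b$ (the case $x<a$ is symmetric). Every left voter then hears $(b,1]$ and rejects, so no approving coalition can contain a left voter; only right voters (and never status quo voters) can approve, and since a polarized election has no right decisive coalition, the approving set is never decisive. I still have to rule out profitable deviations: when $x>b$ the challenger would need to convince some left voter, but every admissible message $m$ must contain the true $x>b>0$. The only on-path left message containing such an $x$ is $(b,1]$, after which she rejects; for off-path $m$ I would specify \emph{per-voter skeptical beliefs}---a left voter attributes a surprising $m$ to its most rightward realization, a right voter to its most leftward one---so that the left voter believes the realization is $\ceil{m}\ (\geq x>0)$ and again rejects. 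Hence left voters reject under \emph{every} message when $x>b$, no mixed coalition can form, and no deviation is profitable; on-path Bayesian beliefs are immediate since $\mu_0$ has full support. This gives $U_I(x)=\mathbbm{1}(x\in[a,b])$, i.e.\ $[a,b]$ is implementable.

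Part~1 then follows. The ``if'' direction is immediate: in a status quo-leaning election every decisive coalition contains a status quo voter, who by assumption always rejects, so no coalition ever approves on a positive-measure set and the odds are zero in every equilibrium. For ``only if'' I argue the contrapositive. If the election is not status quo-leaning it is left-leaning, right-leaning, or polarized. In the first two cases part~2 of \cref{thm1:PA} exhibits a positive-measure implementable set already under public advertising, which is a special case of targeted advertising. In the polarized case I invoke part~2 with a mixed $D$ and an interval $[a,b]$ shrunk toward $0$: as $b\downarrow 0$ and $a\uparrow 0$ one has $\int_{M_\ell}\alpha_\ell\,d\mu_0\to\int_{A_\ell}\alpha_\ell\,d\mu_0>0$ and likewise for $r$ (strict because $\alpha_\ell>0$ on the interior of $A_\ell$ and $\mu_0$ has full support), so both obedience conditions hold for small $[a,b]$, and part~2 makes $[a,b]$ implementable with odds $\mu_0([a,b])>0$. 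The main obstacle throughout is the losing direction of part~2, which is precisely why the three-cell disclosure (rather than the two-message direct strategy) and the per-voter skeptical off-path beliefs are needed.
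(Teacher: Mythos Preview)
Your proof is correct, and its overall architecture---deriving part~1 from part~2 together with \cref{thm1:PA}, and exhibiting a small interval $[a,b]$ via continuity of $\int_{A_v\cup[a,b]}\alpha_v\,d\mu_0$---matches the paper's. The difference lies in the equilibrium construction for part~2. The paper does \emph{not} send the pivot-based partition to all voters on one side; instead it uses the fully personalized direct strategy $\sigma_{(M_v)_{v\in V}}$ with $M_v=A_v\cup[a,b]$ for \emph{each} $v$. Because $M_v^c\subseteq A_v^c$, every voter $v$ rejects her own complement message outright, so the concern you raise about ``sufficiently extreme right voters approving $M_r^c$'' never arises. Your three-cell disclosure is a valid alternative that solves the same problem by separating the left and right tails, at the cost of a slightly more elaborate on-path description; the paper's personalization buys a cleaner two-message argument but requires tailoring to every bliss point rather than only to the two pivots. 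For off-path beliefs, the paper puts voter $v$'s posterior in $\Delta(m\cap A_v^c)$ whenever that set is nonempty, which is essentially your ``most extreme realization'' idea but avoids the technical wrinkle that $\sup m$ or $\inf m$ need not lie in $m$; you may want to phrase your skeptical beliefs the same way. Finally, your explicit observation that in a polarized election any \emph{approving} decisive coalition must contain a left voter (since it excludes status quo voters and cannot be purely right) is a bit more careful than the paper's shorthand ``every decisive coalition includes a right voter.''
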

\begin{proof}
    For sufficiency of part 1, recall that status quo voters always reject.
    I prove necessity of part 1 by contraposition.
    If not every decisive coalition includes a status quo voter, then two cases are possible: the election is left- (right-) leaning, or it is polarized.
    In the first case, the public advertising equilibria described in the proof of \cref{thm1:PA} are equilibria of the targeted advertising game.
    In the remainder of this proof, we focus on a polarized election with a mixed decisive coalition $D \in \mathcal{D}$; for brevity, we denote its pivots by $L := \lambda(D)$ and $R := \rho(D)$. We will show that an interval described in part 2 of \cref{thm2:TA} exists and construct an equilibrium that implements it.

    \paragraph{Preliminaries.}  Let $[a,b]$ be an interval such that $a < 0 < b$ and the set $A_v \cup [a,b]$ satisfies the obedience constraint of each pivot $v \in \{ L,R \}$. To see why such an interval exists, observe that voter $L$ ($R$) is willing to approve some right (left) policy outcomes, as long as her expected net payoff from approval is non-negative given her belief. Mathematically, there exist $\widetilde{a} < 0$ and $\widetilde{b} > 0$ such that $\int_{\floor{A_L}}^{\widetilde{b}} \alpha_L(x) d\mu_0(x) \geq 0$ and $\int_{\widetilde{a}}^{\ceil{A_R}} \alpha_R(x) d\mu_0(x) \geq 0$.\footnote{Such $\widetilde{b} > 0$ exists because the function $\phi_L(z):= \int_{ \floor{A_L} }^{z} \alpha_L(x)d\mu_0(x)$ is continuous, strictly decreasing for $z \geq 0$, and $\phi_L(0) > 0$. By a similar argument, there exists $\widetilde{a} < 0$ such that $\int_{\widetilde{a}}^{ \ceil{A_R} } \alpha_R(x)d\mu_0(x) \geq 0$.}

    \paragraph{On-path behavior.} Let the challenger use a direct strategy $\sigma_{(M_v)}$, where $M_v = A_v \cup [a,b]$ for all $v \in V$. Then, pivot $v \in \{ L,R \}$ approves after $M_v$, because it satisfies her obedience constraint by construction. Every left (right) voter $v \in [-1,L)$ ($v \in (R,1]$) more extreme than $L$ ($R$) also approves after $M_v$.\footnote{I formally prove this statement in \cref{lemma:SC2-appendix} in the appendix; it follows from \eqref{SC2:more-extreme}.} Voters with bliss points in $(L,R)$ approve or reject after $M_v$ depending on the sign of $\int_{M_v} \alpha_v(x) d\mu_0(x)$. Every voter $v \in V$ rejects after $M_v^c$ because $\alpha_v(x) < 0$ for all $x \in M_v^c$.

    \paragraph{Skeptical off-path beliefs.} For any off-path message $m \notin \{ M_v, M_v^c \}$, let voter $v$'s posterior be an element of $\Delta (m \cap A_v^c)$ whenever $m \cap A_v^c$ is non-empty. That way, when a voter $v\in V$ hears an off-path message that includes {\it any} policy outcome outside of her approval set, she rejects.

    \paragraph{Electoral outcome.} The challenger convinces {\it every} voter in the decisive coalition $D$ if $x \in \bigcap\limits_{v \in D} M_v = [a,b]$. In the polarized election under consideration, every decisive coalition is mixed or includes a status quo voter. Since left (right) voter never approve after a message including policy outcomes right of $b$ (left of $a$), and status quo voters always reject, the challenger loses if $x \notin [a,b]$. His interim utility is thus $U_I(x) = \mathbbm{1}(x \in [a,b])$, and his odds of winning are $\mu_0([a,b])>0$.

    \paragraph{No profitable deviations.} If $x \in [a,b]$, the challenger receives the highest possible payoff and, therefore, has no profitable deviations. If $x < a < 0$, then {\it any} verifiable message (i.e., a message that includes $x$) convinces {\it every} right voter to reject. Since every decisive coalition includes a right voter or a status quo voter, the challenger has no profitable deviations. A similar argument applies when $x > b > 0$. This completes the equilibrium characterization and the proof.
\end{proof}

\cref{thm2:TA} describes implementable sets of winning policy outcomes in the targeted advertising game for polarized elections. In the proof, I show how to implement these outcomes in an equilibrium where the challenger uses a direct strategy that effectively caters to the left and right pivots, $L$ and $R$, of some mixed decisive coalition. More precisely, an interval $[a,b]$ is implementable if the set $A_v \cup [a,b]$ is obedient for pivots $L$ and $R$---that is, the lower bound $a<0$ cannot be too small, and the upper bound $b>0$ cannot be too large. In \cref{section:baseline}, we will characterize the largest implementable interval of winning policy outcomes in a polarized election in which all left (right) voters share the same bliss point.

Theorems \ref{thm1:PA} and \ref{thm2:TA} describe how the challenger advertises his policy outcome depending on the composition of the electorate. If every decisive coalition includes a status quo voter, he loses with public and targeted advertising. If the election is left- (right-) leaning, he wins by advertising publicly and tailoring his strategy to the representative voter. Finally, if the election is polarized, then the challenger can win with targeted advertising but not public advertising. Under simple majority, targeted advertising allows the challenger to defeat the status quo policy that the median voter theorems deem unbeatable.

\section{Baseline Election}\label{section:baseline}

While Theorems \ref{thm1:PA} and \ref{thm2:TA} characterize which elections are winnable with public disclosure and targeted advertising, they do not make a unique prediction about how the challenger wins these elections. Specifically, the proof of each theorem involves providing an example of an equilibrium in which the challenger's odds of winning are positive. The reason for this is that the model admits multiple equilibria. There are two sources of multiplicity. First, there may be multiple decisive coalitions. Second, the verifiable disclosure game has a range of equilibrium outcomes even if there is only one receiver (\citenop{TitovaZhangPVI}). To move forward in the analysis, I consider a class of baseline elections in which the minimal decisive coalition is unique. I focus on the challenger-preferred equilibrium in order to provide the upper bound on the challenger's odds of winning across all equilibria.

\begin{definition}
    A \ul{baseline election} has electorate $\{ L,0,R \}$, where $-1 \leq L < 0 < R \leq 1$.
\end{definition}

In the baseline election, all left voters have the same bliss point, $L < 0$, and all right voters have the same bliss point, $R > 0$. This assumption limits the number of possible decisive coalitions and allows us to focus on the messages to be sent to left (right) voters, all of whom have the same bliss point.

The baseline election serves as a building block for a general (non-baseline) election $(V,\mathcal{D})$.
Specifically, \cref{thm1:PA} and its proof show how to implement any set of winning policy outcomes that is obedient for the representative voter $v^*$ in the public advertising game for a left- (right-) leaning election.
Consequently, the set of winning policy outcomes for a {\it left- (right-) leaning baseline election} (described in the subsequent \cref{prop:baseline-PA}) remains implementable in the more general election with representative voter $v^* \in \{ L,R \}$. When it comes to polarized elections, which become winnable with targeted advertising, \cref{thm2:TA} and its proof describe how to implement an interval of winning policy outcomes that is obedient for left and right pivots of some mixed decisive coalition $D \in \mathcal{D}$.
Consequently, the interval of winning policy outcomes that we find for a {\it polarized baseline election} (see \cref{prop:baseline-TA}) remains implementable in a more general election that has a mixed decisive coalition with pivots $L$ and $R$.\footnote{Note that a polarized general election may have multiple mixed decisive coalitions with distinct pairs of pivots. Such an election consequently would have multiple corresponding baseline elections (one for each distinct pair of pivots) and thus multiple implementable intervals of winning policy outcomes. While finding the largest implementable interval of winning policy outcomes for a polarized general election is beyond the scope of this paper, \cref{prop:comp_stats} shows that tailoring one's messages to more polarized decisive coalitions increases the challenger's odds of winning.}

Given our focus on the challenger-preferred equilibrium, it is useful to first find the highest probability of convincing one voter. Consider the following auxiliary problem with parameters $l$ and $r$ such that $-1 \leq l \leq \floor{A_v} < \ceil{A_v} \leq r \leq 1$:
\begin{equation}\label{eqn:aux_prob}\tag{AUX}
    \max_{I \subseteq [l,r]} \mu_0(I) \quad \text{subject to} \quad \int_{I} \alpha_v(x) d\mu_0(x) \geq 0.
\end{equation}
Roughly speaking, Problem \eqref{eqn:aux_prob} identifies the largest (in terms of prior measure) set of policy outcomes $I$ such that, if voter $v$ learns that $x \in I$ and nothing else (i.e., if the challenger uses a direct strategy $\sigma_I$), she prefers to approve. The objective function is the probability of convincing the voter, while the obedience constraint $\int_{I} \alpha_v(x) d\mu_0(x) \geq 0$ ensures that the voter prefers to approve given her information.
Parametrizing the problem with $l$ and $r$ allows us to focus on certain subsets of $X$.
For example, if we are interested in the largest set of right policy outcomes that a left voter $L < 0$ is willing to approve, then we let $l = \floor{A_L}$ and $r = 1$.

Problem \eqref{eqn:aux_prob} comes from the information design literature and provides a theoretical upper bound on the ex-ante probability of convincing a Bayesian voter (see, e.g., \citenop{Alonso2016} and \citenop{TitovaZhangPVI}).
The solution is an interval characterized by a cutoff value for the voter's net payoff from approval: voter $v$ approves every policy outcome $x \in [l,r]$ for which $\alpha_v(x) \geq c_v^*(l,r)$, that is, the net payoff is not too negative. The cutoff value $c_v^*(l,r) < 0$ is obtained from the binding obedience constraint. The set $\{ x \in [l,r] \sep \alpha_v(x) \geq c_v^*(l,r) \}$ is the upper contour set of the strictly quasiconcave function $\alpha_v(x)$ (by \cref{ass1:single-peaked}) and is therefore an interval. \cref{lemma:aux-solution} characterizes the solution of the auxiliary problem; the formal proof can be found in the appendix.

\begin{lemma}\label{lemma:aux-solution}
    For \( v \neq 0 \) and parameters \( -1 \leq l \leq \lfloor A_v \rfloor < \lceil A_v \rceil \leq r \leq 1 \), Problem \eqref{eqn:aux_prob} admits a solution \( I_v(l,r) \) that is a closed interval. Specifically,
    \begin{itemize}
        \item if \( \displaystyle\int_l^r \alpha_v(x) \, d\mu_0(x) \geq 0 \), then \( I_v(l,r) = [l, r] \),
        \item if \( \displaystyle\int_l^r \alpha_v(x) \, d\mu_0(x) < 0 \), then \( I_v(l,r) = \{ x \in [l, r] \sep \alpha_v(x) \geq c_v^*(l,r) \} \), where \( c_v^*(l,r) < 0 \) is the unique value satisfying \( \displaystyle\int_{I_v(l,r)} \alpha_v(x) \, d\mu_0(x) = 0\).
    \end{itemize}
    Furthermore, all other solutions coincide with \( I_v(l,r) \) \(\mu_0\)-almost everywhere.
\end{lemma}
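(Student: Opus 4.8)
The plan is to split on the sign of $\int_l^r \alpha_v\,d\mu_0$, exactly as the statement suggests. When $\int_l^r \alpha_v\,d\mu_0 \geq 0$, the full interval $[l,r]$ is itself feasible and attains the largest conceivable objective value $\mu_0([l,r])$ among all subsets of $[l,r]$, so $I_v(l,r)=[l,r]$ is immediate and is the unique maximizer up to $\mu_0$-null sets (full support makes $\mu_0([l,r])$ strictly exceed the measure of any proper closed subinterval). All the real work is in the case $\int_l^r \alpha_v\,d\mu_0 < 0$. There I would guess the optimal set from a ``bang-for-the-buck'' heuristic: to keep $\mu_0(I)$ large while holding $\int_I \alpha_v\,d\mu_0 \geq 0$, one should retain precisely the outcomes with the highest (least negative) net payoff, i.e. a superlevel set of $\alpha_v$. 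So I set, for a threshold $c$, $S_c := \{ x \in [l,r] \sep \alpha_v(x) \geq c \}$; by continuity and strict quasiconcavity of $\alpha_v$ (\cref{ass1:single-peaked}), each $S_c$ is a closed subinterval of $[l,r]$, and the family $\{S_c\}$ is nested, exhausting $[l,r]$ as $c \downarrow \min_{[l,r]}\alpha_v$.

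Next I would pin down the threshold via the auxiliary function $g(c) := \int_{S_c}\alpha_v\,d\mu_0$. Because $\mu_0$ is atomless and $\alpha_v$ continuous, $c \mapsto \mu_0(S_c)$ is continuous, hence so is $g$. The structural fact I would establish is that $g$ is single-peaked in $c$ with peak at $c=0$: lowering $c$ through positive values only appends outcomes with $\alpha_v>0$ (raising $g$), while lowering $c$ through negative values only appends outcomes with $\alpha_v<0$ (lowering $g$), and full support makes both moves strict. At $c=0$ we have $S_0=A_v$ (since $[l,r]\supseteq A_v$), so $g(0)=\int_{A_v}\alpha_v\,d\mu_0>0$, whereas $g\big(\min_{[l,r]}\alpha_v\big)=\int_l^r \alpha_v\,d\mu_0<0$ by the case assumption. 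The intermediate value theorem then produces a zero of $g$, and strict monotonicity of $g$ on $\{c\leq 0\}$ forces a unique root $c_v^*(l,r)<0$ in that region. I set $I_v(l,r):=S_{c_v^*(l,r)}$, a closed interval whose obedience constraint binds by construction.

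The core of the argument is optimality of $I^*:=I_v(l,r)$, which is a Neyman--Pearson (bathtub) comparison. For any feasible $I\subseteq[l,r]$, subtract the constraints: since $\int_{I^*}\alpha_v\,d\mu_0=0\leq\int_I\alpha_v\,d\mu_0$, we get $\int_{I^*\setminus I}\alpha_v\,d\mu_0-\int_{I\setminus I^*}\alpha_v\,d\mu_0\leq 0$. On $I^*\setminus I$ we have $\alpha_v\geq c_v^*$ and on $I\setminus I^*$ we have $\alpha_v< c_v^*$, so bounding each integral by $c_v^*$ times the relevant measure yields $c_v^*\big(\mu_0(I^*)-\mu_0(I)\big)\leq 0$; dividing by $c_v^*<0$ reverses the inequality to give $\mu_0(I^*)\geq\mu_0(I)$. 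Thus $I^*$ solves \eqref{eqn:aux_prob}. For the $\mu_0$-a.e. uniqueness I would rerun the chain for an arbitrary optimizer $I$, so that $\mu_0(I)=\mu_0(I^*)$ turns every inequality into an equality; since $\alpha_v>c_v^*$ on $I^*\setminus I$ and $\alpha_v<c_v^*$ on $I\setminus I^*$, while the level set $\{\alpha_v=c_v^*\}$ is $\mu_0$-null (at most two points, by strict quasiconcavity, and $\mu_0$ atomless), the equalities force $\mu_0(I^*\setminus I)=\mu_0(I\setminus I^*)=0$.

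I expect the main obstacle to be not the optimality comparison (routine once the threshold set is in hand) but the careful handling of $g$: its continuity and, above all, its \emph{strict} monotonicity on $\{c\leq 0\}$ drawn from the atomless/full-support hypotheses, since these are exactly what deliver both existence of a root $c_v^*(l,r)<0$ and its uniqueness there. A small point to verify en route is that $g(0)>0$ \emph{strictly}, which uses $\alpha_v(v)>0$ together with full support to make $\int_{A_v}\alpha_v\,d\mu_0>0$ rather than merely $\geq 0$; one should also note that the degenerate zero of $g$ at $c=\max_{[l,r]}\alpha_v$ corresponds to a $\mu_0$-null set and hence does not compete as a maximizer.
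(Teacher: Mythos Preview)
Your proposal is correct and follows essentially the same approach as the paper: define the superlevel sets $S_c=\{x\in[l,r]\mid\alpha_v(x)\ge c\}$, study the continuous map $c\mapsto\int_{S_c}\alpha_v\,d\mu_0$, use the intermediate value theorem together with strict monotonicity on $c\le 0$ to pin down a unique threshold $c_v^*(l,r)<0$, and then run the Neyman--Pearson comparison $c_v^*\bigl(\mu_0(I^*)-\mu_0(I)\bigr)\le 0$ to obtain optimality and $\mu_0$-a.e.\ uniqueness. The only cosmetic differences are notation and your slightly more detailed discussion of why $g(0)>0$ strictly and why the level set $\{\alpha_v=c_v^*\}$ is $\mu_0$-null.
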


\subsection*{Winning Elections with Public Advertising}

Here I find the challenger-preferred equilibrium for a left/right-leaning baseline elections, which are winnable with public advertising. Without loss of generality, I will focus on a left-leaning election, meaning that $\{ L \} \in \mathcal{D}$. In this election, targeted advertising is as good as public advertising, because the challenger wins if and only if voter $L$ approves. To maximize his odds of winning, the challenger finds the largest subset of $[-1,1]$ that voter $L$ is willing to approve: that is, he solves Problem \eqref{eqn:aux_prob} for voter $L$ with parameters $l=-1$ and $r=1$. He then publicly reveals whether his policy outcome is in that interval.

\begin{proposition}\label{prop:baseline-PA}
    Consider a left-leaning baseline election, $\{L\} \in \mathcal{D}$. Then the challenger's highest odds of winning across all equilibria of the public advertising and targeted advertising games are $\mu_0 ( I_L(-1,1) )$. He achieves these odds by publicly revealing to all voters whether his policy outcome is in $I_L(-1,1)$.
\end{proposition}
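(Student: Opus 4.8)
The plan is to collapse the whole election onto the single voter $L$ and then bound her probability of approval by the information-design benchmark that defines Problem \eqref{eqn:aux_prob}.

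\emph{Reduction to voter $L$.} First I would show that in a left-leaning baseline election the challenger wins exactly when $L$ approves. Since $\{L\}\in\mathcal{D}$, monotonicity places every superset of $\{L\}$ in $\mathcal{D}$, while properness forces $V\smallsetminus\{L\}=\{0,R\}\notin\mathcal{D}$ and hence, again by monotonicity, $\{0\},\{R\}\notin\mathcal{D}$. Thus every decisive coalition contains $L$, so some decisive coalition approves unanimously precisely when $L$ does; the challenger's odds of winning in any equilibrium of either game equal $\mu_0(W)$, where $W:=\{x\in X: L\text{ approves}\}$. The same observation---that $\{L\}$ is the unique left decisive coalition---identifies $L$ as the representative voter through the lemma characterizing $v^*$.

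\emph{Upper bound.} Next I would prove $\mu_0(W)\le\mu_0(I_L(-1,1))$ in every equilibrium of both games. Because only $L$'s vote determines the outcome, the messages sent to voters $0$ and $R$ are payoff-irrelevant, so targeting confers no advantage and it suffices to bound $L$'s approval probability. Let $p(x)\in[0,1]$ denote the equilibrium probability that $L$ approves at policy outcome $x$, so the odds of winning are $\int p\,d\mu_0$. Aggregating $L$'s obedience constraint across all on-path messages that induce her to approve yields $\int \alpha_L(x)\,p(x)\,d\mu_0(x)\ge0$. Hence $p$ is feasible for the fractional relaxation of Problem \eqref{eqn:aux_prob} with $l=-1$, $r=1$; since $\mu_0$ is atomless, a standard pooling argument (the fractional optimum is attained by a threshold set) shows this relaxation attains the same value as \eqref{eqn:aux_prob}, namely $\mu_0(I_L(-1,1))$ by \cref{lemma:aux-solution}. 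This is precisely the information-design upper bound on convincing a Bayesian receiver (cf. \citenop{Alonso2016}; \citenop{TitovaZhangPVI}), and it gives $\mu_0(W)\le\mu_0(I_L(-1,1))$.

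\emph{Achievability.} Finally I would invoke \cref{thm1:PA}(2). By \cref{lemma:aux-solution}, $I_L(-1,1)$ satisfies $L$'s obedience constraint and, being either all of $[-1,1]$ or of the form $\{x:\alpha_L(x)\ge c_L^*(-1,1)\}$ with $c_L^*(-1,1)<0$, contains $A_L=\{x:\alpha_L(x)\ge0\}$ in both cases. Since $L$ is the representative voter, \cref{thm1:PA}(2) shows that the direct public strategy $\sigma_{I_L(-1,1)}$ implements $I_L(-1,1)$ and delivers odds exactly $\mu_0(I_L(-1,1))$; this public strategy is equally available in the targeted game. Combined with the upper bound, $\mu_0(I_L(-1,1))$ is the challenger's highest attainable odds across all equilibria of both games, achieved by publicly disclosing whether $x\in I_L(-1,1)$. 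The main obstacle is the upper bound for arbitrary (possibly mixed) strategies and off-path messages: I must verify that the aggregate inequality $\int\alpha_L\,p\,d\mu_0\ge0$ genuinely holds for the equilibrium approval function $p$, and that passing to the fractional relaxation of \eqref{eqn:aux_prob} does not raise its value---both of which hinge on the atomlessness of $\mu_0$ and on the single-receiver persuasion bound.
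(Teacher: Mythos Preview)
Your proposal is correct and follows essentially the same three-step structure as the paper's proof: identify $L$ as the representative voter, invoke \cref{thm1:PA}(2) together with \cref{lemma:aux-solution} for achievability, and appeal to the information-design upper bound on convincing a single Bayesian receiver for optimality. The only real difference is that the paper dispatches the upper bound in one sentence by citation (``$\mu_0(I_L(-1,1))$ is the upper bound on the odds of convincing a Bayesian voter $L$''), whereas you spell out the aggregation argument $\int\alpha_L\,p\,d\mu_0\ge0$ and the passage through the fractional relaxation of \eqref{eqn:aux_prob}; this extra detail is sound and not needed given the paper's references, but it does no harm.
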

\begin{proof}
    In a left-leaning baseline election, voter $L$ is trivially the representative voter. By \cref{lemma:aux-solution}, $I_L(-1,1)$ includes her approval set and satisfies her obedience constraint. Using the equilibrium construction from the proof of \cref{thm1:PA}, we conclude that the set $I_L(-1,1)$ is implementable; in the equilibrium that implements it, the challenger uses a direct public strategy $\sigma_M$ with $M = I_L(-1,1)$, his interim utility is $U_I(x) = \mathbbm{1}(x \in I_L(-1,1))$, and his odds of winning are $\mu_0(I_L(-1,1)) > 0$. His odds of winning cannot be higher in any other equilibrium because $\mu_0(I_L(-1,1))$ is the upper bound on the odds of convincing a Bayesian voter $L$.
\end{proof}

I illustrate the equilibrium outcome in \cref{fig:baseline-public}.\footnote{\cref{fig:baseline-public} presents the numerical solution $I_L(-1,1) = [-0.82,0.22]$ for a quadratic voter $L = -0.3$ and a uniform prior.}
Note that the challenger's equilibrium strategy described in \cref{prop:baseline-PA} remains an equilibrium strategy of the public advertising game for a left-leaning non-baseline election with representative voter $L$. See the proof of \cref{thm1:PA} for a full description of this equilibrium.
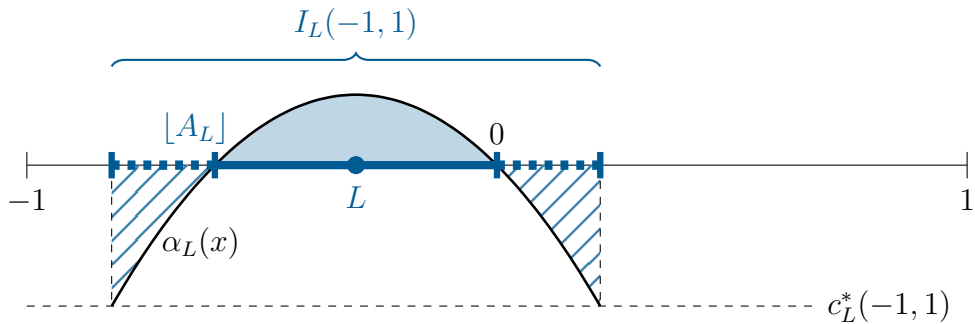
\begin{figure}[h!]
    \centering
    \begin{tikzpicture}[scale = 1.25]

  %% calculating coordinates
  \def\eps{0.00};

  \def\vL{-1.5};
  \def\aL{-3.00};  % = 2vL+eps
  \def\cL{-0.00};  % = -eps

  \def\aaL{-4.0981};
  \def\bbL{1.0981};

  \def\Y{-1.5000};

  %%%%%%%%%%%%%%%%%%%%%%%%%%%%%%%%%%%%%%%%%%%%%%%%%%%%%%%%%%%%%%%

  %% slide 1 -- policy space, L

  %% grid and axis
  \def\labelshift{4pt}

  \draw (-5,0) -- (5,0);
  \draw[shift={(-5,0)}] (0pt,4pt) -- (0pt,-4pt) node[below] {$\textcolor{black}{-1}$};
  \draw[shift={(5,0)}] (0pt,4pt) -- (0pt,-4pt) node[below] {$\textcolor{black}{1}$};

  %% solid area
  \fill [MyBlue!25, domain=\aL:\vL, variable=\x]
  (\vL, 0)
  -- plot (\x, { -1/(2*abs(\vL)) * ((\vL-\x)*(\vL-\x) - abs(\vL)*abs(\vL)) } )
  -- (\vL, 0)
  -- cycle;

  \fill [MyBlue!25, domain=\vL:\cL, variable=\x]
  (\vL, 0)
  -- plot (\x, { -1/(2*abs(\vL)) * ((\vL-\x)*(\vL-\x) - abs(\vL)*abs(\vL)) } )
  -- (\cL, 0)
  -- cycle;

  %% dashed area

  \fill [color=MyBlue, pattern=north east lines wide, pattern color=MyBlue!75, domain=\aL:\aaL, variable=\x]
  (\vL, 0)
  -- plot (\x, { -1/(2*abs(\vL)) * ((\vL-\x)*(\vL-\x) - abs(\vL)*abs(\vL)) } )
  -- (\aaL, 0)
  -- cycle;

  \draw [dashed] (\aaL,0) -- (\aaL,\Y);

  \fill [color=MyBlue, pattern=north east lines wide, pattern color=MyBlue!75, domain=\cL:\bbL, variable=\x]
  (\vL, 0)
  -- plot (\x, { -1/(2*abs(\vL)) * ((\vL-\x)*(\vL-\x) - abs(\vL)*abs(\vL)) } )
  -- (\bbL, 0)
  -- cycle;

  \draw [dashed] (\bbL,0) -- (\bbL,\Y);

  \draw[dashed] (-5,\Y) -- (3.40,\Y) node[right] {$\textcolor{black}{c_L^*(-1,1)}$};

  % net payoff of approval

  \draw[black, line width = 0.35mm] plot[smooth,domain=\aaL:\bbL] (\x, { -1/(2*abs(\vL)) * ((\vL-\x)*(\vL-\x) - abs(\vL)*abs(\vL)) } );

  \node[below] at (-3.15,-0.55) {$\textcolor{black}{\alpha_{L}(x)}$};

  %  voter + circle
  \node[below, yshift = -\labelshift] (v_L) at (\vL,0) {$\lblue{L}$};
  \draw [color = MyBlue, fill = MyBlue] (\vL,0) circle (.090);

  % [aL, cL]

  \draw[MyBlue, line width = 1.1mm] (\aL,0) -- (\cL,0);
  \draw[MyBlue, line width=2.4pt, shift={(\aL,0)}] (0pt,4pt) -- (0pt,-4pt) node[above] {};
  \node[above, xshift = -7.5pt, yshift = \labelshift] (a_L) at (\aL,0) {$\lblue{\floor{A_L}}$};
  \draw[MyBlue, line width=2.4pt, shift={(\cL,0)}] (0pt,4pt) -- (0pt,-4pt) node[above] {};
  \node[above, yshift = \labelshift] (c_L) at (\cL,0) {$\textcolor{black}{0}$};

  %% approved policies in dash

  % [cL, bL]
  \draw[MyBlue, line width = 1.25mm, dashed] (\cL,0) -- (\bbL,0);
  \draw[MyBlue, line width=2.4pt, shift={(\bbL,0)}] (0pt,4pt) -- (0pt,-4pt);
  \node[above, yshift = \labelshift] (b_L) at (\bbL,0) { };

  % [cL, bL]
  \draw[MyBlue, line width = 1.25mm, dashed] (\aaL,0) -- (\aL,0);
  \draw[MyBlue, line width=2.4pt, shift={(\aaL,0)}] (0pt,4pt) -- (0pt,-4pt);
  \node[above, yshift = \labelshift] (b_L) at (\aaL,0) { };

  %% L's approval brace

  \draw [thick, MyBlue, decorate, decoration = {brace, amplitude = 5.0pt}] (\aaL, 1.05) -- (\bbL, 1.05) node[MyBlue, midway, above, yshift = 0.2cm] {$\lblue{ I_L(-1,1) }$};

\end{tikzpicture}
    \caption{To maximize his odds of convincing the decisive coalition $\{L\}$, the challenger reveals whether his policy outcome is in $I_L(-1,1)$. Under a uniform prior, $c_L^*$ is obtained from equating the solid area to the dashed area so that voter $L$ is indifferent between approval and rejection when she learns that $x \in I_L(-1,1)$.}
    \label{fig:baseline-public}
\end{figure}

\subsection*{Swinging Elections with Targeted Advertising}

In the remainder of this section, I focus on a baseline election that is unwinnable with public advertising but winnable with targeted advertising. From Theorems \ref{thm1:PA} and \ref{thm2:TA}, that election is polarized, so the unique decisive coalition that does not include a status quo voter is $\{ L,R \}$. The challenger wins if and only if left and right voters approve. Let us consider the following problem:
\begin{equation}\label{problem:AUX-TA}\tag{AUX-TA}
    \begin{aligned}
         & \max_{(M_L, M_R) \subseteq X^2}\mu_0(M_L\cap M_R)
        \quad \text{subject to}
        \\
         & \quad \int_{M_v} \alpha_v(x) \, d\mu_0(x) \geq 0 \quad \text{for each } v \in \{L, R\}.
    \end{aligned}
\end{equation}
I will shortly show that this problem admits a solution $(M_L^*, M_R^*)$ such that $W^* \coloneqq M_L^* \cap M_R^*$ is an interval $[a, b]$ with $a < 0 < b$. Using the equilibrium construction from \cref{thm2:TA}, I will conclude that the set $W^*$ is implementable; in the equilibrium that implements it, the challenger uses a direct strategy $\sigma_{(M_L^*,M_R^*)}$, his interim utility is $U_I(x) = \mathbbm{1}(x \in W^*)$ and his odds of winning are $\mu_0(W^*)$.

Then, I will show that the pair $(M_L^*,M_R^*)$ also characterizes an {\it optimal experiment} that solves the information design problem, in which the challenger chooses and commits to an experiment prior to learning his policy outcome.\footnote{I formalize the challenger's information design problem in the appendix.}
Consequently, $\mu_0(W^*)$ is the upper bound on the odds of convincing Bayesian voters $L$ and $R$, and thus the highest odds of winning across {\it all} equilibria of the targeted advertising game.
To understand why the challenger reaches this upper bound in an equilibrium of a verifiable information game, observe that Problem \eqref{problem:AUX-TA} imposes only the voters' obedience constraints.
Relative to an optimal experiment, an equilibrium strategy must satisfy additional incentive-compatibility constraints for the challenger (i.e., he must not have profitable deviations from an equilibrium strategy for each realized policy outcome $x \in X$).
However, given that his objective is to convince $L$ and $R$ to approve and that his messages must be verifiable, $\sigma_{(M_L^*,M_R^*)}$ automatically satisfies these constraints; see the equilibrium construction in the proof of \cref{thm2:TA}.

Let us now show that Problem \eqref{problem:AUX-TA} admits a solution $(M_L^*, M_R^*)$ such that $M_L^* \cap M_R^*$ is a closed interval. For that, it is helpful to define the largest asymmetric interval of policy outcomes that each voter is willing to approve:
\begin{definition}\label{dfn:largest-interval-AS}
    The \emph{largest asymmetric interval of approved policy outcomes} $I_v$ for voter $v \in \{L, R\}$ is defined as follows:

    \medskip

    \begin{tabular}{@{}l}
        For $v = L < 0$: \quad $I_L = [\, \lfloor A_L \rfloor,\, b_L\, ] := I_L(\lfloor A_L \rfloor,\, 1),$                                       \\
        \quad\quad where $I_L(\lfloor A_L \rfloor,\, 1)$ solves Problem~\eqref{eqn:aux_prob} for $L$ with $l = \lfloor A_L \rfloor$ and $r = 1$. \\
        \\
        For $v = R > 0$: \quad $I_R = [\, a_R,\, \lceil A_R \rceil\, ] := I_R(-1,\, \lceil A_R \rceil),$                                          \\
        \quad\quad where $I_R(-1,\, \lceil A_R \rceil)$ solves Problem~\eqref{eqn:aux_prob} for $R$ with $l = -1$ and $r = \lceil A_R \rceil$.
    \end{tabular}

\end{definition}

Simply put, $I_L$ includes $L$'s approval set $[\floor{A_L}, 0]$ and as many right policy outcomes $(0, b_L]$ as her obedience constraint allows. Similarly, $I_R$ includes $R$'s approval set $[0, \ceil{A_R}]$ and the largest set $[a_R, 0)$ of left policy outcomes satisfying her obedience constraint. \cref{fig:AS} illustrates these intervals for quadratic voters $L = -0.2$ and $R = 0.25$, and a uniform prior.\footnote{Here, $I_L = [-0.4, 0.2]$ and $I_R = [-0.25, 0.5]$. Notably, when preferences are quadratic and $\mu_0$ is uniform, for $L \geq -0.5$ we have $I_L = [2L,-L]$, and for $R \leq 0.5$ we have $I_R = [-R,2R]$.}

\begin{figure}[H]
    \centering
    \begin{subfigure}[b]{0.495\textwidth}
        \centering
        \begin{tikzpicture}[scale = 1.25]

%% calculating coordinates
\def\eps{0.00};

\def\vL{-1};
\def\aL{-2.00};  % = 2vL+eps
\def\cL{-0.00};  % = -eps

\def\aaL{-2.00};
\def\bbL{1.0};

\def\Y{-1.5};

\draw[draw=none] (0,1) -- (0,-2);

%%%%%%%%%%%%%%%%%%%%%%%%%%%%%%%%%%%%%%%%%%%%%%%%%%%%%%%%%%%%%%%

%% slide 1 -- policy space, L

%% grid and axis
\def\labelshift{4pt}

\draw (-3,0) -- (2,0);
%\draw[shift={(-5,0)}] (0pt,4pt) -- (0pt,-4pt) node[below] {$\textcolor{black}{-1}$};
%\draw[shift={(5,0)}] (0pt,4pt) -- (0pt,-4pt) node[below] {$\textcolor{black}{1}$};

% [cL, bL]
\draw[MyBlue, line width = 1.25mm, dashed] (\cL,0) -- (\bbL,0);
\draw[MyBlue, line width=2.4pt, shift={(\bbL,0)}] (0pt,4pt) -- (0pt,-4pt);
\node[above, yshift = \labelshift] (b_L) at (\bbL,0) {$\lblue{b_L}$};

%% solid area
\fill [MyBlue!25, domain=\aL:\vL, variable=\x]
  (\vL, 0)
  -- plot (\x, { -1/(2*abs(\vL)) * ((\vL-\x)*(\vL-\x) - abs(\vL)*abs(\vL)) } )
  -- (\vL, 0)
  -- cycle;

\fill [MyBlue!25, domain=\vL:\cL, variable=\x]
  (\vL, 0)
  -- plot (\x, { -1/(2*abs(\vL)) * ((\vL-\x)*(\vL-\x) - abs(\vL)*abs(\vL)) } )
  -- (\cL, 0)
  -- cycle;

%% dashed area

\fill [color=MyBlue, pattern=north east lines wide, pattern color=MyBlue!75, domain=\cL:\bbL, variable=\x]
  (\vL, 0)
  -- plot (\x, { -1/(2*abs(\vL)) * ((\vL-\x)*(\vL-\x) - abs(\vL)*abs(\vL)) } )
  -- (\bbL, 0)
  -- cycle;

\draw [dashed] (\bbL,0) -- (\bbL,\Y);

% \draw[dashed] (-5,\Y) -- (3.40,\Y) node[right] {$\textcolor{black}{-c_L^*( \floor{A_L} ,1)}$};

%  voter + circle
\node[below, yshift = -\labelshift] (v_L) at (\vL,0) {$\lblue{L}$};
\draw [color = MyBlue, fill = MyBlue] (\vL,0) circle (.090);

% net payoff of approval

\draw[black, line width = 0.35mm] plot[smooth,domain=\aaL:\bbL] (\x, { -1/(2*abs(\vL)) * ((\vL-\x)*(\vL-\x) - abs(\vL)*abs(\vL)) } );

  \node[below] at (0.25,-0.75) {$\textcolor{black}{\alpha_{L}(x)}$};

% [aL, cL]

\draw[MyBlue, line width = 1.1mm] (\aL,0) -- (\cL,0);
\draw[MyBlue, line width=2.4pt, shift={(\aL,0)}] (0pt,4pt) -- (0pt,-4pt) node[above] {};
\node[above, xshift = -7.5pt, yshift = \labelshift] (a_L) at (\aL,0) {$\lblue{\floor{A_L}}$};
\draw[MyBlue, line width=2.4pt, shift={(\cL,0)}] (0pt,4pt) -- (0pt,-4pt) node[above] {};
\node[above, yshift = \labelshift] (c_L) at (\cL,0) {$\textcolor{black}{0}$};

%% L's approval brace

%\draw [thick, MyBlue, decorate, decoration = {brace, amplitude = 5.0pt}] (\aaL, 0.85) -- (\bbL, 0.85) node[MyBlue, midway, above, yshift = 0.2cm] {$\lblue{ I_L }$};

%\node at (0,-1,75) {};

\end{tikzpicture}
        \caption{$I_L = [\floor{A_L},b_L ]$.}
        \label{Fig:SubLeft}
    \end{subfigure}%
    ~
    \begin{subfigure}[b]{0.495\textwidth}
        \centering
        \begin{tikzpicture}[scale = 1.25]

%% calculating coordinates
\def\eps{0.00};

\def\vR{1.25};
\def\aRAS{-1.25};
\def\aR{0.00};
\def\bR{2.50};

\def\Y{-1.75};

\draw[draw=none] (0,1) -- (0,-2);

%%%%%%%%%%%%%%%%%%%%%%%%%%%%%%%%%%%%%%%%%%%%%%%%%%%%%%%%%%%%%%%

%% slide 1 -- policy space, L

%% grid and axis
\def\labelshift{4pt}

\draw (-2,0) -- (3,0);
%\draw[shift={(-5,0)}] (0pt,4pt) -- (0pt,-4pt) node[below] {$\textcolor{black}{-1}$};
%\draw[shift={(5,0)}] (0pt,4pt) -- (0pt,-4pt) node[below] {$\textcolor{black}{1}$};

%% solid area
\fill [MyRed!25, domain=\aR:\vR, variable=\x]
  (\vR, 0)
  -- plot (\x, { -1/(2*abs(\vR)) * ((\vR-\x)*(\vR-\x) - abs(\vR)*abs(\vR)) } )
  -- (\vR, 0)
  -- cycle;

\fill [MyRed!25, domain=\vR:\bR, variable=\x]
  (\vR, 0)
  -- plot (\x, { -1/(2*abs(\vR)) * ((\vR-\x)*(\vR-\x) - abs(\vR)*abs(\vR)) } )
  -- (\aR, 0)
  -- cycle;

%% dashed area

\fill [color=MyRed, pattern=north east lines wide, pattern color=MyRed!75, domain=0:\aRAS, variable=\x]
  (\vR, 0)
  -- plot (\x, { -1/(2*abs(\vR)) * ((\vR-\x)*(\vR-\x) - abs(\vR)*abs(\vR)) } )
  -- (\aRAS, 0)
  -- cycle;

\draw [dashed] (\aRAS,0) -- (\aRAS,\Y);

% \draw[dashed] (-5,\Y) -- (3.40,\Y) node[right] {$\textcolor{black}{-c_L^*( \floor{A_L} ,1)}$};

%  voter + circle
\node[below, yshift = -\labelshift] (v_L) at (\vR,0) {$\lred{R}$};
\draw [color = MyRed, fill = MyRed] (\vR,0) circle (.090);

% net payoff of approval

\draw[black, line width = 0.35mm] plot[smooth,domain=\aRAS:\bR] (\x, { -1/(2*abs(\vR)) * ((\vR-\x)*(\vR-\x) - abs(\vR)*abs(\vR)) } );

  \node[below] at (-0.25,-0.75) {$\textcolor{black}{\alpha_{R}(x)}$};

% [aRAS, 0]
\draw[MyRed, line width = 1.25mm, dashed] (\aRAS,0) -- (0,0);
\draw[MyRed, line width=2.4pt, shift={(\aRAS,0)}] (0pt,4pt) -- (0pt,-4pt);
\node[above, yshift = \labelshift] at (\aRAS,0) {$\lred{a_R}$};

% [aL, aR]

\draw[MyRed, line width = 1.1mm] (\aR,0) -- (\bR,0);
\draw[MyRed, line width=2.4pt, shift={(\aR,0)}] (0pt,4pt) -- (0pt,-4pt) node[above] {};
\node[above, xshift = 7.5pt, yshift = \labelshift] at (\bR,0) {$\lred{\ceil{A_R}}$};
\draw[MyRed, line width=2.4pt, shift={(\bR,0)}] (0pt,4pt) -- (0pt,-4pt) node[above] {};
\node[above, yshift = \labelshift] at (\aR,0) {$\textcolor{black}{0}$};

%% R's approval brace

%\draw [thick, MyRed, decorate, decoration = {brace, amplitude = 5.0pt}] (\aRAS, 0.85) -- (\bR, 0.85) node[MyBlue, midway, above, yshift = 0.2cm] {$\lred{ I_R }$};

\end{tikzpicture}
        \caption{$I_R = [a_R, \ceil{A_R} ]$.}
        \label{Fig:SubRight}
    \end{subfigure}
    \caption{Largest asymmetric intervals of approved policy outcomes of voters $L$ and $R$. Under uniform prior, $b_L$ and $a_R$ are obtained from equating the solid and dashed areas.}
    \label{fig:AS}
\end{figure}
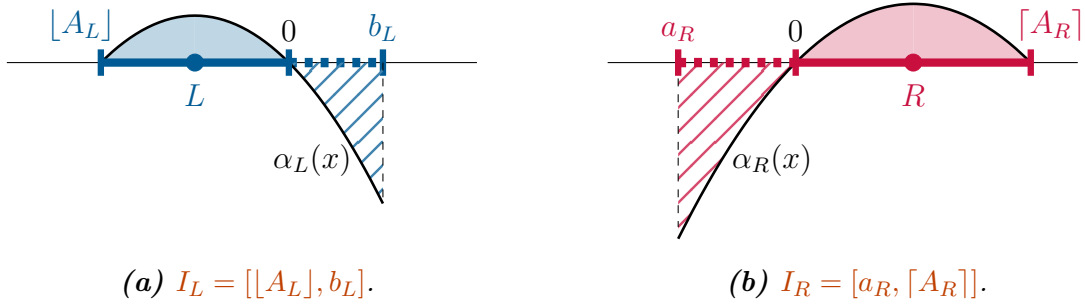

One might guess that sending each voter her largest asymmetric interval---specifically, setting $(M_L^*,M_R^*) = (I_L,I_R)$---maximizes the challenger's odds of winning. This strategy is indeed optimal if $I_L \cap I_R = [a_R, b_L]$, which occurs when $\floor{A_L} \leq a_R$ and $b_L \leq \ceil{A_R}$. It is straightforward to see that the challenger's odds of winning cannot exceed $\mu_0([a_R, b_L])$: every policy outcome outside $[a_R, b_L]$ lies further from at least one voter's bliss point, making it more ``costly'' in terms of that voter's obedience constraint. \cref{fig:baseline-TA-interior} illustrates this challenger-preferred equilibrium outcome for quadratic voters $L = -0.2$ and $R = 0.25$, and a uniform prior.

\begin{figure}[h!]
    \centering
    \begin{tikzpicture}[scale = 1.25]

%%%%% fixed variables: eps, L, rs, labelshift %%%%%

\def\eps{0.00};

\def\vL{-0.95};
\def\aL{-1.90};  % = 2vL+eps
\def\cL{-0.00};  % = -eps
\def\bL{0.95};  % = -\eps+\sqrt{2}*(-\vL-\eps)

\def\blevel{0};
\def\rs{0.25};
\def\labelshift{5pt};

%%%%% grid and axis %%%%%

\draw (-5,0) -- (5,0) node[right] {};
\draw[shift={(-5,0)}] (0pt,3pt) -- (0pt,-3pt) node[below] {$\textcolor{black}{-1}$};
\draw[shift={(5,0)}] (0pt,3pt) -- (0pt,-3pt) node[below] {$\textcolor{black}{1}$};
	\node[below, yshift = -\labelshift] at (0,\blevel-\rs) {$\textcolor{black}{0}$};
\draw[black, line width=2.4pt, shift={(0,0)}] (0pt,3pt) -- (0pt,-3pt) node[above] {};

%%%%% first PIC, low R %%%%%
\def\blevel{0};

\def\vR{1.35};
\def\bR{2.70};  % = 2vR-eps
\def\cR{0.00};  % = eps
\def\aR{-1.35};  % = -\eps+\sqrt{2}*(-\vR-\eps)

%%%%% passed policies [aR, bL] %%%%%

\draw [line width=2.4pt] (\aR, 1.25) -- (\bL, 1.25) node[MyBlue, midway, above, yshift = 0.05cm] {$\textcolor{black}{\text{policy outcomes approved by } \{L,R\} }$};

    \draw[line width=2.4pt, shift={(\aR,1.25)}] (0pt,3pt) -- (0pt,-3pt);
    \draw[line width=2.4pt, shift={(\bL,1.25)}] (0pt,3pt) -- (0pt,-3pt);

    \draw [loosely dotted] (\aR, 0) -- (\aR, 1.25);
    \draw [loosely dotted] (\bL, 0) -- (\bL, 1.25);

%%%%% voter L %%%%%

%  voter + circle
\node[above, yshift = \labelshift] (v_L) at (\vL,\blevel+\rs) {$\lblue{L}$};
\draw [color = MyBlue, fill = MyBlue] (\vL,\blevel+\rs) circle (.075);

% [aL, cL]
\draw[MyBlue, line width = 1.00mm] (\aL,\blevel+\rs) -- (\cL,\blevel+\rs);

\draw[MyBlue, line width=2.4pt, shift={(\aL,\blevel+\rs)}] (0pt,3pt) -- (0pt,-3pt) node[above] {};

\node[above, yshift = \labelshift] at (\aL,\blevel+0.20) {$\lblue{ \floor{A_L} }$};

\draw[MyBlue, line width=2.4pt, shift={(\cL,\blevel+\rs)}] (0pt,3pt) -- (0pt,-3pt) node[above] {};

%\node[above, yshift = \labelshift, xshift = -10pt] at (\cL,\blevel+\rs) {$\lblue{ \ceil{A_L} }$};

% [cL, bL]

\draw[MyBlue, line width = 1.00mm, dotted] (\cL,\blevel+\rs) -- (\bL,\blevel+\rs);

\draw[MyBlue, line width=2.4pt, shift={(\bL,\blevel+\rs)}] (0pt,3pt) -- (0pt,-3pt) node[above] {};
	
\node[above, yshift = \labelshift] at (\bL,\blevel+\rs) {$\lblue{b_L}$};

% L's approval brace
%\draw [thick, MyBlue, decorate, decoration = {brace, amplitude = 5.0pt}] (\aL, 1.0) -- (\bL, 1.0) node[MyBlue, midway, above, yshift = 0.2cm] {$\lblue{\text{ policy outcomes approved by }L}$};

%%%%% voter R %%%%%

%  voter + circle
\node[below, yshift = -\labelshift] (v_R) at (\vR,\blevel-\rs) {$\lred{R}$};
\draw [color = MyRed, fill = MyRed] (\vR,\blevel-\rs) circle (.075);

% [cR, bR]
\draw[MyRed, line width = 1.00mm] (\cR,\blevel-\rs) -- (\bR,\blevel-\rs);

\draw[MyRed, line width=2.4pt, shift={(\cR,\blevel-\rs)}] (0pt,3pt) -- (0pt,-3pt) node[below] {};

%\node[below, yshift = -\labelshift, xshift = 10pt] at (\cR,\blevel-\rs) {$\lred{ \floor{A_R} }$};

\draw[MyRed, line width=2.4pt, shift={(\bR,\blevel-\rs)}] (0pt,3pt) -- (0pt,-3pt) node[below] {};

\node[below, yshift = -\labelshift] at (\bR,\blevel-\rs) {$\lred{ \ceil{A_R} }$};

% [aR, cR]

\draw[MyRed, line width = 1.00mm, dotted] (\aR,\blevel-\rs) -- (\cR,\blevel-\rs);

\draw[MyRed, line width=2.4pt, shift={(\aR,\blevel-\rs)}] (0pt,3pt) -- (0pt,-3pt) node[below] {};

\node[below, yshift = -\labelshift] at (\aR,\blevel-\rs) {$\lred{a_R}$};

% R's approval brace
%\draw [thick, MyRed, decorate, decoration = {mirror,brace, amplitude = 5.0pt}] (\aR, -1.0) -- (\bR, -1.0) node[MyRed, midway, below, yshift = -0.2cm] {$\lred{\text{ policy outcomes approved by }R}$};

\end{tikzpicture}
    \caption{The electoral outcome when the challenger reveals to left voters whether his policy outcome is in $[\floor{A_L},b_L]$ and to right voters whether his policy outcome is in $[a_R,\ceil{A_R}]$. The decisive coalition $\{ L,R \}$ approves policy outcomes in $[a_R,b_L]$.}
    \label{fig:baseline-TA-interior}
\end{figure}
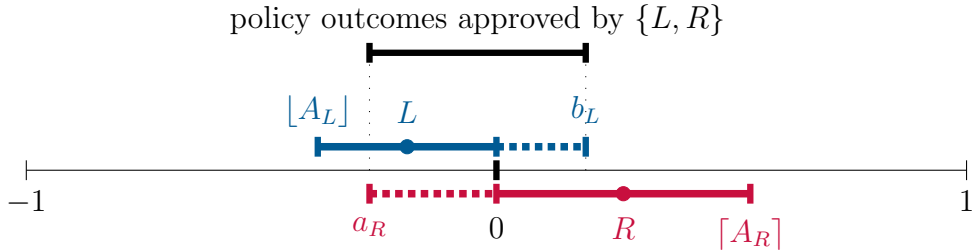

Next, consider the case when $a_R < \floor{A_L}$ and $b_L \leq \ceil{A_R}$. It is clear that sending each voter her largest asymmetric interval no longer maximizes the challenger's odds of winning. Indeed, $R$ is now willing to approve $L$'s entire approval set as well as the policy outcomes left of $\floor{A_L}$, which left voters prefer to policy outcomes close to $b_L$. Hence, $M_L^*$ should start at $a_R$ and span as far right as possible. Formally, in this case, $M_L^* = I_L( a_R, 1 )$ and $M_R^* = I_R$ so that the challenger wins whenever $x \in I_L( a_R, 1 )$.\footnote{Note that if $a_R \leq \floor{I_L(-1,1)}$, then $I_L(a_R,1) = I_L(-1,1)$, since $I_L(-1,1)$ is the largest set that voter $L$ is willing to approve. 
Otherwise, if $\floor{I_L(-1,1)} < a_R < \floor{A_L}$, then $\floor{I_L(a_R,1)} = a_R$. For details, see cases 2.1 and 2.2 (respectively) of the proof of \cref{prop:baseline-TA} in the appendix.} \cref{fig:baseline-TA-exterior} illustrates this outcome for quadratic voters $L = -0.2$, $R = 0.45$, and a uniform prior. The case when $\floor{A_L} \leq a_R$ and $\ceil{A_R} < b_L$ is symmetric.

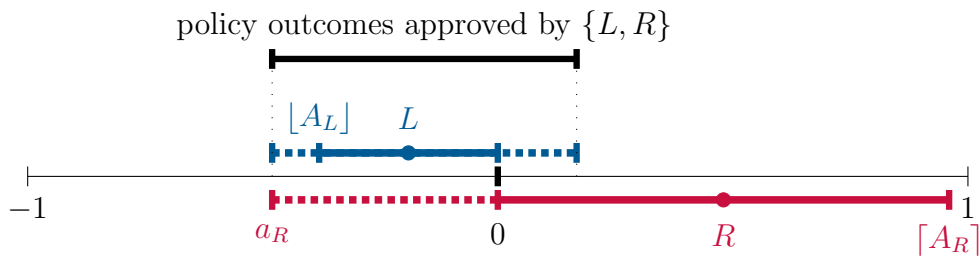
\begin{figure}[h!]
    \centering
    \begin{tikzpicture}[scale = 1.25]

%%%%% fixed variables: eps, L, rs, labelshift %%%%%

\def\eps{0.00};

\def\vL{-0.95};
\def\aL{-1.90};  % = 2vL+eps
\def\cL{-0.00};  % = -eps
\def\bL{0.95};  % = -\eps+\sqrt{2}*(-\vL-\eps)

\def\blevel{0};
\def\rs{0.25};
\def\labelshift{5pt};

%%%%% grid and axis %%%%%

\draw (-5,0) -- (5,0) node[right] {};
\draw[shift={(-5,0)}] (0pt,3pt) -- (0pt,-3pt) node[below] {$\textcolor{black}{-1}$};
\draw[shift={(5,0)}] (0pt,3pt) -- (0pt,-3pt) node[below] {$\textcolor{black}{1}$};
	\node[below, yshift = -\labelshift] at (0,\blevel-\rs) {$\textcolor{black}{0}$};
\draw[black, line width=2.4pt, shift={(0,0)}] (0pt,3pt) -- (0pt,-3pt) node[above] {};

%%%%% first PIC, low R %%%%%
\def\blevel{0};

\def\vR{2.40};
\def\bR{4.80};  % = 2vR-eps
\def\cR{0.00};  % = eps
\def\aR{-2.40};  % = -\eps+\sqrt{2}*(-\vR-\eps)

%%%%% voter L %%%%%

%  voter + circle
\node[above, yshift = \labelshift] (v_L) at (\vL,\blevel+\rs) {$\lblue{L}$};
\draw [color = MyBlue, fill = MyBlue] (\vL,\blevel+\rs) circle (.075);

% [aL, cL]
\draw[MyBlue, line width = 1.00mm] (\aL,\blevel+\rs) -- (\cL,\blevel+\rs);

\draw[MyBlue, line width=2.4pt, shift={(\aL,\blevel+\rs)}] (0pt,3pt) -- (0pt,-3pt) node[above] {};

\node[above, yshift = \labelshift] at (\aL,\blevel+0.20) {$\lblue{ \floor{A_L} }$};

\draw[MyBlue, line width=2.4pt, shift={(\cL,\blevel+\rs)}] (0pt,3pt) -- (0pt,-3pt) node[above] {};

%\node[above, yshift = \labelshift, xshift = -10pt] at (\cL,\blevel+\rs) {$\lblue{ \ceil{A_L} }$};

% [cL, bL]

%\draw[MyBlue, line width = 1.00mm, dotted] (\cL,\blevel+\rs) -- (\bL,\blevel+\rs);

%\draw[MyBlue, line width=2.4pt, shift={(\bL,\blevel+\rs)}] (0pt,3pt) -- (0pt,-3pt) node[above] {};
	
%\node[above, yshift = \labelshift] at (\bL,\blevel+\rs) {$\lblue{b_L}$};

% L's approval brace
%\draw [thick, MyBlue, decorate, decoration = {brace, amplitude = 5.0pt}] (\aL, 1.0) -- (\bL, 1.0) node[MyBlue, midway, above, yshift = 0.2cm] {$\lblue{\text{ policy outcomes approved by }L}$};

%%%%% voter R %%%%%

%  voter + circle
\node[below, yshift = -\labelshift] (v_R) at (\vR,\blevel-\rs) {$\lred{R}$};
\draw [color = MyRed, fill = MyRed] (\vR,\blevel-\rs) circle (.075);

% [cR, bR]
\draw[MyRed, line width = 1.00mm] (\cR,\blevel-\rs) -- (\bR,\blevel-\rs);

\draw[MyRed, line width=2.4pt, shift={(\cR,\blevel-\rs)}] (0pt,3pt) -- (0pt,-3pt) node[below] {};

%\node[below, yshift = -\labelshift, xshift = 10pt] at (\cR,\blevel-\rs) {$\lred{ \floor{A_R} }$};

\draw[MyRed, line width=2.4pt, shift={(\bR,\blevel-\rs)}] (0pt,3pt) -- (0pt,-3pt) node[below] {};

\node[below, yshift = -\labelshift] at (\bR,\blevel-\rs) {$\lred{ \ceil{A_R} }$};

% [aR, cR]

\draw[MyRed, line width = 1.00mm, dotted] (\aR,\blevel-\rs) -- (\cR,\blevel-\rs);

\draw[MyRed, line width=2.4pt, shift={(\aR,\blevel-\rs)}] (0pt,3pt) -- (0pt,-3pt) node[below] {};

\node[below, yshift = -\labelshift] at (\aR,\blevel-\rs) {$\lred{a_R}$};

% R's approval brace
%\draw [thick, MyRed, decorate, decoration = {mirror,brace, amplitude = 5.0pt}] (\aR, -1.0) -- (\bR, -1.0) node[MyRed, midway, below, yshift = -0.2cm] {$\lred{\text{ policy outcomes approved by }R}$};

% left part for L

\def\aLS{-2.40};
\def\bLS{0.8383};

\draw[MyBlue, line width = 1.00mm, dotted] (\aLS,\blevel+\rs) -- (\bLS,\blevel+\rs);

\draw[MyBlue, line width=2.4pt, shift={(\aLS,\blevel+\rs)}] (0pt,3pt) -- (0pt,-3pt) node[above] {};
\draw[MyBlue, line width=2.4pt, shift={(\bLS,\blevel+\rs)}] (0pt,3pt) -- (0pt,-3pt) node[above] {};

%%%%% passed policies [aR, bLS] %%%%%

\draw [line width=2.4pt] (\aR, 1.25) -- (\bLS, 1.25) node[MyBlue, midway, above, yshift = 0.05cm] {$\textcolor{black}{\text{policy outcomes approved by } \{L,R\} }$};

    \draw[line width=2.4pt, shift={(\aR,1.25)}] (0pt,3pt) -- (0pt,-3pt);
    \draw[line width=2.4pt, shift={(\bLS,1.25)}] (0pt,3pt) -- (0pt,-3pt);

    \draw [loosely dotted] (\aR, 0) -- (\aR, 1.25);
    \draw [loosely dotted] (\bLS, 0) -- (\bLS, 1.25);

\end{tikzpicture}
    \caption{ To maximize the odds of convincing the decisive coalition $\{ L,R \}$ when $a_R < \floor{A_L}$, the challenger convinces left voters to approve the largest subset of $[a_R,1]$.}
    \label{fig:baseline-TA-exterior}
\end{figure}

The final case, $a_R < \floor{A_L}$ and $\ceil{A_R} < b_L$, is impossible: these conditions would imply that voters $L$ and $R$ both prefer to approve under a common belief $\mu_0(\cdot \sep [ \floor{A_L}, \ceil{A_R} ])$, contradicting \eqref{SC1:incompatible}.\footnote{By \cref{dfn:largest-interval-AS}, we have $\int_{a_R}^{\ceil{A_R}}\alpha_R(x) d\mu_0(x) \geq 0$. Then, $a_R < \floor{A_L}$ implies that $\int_{\floor{A_L}}^{\ceil{A_R}}\alpha_R(x) d\mu_0(x) \geq 0$, i.e., $R$ approves under belief $\mu_0(\cdot \sep [\floor{A_L},\ceil{A_R}])$. Similarly, if $\ceil{A_R} < b_L$, then $L$ approves under the same belief.} The following proposition summarizes the targeted advertising strategy that maximizes the challenger's odds of winning a polarized baseline election.

\begin{proposition}\label{prop:baseline-TA}
    Consider a polarized baseline election, $\{L,R\} \in \mathcal{D}$ but $\{L\},\{R\} \notin \mathcal{D}$. Then, the challenger’s maximal odds of winning across all equilibria of the targeted advertising game are $\mu_0(M_L^* \cap M_R^*) > 0$:

    \begin{enumerate}[itemsep=0pt]
        \item If $\floor{A_L} \leq a_R$ and $b_L \leq \ceil{A_R}$, then
              \[
                  M_L^*            = [\floor{A_L}, b_L], \quad
                  M_R^*            = [a_R, \ceil{A_R}], \quad
                  M_L^* \cap M_R^* = [a_R, b_L].
              \]~\\[-2\baselineskip]
        \item If $a_R < \floor{A_L}$ and $b_L \leq \ceil{A_R}$, then
              \[
                  M_L^*            = I_L(a_R, 1), \quad
                  M_R^*            = [a_R, \ceil{A_R}], \quad
                  M_L^* \cap M_R^* = M_L^*.
              \]~\\[-2\baselineskip]
        \item If $\floor{A_L} \leq a_R$ and $\ceil{A_R} < b_L$, then
              \[
                  M_L^*            = [\floor{A_L}, b_L], \quad
                  M_R^*            = I_R(-1, b_L), \quad
                  M_L^* \cap M_R^* = M_R^*.
              \]~\\[-2\baselineskip]
    \end{enumerate}
    The challenger achieves these odds of winning by using the direct strategy $\sigma_{(M_L^*,M_R^*)}$.
\end{proposition}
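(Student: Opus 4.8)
The plan is to strip Problem~\eqref{problem:AUX-TA} down to an optimization over a single set, solve that by a rearrangement argument, and then transfer the solution back into an equilibrium via \cref{thm2:TA}.

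\textbf{Reduction to one set.} The objective of \eqref{problem:AUX-TA} depends on $(M_L,M_R)$ only through $W:=M_L\cap M_R$. For a fixed target $W$, the integral $\int_{M_L}\alpha_L\,d\mu_0$ is maximized over sets $M_L\supseteq W$ by adjoining exactly the points where $\alpha_L\ge 0$, i.e.\ by taking $M_L=W\cup A_L$; hence some $M_L\supseteq W$ meets $L$'s obedience constraint if and only if $\int_{W\cup A_L}\alpha_L\,d\mu_0\ge 0$, and symmetrically for $R$. I would therefore first argue that \eqref{problem:AUX-TA} is equivalent to maximizing $\mu_0(W)$ subject to $\int_{W\cup A_L}\alpha_L\,d\mu_0\ge 0$ and $\int_{W\cup A_R}\alpha_R\,d\mu_0\ge 0$, with the optimal messages recovered as $M_v^{*}=W^{*}\cup A_v$, so that $M_L^{*}\cap M_R^{*}$ equals $W^{*}$ up to the $\mu_0$-null point $0$.

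\textbf{The optimal $W$ is an interval.} Writing $W_-:=W\cap[-1,0)$ and $W_+:=W\cap(0,1]$, each constraint becomes a budget inequality: the mass in $W_+$ costs $L$ the amount $\int_{W_+}(-\alpha_L)\,d\mu_0$, the mass in $W_-$ to the left of $\floor{A_L}$ also taxes $L$, and the total is bounded by $B_L:=\int_{A_L}\alpha_L\,d\mu_0>0$; symmetrically for $R$ with threshold $\ceil{A_R}$ and budget $B_R$. Because $-\alpha_L$ and $-\alpha_R$ increase with distance from $0$ (\cref{ass1:single-peaked}), sliding a sliver of $W_+$ toward $0$ lowers both costs while preserving $\mu_0(W)$; a bathtub/rearrangement argument then produces an optimal $W$ with $W_+=(0,b]$ and $W_-=[a,0)$, i.e.\ $W=[a,b]$. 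Feasibility of tiny intervals about $0$ gives $a<0<b$ and $\mu_0(W^{*})>0$.

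\textbf{Optimizing the endpoints and exhaustiveness.} With $W=[a,b]$ the problem reduces to pushing $a$ down and $b$ up until a constraint binds; by definition $a_R$ and $b_L$ are exactly the endpoints at which $R$'s and $L$'s \emph{interior} budgets bind, provided the opposite endpoint stays inside the other voter's approval set. If both interior budgets bind first ($\floor{A_L}\le a_R$, $b_L\le\ceil{A_R}$), I obtain $W^{*}=[a_R,b_L]$ (Case~1). If instead $a_R<\floor{A_L}$, driving $a$ down to $a_R$ also taxes $L$, so the binding object is the largest $L$-obedient interval inside $[a_R,1]$, which by \cref{lemma:aux-solution} is $I_L(a_R,1)$; here I would verify $\ceil{I_L(a_R,1)}\le b_L\le\ceil{A_R}$, so that $M_L^{*}=I_L(a_R,1)\subseteq M_R^{*}$ and the intersection is $M_L^{*}$ (Case~2), with Case~3 symmetric. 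To show the trichotomy is exhaustive I must rule out the remaining pattern $a_R<\floor{A_L}$ and $\ceil{A_R}<b_L$: if $a_R<\floor{A_L}$, deleting the $R$-negative tail $[a_R,\floor{A_L})$ gives $\int_{[\floor{A_L},\ceil{A_R}]}\alpha_R\,d\mu_0>0$, whence \eqref{SC1:incompatible} yields $\int_{[\floor{A_L},\ceil{A_R}]}\alpha_L\,d\mu_0<0$, forcing $b_L<\ceil{A_R}$ and eliminating the fourth case.

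\textbf{From the relaxation to equilibrium.} In every case $W^{*}$ is an interval $[a,b]$ with $a<0<b$, and $M_v^{*}=W^{*}\cup A_v$ satisfies each pivot's obedience constraint by construction, so part~2 of \cref{thm2:TA} implements $W^{*}$ with the direct strategy $\sigma_{(M_L^{*},M_R^{*})}$, delivering odds $\mu_0(W^{*})>0$. For optimality across \emph{all} equilibria, I would observe that any equilibrium induces Bayes-plausible posteriors for $L$ and $R$, so the challenger's winning probability cannot exceed the value of the obedience-only relaxation \eqref{problem:AUX-TA} --- equivalently, the commitment (information-design) value --- which is precisely $\mu_0(W^{*})$; the formal information-design argument is relegated to the appendix. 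The main obstacle is the middle step: making the rearrangement-to-an-interval rigorous under the two coupled budget constraints, carrying out the endpoint case analysis, and ruling out the fourth sign pattern via \eqref{SC1:incompatible}. Cleanly invoking the information-design upper bound is the other delicate point.
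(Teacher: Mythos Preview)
Your approach is correct but genuinely different from the paper's in the optimization step. The paper never argues that the optimum is an interval and then solves for its endpoints; instead it \emph{guesses} the pair $(M_L^*,M_R^*)$ and verifies optimality by a direct comparison. Concretely, for an arbitrary feasible pair $(\widetilde M_L,\widetilde M_R)$ with $\widetilde W=\widetilde M_L\cap\widetilde M_R$, the paper partitions the symmetric difference $W^*\triangle\widetilde W$ into four pieces $Z_L,Z_R,Y_L,Y_R$ (left/right of $0$, in $W^*\setminus\widetilde W$ or $\widetilde W\setminus W^*$), uses the binding obedience constraints for $M_L^*,M_R^*$ against the slack ones for $\widetilde M_L,\widetilde M_R$, and exploits monotonicity of $\alpha_L$ on $[0,1]$ and $\alpha_R$ on $[-1,0]$ to conclude $\mu_0(Z_L)+\mu_0(Z_R)\ge\mu_0(Y_L)+\mu_0(Y_R)$. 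This bypasses the coupled rearrangement you flag as the main obstacle, at the cost of a longer case analysis (the paper splits Case~1 into four sub-cases depending on whether $\floor{A_L}=-1$ or $\ceil{A_R}=1$, and Case~2 into two sub-cases depending on whether $a_R\le\floor{I_L(-1,1)}$, since $I_L(a_R,1)$ need not have left endpoint $a_R$). Your reduction to a single set $W$ and the bathtub argument are cleaner conceptually, and your exhaustiveness argument for the three sign patterns via \eqref{SC1:incompatible} is a point the paper leaves implicit. Two places where your sketch would need the same care the paper takes: in Case~2 you should check $\ceil{I_L(a_R,1)}\le b_L$ via the monotonicity of $l\mapsto I_L(l,1)$ (the paper's \cref{lemma:Ls-constraint-appendix}), and for the upper bound you must argue that a \emph{deterministic} optimal experiment exists so that the information-design value coincides with the set-valued problem \eqref{problem:AUX-TA}; the paper handles this by citing an external result on atomless priors.
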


The formal proof of this result is in the appendix and involves three steps. In step 1, I confirm that the pair $(M_L^*, M_R^*)$ solves Problem \eqref{problem:AUX-TA}.
In step 2, I describe the equilibrium using the construction from the proof of \cref{thm2:TA}.
In step 3, I formulate the information design problem and show that the pair $(M_L^*,M_R^*)$ also characterizes an optimal experiment and the challenger's odds of winning are also $\mu_0(M_L^* \cap M_R^*)$.

\subsubsection*{Comparative Statics}

\cref{prop:baseline-TA} suggests that the boundaries of the challenger-preferred equilibrium set of winning policy outcomes depend on the voters' bliss points. I explore this relationship here. First, observe that more extreme voters are willing to approve wider ranges of policy outcomes.

\begin{lemma}\label{lemma:more-extreme-more-persuadable}
    If $w$ is a more extreme voter than $v$, then $I_w \supseteq I_v$. Furthermore,
    \begin{itemize}
        \item if $0<v<w$, then $\floor{I_w} = a_w \leq a_v = \floor{I_v}$, and the inequality is strict if $-1 < a_v$ and $ \ceil{A_v} < \ceil{A_w} $;
        \item if $w<v<0$, then $\ceil{I_v} = b_v \leq b_w = \ceil{I_w}$, and the inequality is strict if $b_v < 1$ and $\floor{A_w} < \floor{A_v}$.
    \end{itemize}
\end{lemma}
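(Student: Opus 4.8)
The two bullet points are mirror images of one another, so the plan is to carry out the right-voter case $0<v<w$ in full and obtain the left-voter case $w<v<0$ by the symmetric argument (swap each $\ceil{A_\cdot}$ endpoint for the corresponding $\floor{A_\cdot}$ endpoint, invoke the $w<v<0$ branch of \eqref{SC2:more-extreme}, and use that $\alpha_w>0$ on the relevant left region). By \cref{lemma:aux-solution} the sets $I_v=[a_v,\ceil{A_v}]$ and $I_w=[a_w,\ceil{A_w}]$ are closed intervals. Since $w$ is more extreme than $v$, \eqref{SC2:more-extreme} gives $A_v\subseteq A_w$, and because right voters' approval sets share the left endpoint $0$ this is exactly $\ceil{A_v}\le\ceil{A_w}$. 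Hence both the containment $I_v\subseteq I_w$ and the bullet's endpoint claim reduce to the single inequality $a_w\le a_v$ together with its strict version; this is where all the work lies.

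To prove $a_w\le a_v$ I would proceed in three steps. First, since $I_v$ solves \eqref{eqn:aux_prob} for $v$ its obedience constraint holds, so $\mathbbm{E}_{\mu_0(\cdot\mid I_v)}[\alpha_v]\ge 0$; the conditional $\mu_0(\cdot\mid I_v)$ is non-degenerate because $\mu_0$ is atomless, so \eqref{SC2:more-extreme} applies and yields $\int_{I_v}\alpha_w\,d\mu_0\ge 0$. Second, I extend $I_v$ rightward to $[a_v,\ceil{A_w}]$: on the added piece $[\ceil{A_v},\ceil{A_w}]\subseteq A_w$ we have $\alpha_w\ge 0$, so $\int_{a_v}^{\ceil{A_w}}\alpha_w\,d\mu_0\ge 0$ as well. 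Third, define $g_w(l):=\int_l^{\ceil{A_w}}\alpha_w\,d\mu_0$ on $[-1,0]$; because $\alpha_w<0$ on $[-1,0)$ for a right voter and $\mu_0$ has full support, $g_w$ is continuous and strictly increasing, and by \cref{lemma:aux-solution} the endpoint $a_w$ is either $-1$ (when the constraint is slack, $g_w(-1)\ge 0$) or the unique zero of $g_w$. In the slack case $a_w=-1\le a_v$; in the binding case $g_w(a_v)\ge 0=g_w(a_w)$ with strict monotonicity gives $a_v\ge a_w$. Either way $a_w\le a_v$, so $I_v\subseteq I_w$.

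For strictness, assume $-1<a_v$ and $\ceil{A_v}<\ceil{A_w}$. Then $(\ceil{A_v},\ceil{A_w})\subseteq(0,\ceil{A_w})$ is a nondegenerate interval on which $\alpha_w>0$, and since $\mu_0$ has full support and no atoms it carries positive measure, whence $\int_{\ceil{A_v}}^{\ceil{A_w}}\alpha_w\,d\mu_0>0$ and therefore $g_w(a_v)>0$. Combined with $g_w(a_w)=0$ (binding case) or $a_w=-1<a_v$ (slack case), strict monotonicity of $g_w$ forces $a_w<a_v$; the hypothesis $-1<a_v$ is precisely what excludes the degenerate possibility $a_w=a_v=-1$. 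This delivers $\ceil{I_v}=b_v\le b_w=\ceil{I_w}$ in the symmetric statement and $a_w\le a_v$ here, completing both bullets.

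I expect the only real obstacle to be the bookkeeping in the endpoint step: verifying that the posterior $\mu_0(\cdot\mid I_v)$ legitimately enters \eqref{SC2:more-extreme} (non-degeneracy via atomlessness, and $I_v$ having positive measure), and that the monotone comparison through $g_w$ correctly absorbs the boundary case $a_w=-1$ arising from the first branch of \cref{lemma:aux-solution}. The inner-endpoint inequality is immediate from $A_v\subseteq A_w$, and the reflection to the left-voter case is routine, so no further difficulty is anticipated.
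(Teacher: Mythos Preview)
Your proposal is correct and follows essentially the same approach as the paper's proof: apply \eqref{SC2:more-extreme} to transfer the obedience inequality from $v$ to $w$, enlarge the set by the extra portion $A_w\setminus A_v$ (where $\alpha_w\ge 0$), and then use monotonicity of the map $l\mapsto\int_l^{\ceil{A_w}}\alpha_w\,d\mu_0$ (resp.\ $z\mapsto\int_{\floor{A_w}}^{z}\alpha_w\,d\mu_0$) to compare endpoints. The only organizational difference is that the paper splits into three explicit cases upfront (depending on whether the constraints bind for $v$ and for $w$), whereas you fold the slack/binding distinction into the monotonicity step via the dichotomy ``$a_w=-1$'' versus ``$g_w(a_w)=0$''; the strictness argument is likewise identical in substance.
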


\cref{lemma:more-extreme-more-persuadable} follows from the assumption \eqref{SC2:more-extreme} that more extreme voters are more persuadable. \cref{fig:R-more-extreme} illustrates the intuition for \cref{lemma:more-extreme-more-persuadable} for right voters $v < w$. First, observe that a more extreme voter $w$ would be persuaded by the most biased message intended for voter $v$, that is, the interval $[a_v, \ceil{A_v}]$ is obedient for $w$. Furthermore, a more extreme voter $w$ has a larger approval set (i.e., $\ceil{A_v} \leq \ceil{A_w}$), so the interval $[a_v, \ceil{A_w}]$ is obedient for $w$ also.
Therefore, we can decrease the left boundary of that interval $a_v$ to $a_w \leq a_v$ so that the set $[a_w, \ceil{A_w}]$ remains obedient for $w$; the inequality is strict if $-1 < a_v$ and $ \ceil{A_v} < \ceil{A_w}$.

\begin{figure}[h!]
    \centering
    \begin{tikzpicture}[scale = 0.95]

  %% calculating coordinates

  \def\Ydown{-4.5937};

  \def\vR{1.35};
  \def\aRAS{-1.35};
  \def\aR{0.00};
  \def\bR{2.70};

  \def\rs{0.085};
  \def\labelshift{3pt};
  \def\blevel{0};

  %% net payoff from approval for R

  % solid (positive part)

  \draw[MyRed, line width = 0.50mm] plot[smooth,domain=\aR:\bR] (\x, { -0.50* ((\vR-\x)*(\vR-\x) - abs(\vR)*abs(\vR)) } );

  % dashed (negative part)

  \draw[MyRed, dashed, line width = 0.50mm]   plot[smooth,domain=-1.35:\aR] (\x, { -0.50* ((\vR-\x)*(\vR-\x) - abs(\vR)*abs(\vR)) } );

  \draw[MyRed, dashed, line width = 0.50mm] plot[smooth,domain=\bR:3.35] (\x, { -0.50* ((\vR-\x)*(\vR-\x) - abs(\vR)*abs(\vR)) } );

  \node[below] at (3.50,-1.10) {$\lred{\alpha_{v}(x)}$};

  %% grid and axis

  \draw (-5,0) -- (5,0);
  \draw[shift={(-5,0)}] (0pt,2.5pt) -- (0pt,-2.5pt) node[below] {$\textcolor{black}{-1}$};
  \draw[shift={(5,0)}] (0pt,2.5pt) -- (0pt,-2.5pt) node[below] {$\textcolor{black}{1}$};

  %  voter + circle
  \node[below, yshift = -\labelshift] (v_v) at (\vR,\blevel-\rs) {$\lred{v}$};
  \draw [color = MyRed, fill = MyRed] (\vR,\blevel-\rs) circle (.075);

  % [aR, bR]
  \draw[MyRed, line width = 1.00mm] (\aR,\blevel-\rs) -- (\bR,\blevel-\rs);
  %\draw[MyRed, line width=2.4pt, shift={(\aR,\blevel-\rs)}] (0pt,2.5pt) -- (0pt,-2.5pt) node[below] {};

  \draw[MyRed, line width=2.4pt, shift={(\bR,\blevel-\rs)}] (0pt,2.5pt) -- (0pt,-2.5pt) node[below] {};
  \node[below, yshift = -\labelshift, xshift = -6pt] at (\bR,\blevel-\rs) {$\lred{ \ceil{A_{v}} }$};

  % [aRAS, aR]

  \draw[MyRed, line width = 1.00mm, dotted] (\aRAS,\blevel-\rs) -- (\aR,\blevel-\rs);

  \draw[MyRed, line width=2.4pt, shift={(\aRAS,\blevel-\rs)}] (0pt,2.5pt) -- (0pt,-2.5pt) node[below] {};

  \node[below, yshift = -\labelshift] at (\aRAS,\blevel-\rs) {$\lred{a_{v}}$};

  %% here we go again

  \def\labelshift{-3pt}

  %% calculating coordinates

  \def\vR{1.75};
  \def\aRAS{-1.75};
  \def\aR{0.00};
  \def\bR{3.50};

  \def\rs{-0.085};
  \def\blevel{0};

  %% dashed lines

  %\draw[dashed] (\aRAS,0) -- (\aRAS,\Ydown);
  %\draw[dashed] (\vR,0) -- (\vR,1.75);

  %% net payoff from approval for w

  % solid (positive part)

  \draw[MyPurple, line width = 0.50mm] plot[smooth,domain=\aR:\bR] (\x, { -0.50* ((\vR-\x)*(\vR-\x) - abs(\vR)*abs(\vR)) } );

  % dashed (negative part)

  \draw[MyPurple, dashed, line width = 0.50mm]   plot[smooth,domain=-1.175:\aR] (\x, { -0.50* ((\vR-\x)*(\vR-\x) - abs(\vR)*abs(\vR)) } );

  \draw[MyPurple, dashed, line width = 0.50mm] plot[smooth,domain=\bR:4.5] (\x, { -0.50* ((\vR-\x)*(\vR-\x) - abs(\vR)*abs(\vR)) } );

  \node[below] at (4.5,-2.10) {$\lpurple{\alpha_{w}(x)}$};

  %  voter + circle
  \node[above, yshift = -\labelshift] (v_v) at (\vR,\blevel-\rs) {$\lpurple{w}$};
  \draw [color = MyPurple, fill = MyPurple] (\vR,\blevel-\rs) circle (.075);

  % [aR, bR]
  \draw[MyPurple, line width = 1.00mm] (\aR,\blevel-\rs) -- (\bR,\blevel-\rs);
  %\draw[MyPurple, line width=2.4pt, shift={(\aR,\blevel-\rs)}] (0pt,2.5pt) -- (0pt,-2.5pt) node[below] {};

  \draw[MyPurple, line width=2.4pt, shift={(\bR,\blevel-\rs)}] (0pt,2.5pt) -- (0pt,-2.5pt) node[below] {};
  \node[above, yshift = -\labelshift, xshift = 8pt] at (\bR,\blevel-\rs) {$\lpurple{ \ceil{A_{w}} }$};

  % [aRAS, aR]

  \draw[MyPurple, line width = 1.00mm, dotted] (\aRAS,\blevel-\rs) -- (\aR,\blevel-\rs);

  \draw[MyPurple, line width=2.4pt, shift={(\aRAS,\blevel-\rs)}] (0pt,2.5pt) -- (0pt,-2.5pt) node[below] {};

  \node[above, yshift = -\labelshift] at (\aRAS,\blevel-\rs) {$\lpurple{a_{w}}$};

  \draw[black, line width=2.4pt, shift={(0,0)}] (0pt,5pt) -- (0pt,-5pt) node[below] {$\textcolor{black}{0}$};

\end{tikzpicture}
    \caption{A more extreme voter $w$ has a larger asymmetric interval of approved policy outcomes than a less extreme voter $v$.}
    \label{fig:R-more-extreme}
\end{figure}
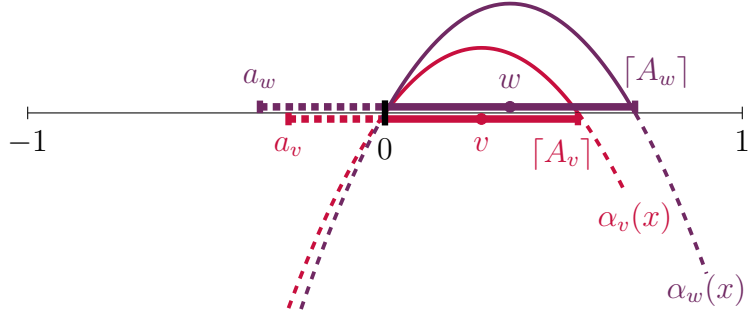

Next, let us explore how the challenger-preferred equilibrium outcome of a baseline election, described in \cref{prop:baseline-TA}, changes as right voters become more extreme. Note that making right voters more extreme makes the (already polarized) electorate more polarized (\citenop{EstebanRay1994}).

\begin{proposition}\label{prop:comp_stats}
    Consider the targeted advertising game with a polarized baseline election, $\{L,R\} \in \mathcal{D}$ but $\{L\},\{R\} \notin \mathcal{D}$. Let $(M_L^*,M_R^*)$ be the challenger-preferred equilibrium intervals of approved policy outcomes described in \cref{prop:baseline-TA}. Suppose that $b_L \leq \ceil{A_R}$. Then, as $R$ increases,
    \begin{itemize}
        \item the challenger's odds of winning $\mu_0(M_L^* \cap M_R^*)$ increase;
        \item the set of winning policy outcomes $M_L^* \cap M_R^*$ shifts to the left, that is, $\floor{ M_L^* \cap M_R^* }$ and $\ceil{ M_L^* \cap M_R^* }$ decrease.
    \end{itemize}
\end{proposition}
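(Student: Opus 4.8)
The plan is to treat $R$ as the sole moving parameter and trace its effect through the single channel that matters. The objects $L$, $\alpha_L$, $A_L$, $\floor{A_L}$, and $b_L = \ceil{I_L}$ depend only on the left voter and are therefore constant as $R$ varies. The assumption $b_L \leq \ceil{A_R}$ rules out case 3 of \cref{prop:baseline-TA}, so the challenger-preferred winning set is either $W^* = [a_R, b_L]$ (case 1, when $\floor{A_L} \leq a_R$) or $W^* = I_L(a_R, 1)$ (case 2, when $a_R < \floor{A_L}$). In both expressions the only $R$-dependent quantity is $a_R = \floor{I_R}$. The engine of the whole argument is then \cref{lemma:more-extreme-more-persuadable}: as $R$ increases the right voter becomes more extreme, so $a_R$ is non-increasing (and strictly decreasing whenever $-1 < a_R$ and $\ceil{A_R}$ strictly increases). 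Thus it suffices to show that $\mu_0(W^*)$, $\floor{W^*}$, and $\ceil{W^*}$ are all non-increasing in $a_R$.

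For the odds I would argue case by case. In case 1, $\mu_0(W^*) = \mu_0([a_R, b_L])$ with $b_L$ fixed, so lowering $a_R$ enlarges the interval and raises $\mu_0(W^*)$. In case 2, $\mu_0(W^*) = \mu_0(I_L(a_R, 1))$ is exactly the optimal value of Problem \eqref{eqn:aux_prob} for voter $L$ with $l = a_R$ and $r = 1$; since decreasing $l$ only enlarges the feasible set $\{I \subseteq [l,1] : \int_I \alpha_L \, d\mu_0 \geq 0\}$, the value is non-increasing in $l$, so again lowering $a_R$ raises the odds. The two regimes meet continuously at $a_R = \floor{A_L}$, where both formulas return $I_L$, so $\mu_0(W^*)$ is monotone across the whole range.

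For the location of $W^*$ I would track the two endpoints separately. The left endpoint is $\floor{W^*} = a_R$ in case 1 and $\floor{W^*} = \max\{a_R, \floor{I_L(-1,1)}\}$ in case 2 (the lower bound $l = a_R$ binds until $a_R$ drops below the unconstrained left endpoint $\floor{I_L(-1,1)}$); in either situation $\floor{W^*}$ is non-increasing as $a_R$ falls, and strictly decreasing while the bound binds. The right endpoint is $\ceil{W^*} = b_L$, a constant, in case 1; in case 2 it equals $\ceil{I_L(a_R, 1)}$, and here I would invoke the structure from \cref{lemma:aux-solution}: while the lower bound binds, $W^* = [a_R, u]$ with $u$ pinned down by the binding obedience constraint $\int_{a_R}^{u} \alpha_L \, d\mu_0 = 0$. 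Differentiating this identity (or arguing by monotone comparative statics) and using that $\alpha_L < 0$ at both $a_R$ (left of $\floor{A_L}$) and $u$ (right of $0$) shows that $u$ moves in the same direction as $a_R$, so $\ceil{W^*}$ is non-increasing. Combining the pieces, $W^*$ shifts weakly to the left, matching the motivating example in which increasing $R$ from $0.4$ to $0.5$ pushes the configuration from the boundary of case 1 into case 2 and strictly lowers both endpoints.

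The main obstacle is the right-endpoint monotonicity in case 2: unlike the left endpoint and the odds, which follow from transparent set-inclusion and feasibility arguments, it requires the comparative statics of the binding obedience constraint together with the sign information $\alpha_L(a_R) < 0$ and $\alpha_L(u) < 0$ furnished by single-peakedness (\cref{ass1:single-peaked}). A secondary care point is handling the regime changes cleanly---both the case-1/case-2 transition at $a_R = \floor{A_L}$ and, within case 2, the switch between the binding and non-binding lower bound at $a_R = \floor{I_L(-1,1)}$---so that the claimed monotonicity holds globally rather than only within each regime.
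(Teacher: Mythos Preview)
Your proposal is correct and follows essentially the same route as the paper: reduce to cases 1--2 via the hypothesis $b_L\le\ceil{A_R}$, use \cref{lemma:more-extreme-more-persuadable} to make $a_R$ the single moving parameter, handle the odds by a feasibility argument, and obtain the endpoint monotonicity from the comparative statics of $I_L(l,1)$ in $l$. The only stylistic difference is that the paper isolates this last step into a dedicated lemma (\cref{lemma:Ls-constraint-appendix}) proved by explicit sub-case analysis of \cref{lemma:aux-solution}, whereas you sketch it via the binding identity $\int_{a_R}^{u}\alpha_L\,d\mu_0=0$ and sign information on $\alpha_L$; both deliver the same conclusion, and you correctly flag the regime transitions (at $a_R=\floor{A_L}$ and at $a_R=\floor{I_L(-1,1)}$) as the places requiring care.
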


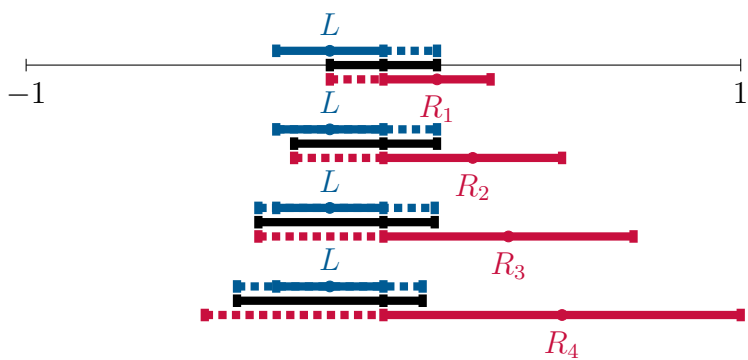
\begin{figure}[h!]
    \centering
    \begin{tikzpicture}[scale = 0.95]

%%%%% fixed variables: eps, L, rs, labelshift %%%%%

\def\steplevel{1.50};

\def\vL{-0.75};
\def\aL{-1.50};  % = 2vL+eps
\def\cL{-0.00};  % = -eps
\def\bL{0.75};  % = -\eps+\sqrt{2}*(-\vL-\eps)

\def\blevel{0};
\def\rs{0.20};
\def\labelshift{2pt};

%%%%% grid and axis %%%%%

\draw (-5,\blevel) -- (5,\blevel) node[right] {};
\draw[shift={(-5,\blevel)}] (0pt,2.5pt) -- (0pt,-2.5pt) node[below] {$\textcolor{black}{-1}$};
\draw[shift={(5,\blevel)}] (0pt,2.5pt) -- (0pt,-2.5pt) node[below] {$\textcolor{black}{1}$};
%	\node[below, yshift = -\labelshift] at (0,\blevel-\rs) {$\textcolor{black}{0}$};
\draw[black, line width = 1.10mm, shift={(0,\blevel)}] (0pt,2.5pt) -- (0pt,-2.5pt) node[above] {};

%%%%% first PIC, low R %%%%%
\def\blevel{0};

\def\vR{0.75};
\def\bR{1.50};  % = 2vR-eps
\def\cR{0.00};  % = eps
\def\aR{-0.75};  % = \eps-\sqrt{2}*(\vR-\eps)

%%%%% voter L %%%%%

%  voter + circle
\node[above, yshift = \labelshift] (v_L) at (\vL,\blevel+\rs) {$\lblue{L}$};
\draw [color = MyBlue, fill = MyBlue] (\vL,\blevel+\rs) circle (.075);

% [aL, cL]
\draw[MyBlue, line width = 1.10mm] (\aL,\blevel+\rs) -- (\cL,\blevel+\rs);
%	\node[above, yshift = \labelshift] at (\aL,\blevel+\rs) {$\lblue{a_L^{AS}}$};
\draw[MyBlue, line width = 1.10mm, shift={(\aL,\blevel+\rs)}] (0pt,2.5pt) -- (0pt,-2.5pt) node[above] {};
\draw[MyBlue, line width = 1.10mm, shift={(\cL,\blevel+\rs)}] (0pt,2.5pt) -- (0pt,-2.5pt) node[above] {};

% [cL, bL]
\draw[MyBlue, line width = 1.10mm, dotted] (\cL,\blevel+\rs) -- (\bL,\blevel+\rs);
\draw[MyBlue, line width = 1.10mm, shift={(\bL,\blevel+\rs)}] (0pt,2.5pt) -- (0pt,-2.5pt) node[above] {};
%	\node[above, yshift = \labelshift] at (\bL,\blevel+\rs) {$\lblue{b_L^{AS}}$};

%%%%% voter R %%%%%

%  voter + circle
\node[below, yshift = -\labelshift] (v_R) at (\vR,\blevel-\rs) {$\lred{R_1}$};
\draw [color = MyRed, fill = MyRed] (\vR,\blevel-\rs) circle (.075);

% [cR, bR]
\draw[MyRed, line width = 1.10mm] (\cR,\blevel-\rs) -- (\bR,\blevel-\rs);
%	\node[below, yshift = -\labelshift] at (\bR,\blevel-\rs) {$\lred{b_R}$};
\draw[MyRed, line width = 1.10mm, shift={(\cR,\blevel-\rs)}] (0pt,2.5pt) -- (0pt,-2.5pt) node[below] {};
\draw[MyRed, line width = 1.10mm, shift={(\bR,\blevel-\rs)}] (0pt,2.5pt) -- (0pt,-2.5pt) node[below] {};

% [aR, cR]
\draw[MyRed, line width = 1.10mm, dotted] (\aR,\blevel-\rs) -- (\cR,\blevel-\rs);
%	\node[below, yshift = -\labelshift] at (\aR,\blevel-\rs) {$\lred{a_R^{AS}}$};
\draw[MyRed, line width = 1.10mm, shift={(\aR,\blevel-\rs)}] (0pt,2.5pt) -- (0pt,-2.5pt) node[below] {};

%%%%% passed policies [aR, bL] %%%%%
\draw [line width = 1.10mm] (\aR, \blevel) -- (\bL, \blevel);

\draw[line width = 1.10mm, shift={(\aR,\blevel)}] (0pt,2.5pt) -- (0pt,-2.5pt);
\draw[line width = 1.10mm, shift={(\bL,\blevel)}] (0pt,2.5pt) -- (0pt,-2.5pt);

%%%%%%%%%%%%%%%%%%%%%%%%%%%%%%%%
%%%%% second PIC, medium R %%%%%
%%%%%%%%%%%%%%%%%%%%%%%%%%%%%%%%

\def\blevel{-1.10};

\def\vR{1.25};
\def\bR{2.50};  % = 2vR-eps
\def\cR{0.00};  % = eps
\def\aR{-1.25};  % = \eps-\sqrt{2}*(\vR-\eps)

\draw[black, line width = 1.10mm, shift={(0,\blevel)}] (0pt,2.5pt) -- (0pt,-2.5pt) node[above] {};

%%%%% voter L %%%%%

%  voter + circle
\node[above, yshift = \labelshift] (v_L) at (\vL,\blevel+\rs) {$\lblue{L}$};
\draw [color = MyBlue, fill = MyBlue] (\vL,\blevel+\rs) circle (.075);

% [aL, cL]
\draw[MyBlue, line width = 1.10mm] (\aL,\blevel+\rs) -- (\cL,\blevel+\rs);

\draw[MyBlue, line width = 1.10mm, shift={(\aL,\blevel+\rs)}] (0pt,2.5pt) -- (0pt,-2.5pt) node[above] {};
%    \node[above, yshift = \labelshift] at (\aL,\blevel+\rs) {$\lblue{a_L^{AS}}$};

\draw[MyBlue, line width = 1.10mm, shift={(\cL,\blevel+\rs)}] (0pt,2.5pt) -- (0pt,-2.5pt) node[above] {};

% dotted [aL, bL]

\draw[MyBlue, line width = 1.10mm, dotted] (\aL,\blevel+\rs) -- (\bL,\blevel+\rs);
\draw[MyBlue, line width = 1.10mm, shift={(\bL,\blevel+\rs)}] (0pt,2.5pt) -- (0pt,-2.5pt) node[above] {};
%	\node[above, yshift = \labelshift] at (\bL,\blevel+\rs) {$\lblue{b_L^{AS}}$};

%%%%% voter R %%%%%

%  voter + circle
\node[below, yshift = -\labelshift] (v_R) at (\vR,\blevel-\rs) {$\lred{R_2}$};
\draw [color = MyRed, fill = MyRed] (\vR,\blevel-\rs) circle (.075);

% [cR, bR]
\draw[MyRed, line width = 1.10mm] (\cR,\blevel-\rs) -- (\bR,\blevel-\rs);
%	\node[below, yshift = -\labelshift] at (\bR,\blevel-\rs) {$\lred{b_R'}$};
\draw[MyRed, line width = 1.10mm, shift={(\cR,\blevel-\rs)}] (0pt,2.5pt) -- (0pt,-2.5pt) node[below] {};
\draw[MyRed, line width = 1.10mm, shift={(\bR,\blevel-\rs)}] (0pt,2.5pt) -- (0pt,-2.5pt) node[below] {};

% [aR, cR]
\draw[MyRed, line width = 1.10mm, dotted] (\aR,\blevel-\rs) -- (\cR,\blevel-\rs);
%	\node[below, yshift = -\labelshift] at (\aR,\blevel-\rs) {$\lred{a_{R'}^{AS}}$};
\draw[MyRed, line width = 1.10mm, shift={(\aR,\blevel-\rs)}] (0pt,2.5pt) -- (0pt,-2.5pt) node[below] {};

%%%%% passed policies [aR, bL] %%%%%
\draw [line width = 1.10mm] (\aR, \blevel) -- (\bL, \blevel);

\draw[line width = 1.10mm, shift={(\aR,\blevel)}] (0pt,2.5pt) -- (0pt,-2.5pt);
\draw[line width = 1.10mm, shift={(\bL,\blevel)}] (0pt,2.5pt) -- (0pt,-2.5pt);

%%%%%%%%%%%%%%%%%%%%%%%%%%%%%%%%
%%%%% third PIC, medium R %%%%%
%%%%%%%%%%%%%%%%%%%%%%%%%%%%%%%%

\def\blevel{-2.20};

\def\vR{1.75};
\def\bR{3.50};  % = 2vR-eps
\def\cR{0.00};  % = eps
\def\aR{-1.75};  % = \eps-\sqrt{2}*(\vR-\eps)

\def\WLl{-1.75};
\def\WLr{0.7182};

%%%%% grid and axis %%%%%

\draw[black, line width = 1.10mm, shift={(0,\blevel)}] (0pt,2.5pt) -- (0pt,-2.5pt) node[above] {};

%%%%% voter L %%%%%

%  voter + circle
\node[above, yshift = \labelshift] (v_L) at (\vL,\blevel+\rs) {$\lblue{L}$};
\draw [color = MyBlue, fill = MyBlue] (\vL,\blevel+\rs) circle (.075);

% approval set [aL, cL]
\draw[MyBlue, line width = 1.10mm] (\aL,\blevel+\rs) -- (\cL,\blevel+\rs);
\draw[MyBlue, line width = 1.10mm, shift={(\aL,\blevel+\rs)}] (0pt,2.5pt) -- (0pt,-2.5pt) node[above] {};
\draw[MyBlue, line width = 1.10mm, shift={(\cL,\blevel+\rs)}] (0pt,2.5pt) -- (0pt,-2.5pt) node[above] {};

% outside of approval set

\draw[MyBlue, line width = 1.10mm, dotted] (\WLl,\blevel+\rs) -- (\WLr,\blevel+\rs);

\draw[MyBlue, line width = 1.10mm, shift={(\WLl,\blevel+\rs)}] (0pt,2.5pt) -- (0pt,-2.5pt) node[below] {};

\draw[MyBlue, line width = 1.10mm, shift={(\WLr,\blevel+\rs)}] (0pt,2.5pt) -- (0pt,-2.5pt) node[below] {};

%%%%% voter R %%%%%

%  voter + circle
\node[below, yshift = -\labelshift] (v_R) at (\vR,\blevel-\rs) {$\lred{R_3}$};
\draw [color = MyRed, fill = MyRed] (\vR,\blevel-\rs) circle (.075);

% [cR, bR]
\draw[MyRed, line width = 1.10mm] (\cR,\blevel-\rs) -- (\bR,\blevel-\rs);
%	\node[below, yshift = -\labelshift] at (\bR,\blevel-\rs) {$\lred{b_R'}$};
\draw[MyRed, line width = 1.10mm, shift={(\cR,\blevel-\rs)}] (0pt,2.5pt) -- (0pt,-2.5pt) node[below] {};
\draw[MyRed, line width = 1.10mm, shift={(\bR,\blevel-\rs)}] (0pt,2.5pt) -- (0pt,-2.5pt) node[below] {};

% [aR, cR]
\draw[MyRed, line width = 1.10mm, dotted] (\aR,\blevel-\rs) -- (\cR,\blevel-\rs);
%	\node[below, yshift = -\labelshift] at (\aR,\blevel-\rs) {$\lred{a_R''}$};
\draw[MyRed, line width = 1.10mm, shift={(\aR,\blevel-\rs)}] (0pt,2.5pt) -- (0pt,-2.5pt) node[below] {};

%%%%% passed policies [aR, bL] %%%%%
\draw [line width = 1.10mm] (\WLl, \blevel) -- (\WLr, \blevel);

\draw[line width = 1.10mm, shift={(\WLl,\blevel)}] (0pt,2.5pt) -- (0pt,-2.5pt);
\draw[line width = 1.10mm, shift={(\WLr,\blevel)}] (0pt,2.5pt) -- (0pt,-2.5pt);

%%%%%%%%%%%%%%%%%%%%%%%%%%%%%%%%
%%%%%% fourth PIC, high R %%%%%%%
%%%%%%%%%%%%%%%%%%%%%%%%%%%%%%%%

\def\blevel{-3.30};

\def\vR{2.5};
\def\bR{5.00};  % = 2vR-eps
\def\cR{0.00};  % = eps
\def\aR{-2.50};  % = -\eps+\sqrt{2}*(-\vR-\eps)

\def\WLl{-2.0490};
\def\WLr{0.5490};

%%%%% grid and axis %%%%%

\draw[black, line width = 1.10mm, shift={(0,\blevel)}] (0pt,2.5pt) -- (0pt,-2.5pt) node[above] {};

%%%%% voter L %%%%%

%  voter + circle
\node[above, yshift = \labelshift] (v_L) at (\vL,\blevel+\rs) {$\lblue{L}$};
\draw [color = MyBlue, fill = MyBlue] (\vL,\blevel+\rs) circle (.075);

% approval set [aL, cL]
\draw[MyBlue, line width = 1.10mm] (\aL,\blevel+\rs) -- (\cL,\blevel+\rs);
\draw[MyBlue, line width = 1.10mm, shift={(\aL,\blevel+\rs)}] (0pt,2.5pt) -- (0pt,-2.5pt) node[above] {};
\draw[MyBlue, line width = 1.10mm, shift={(\cL,\blevel+\rs)}] (0pt,2.5pt) -- (0pt,-2.5pt) node[above] {};

% outside of approval set

\draw[MyBlue, line width = 1.10mm, dotted] (\WLl,\blevel+\rs) -- (\WLr,\blevel+\rs);

\draw[MyBlue, line width = 1.10mm, shift={(\WLl,\blevel+\rs)}] (0pt,2.5pt) -- (0pt,-2.5pt) node[below] {};
%	\node[above, yshift = \labelshift] at (\WLl,\blevel+\rs) {$\lblue{a_L^{UC}}$};

\draw[MyBlue, line width = 1.10mm, shift={(\WLr,\blevel+\rs)}] (0pt,2.5pt) -- (0pt,-2.5pt) node[below] {};
%	\node[above, yshift = \labelshift] at (\WLr,\blevel+\rs) {$\lblue{b_L^{UC}}$};

%%%%% voter R %%%%%

%  voter + circle
\node[below, yshift = -\labelshift] (v_R) at (\vR,\blevel-\rs) {$\lred{R_4}$};
\draw [color = MyRed, fill = MyRed] (\vR,\blevel-\rs) circle (.075);

% [cR, bR]
\draw[MyRed, line width = 1.10mm] (\cR,\blevel-\rs) -- (\bR,\blevel-\rs);
%	\node[below, yshift = -\labelshift] at (\bR,\blevel-\rs) {$\lred{b_R'}$};
\draw[MyRed, line width = 1.10mm, shift={(\cR,\blevel-\rs)}] (0pt,2.5pt) -- (0pt,-2.5pt) node[below] {};
\draw[MyRed, line width = 1.10mm, shift={(\bR,\blevel-\rs)}] (0pt,2.5pt) -- (0pt,-2.5pt) node[below] {};

% [aR, cR]
\draw[MyRed, line width = 1.10mm, dotted] (\aR,\blevel-\rs) -- (\cR,\blevel-\rs);
%	\node[below, yshift = -\labelshift] at (\aR,\blevel-\rs) {$\lred{a^{AS}_{R'''}}$};
\draw[MyRed, line width = 1.10mm, shift={(\aR,\blevel-\rs)}] (0pt,2.5pt) -- (0pt,-2.5pt) node[below] {};

%%%%% passed policies [aR, bL] %%%%%
\draw [line width = 1.10mm] (\WLl, \blevel) -- (\WLr, \blevel);

\draw[line width = 1.10mm, shift={(\WLl,\blevel)}] (0pt,2.5pt) -- (0pt,-2.5pt);
\draw[line width = 1.10mm, shift={(\WLr,\blevel)}] (0pt,2.5pt) -- (0pt,-2.5pt);

\end{tikzpicture}
    \caption{The challenger-preferred equilibrium outcome as right voters become more extreme (top to bottom). Right voters approve ranges of policy outcomes (in red) that span further left, and the set of winning policy outcomes (in black) shifts left.}
    \label{fig:comp_statics}
\end{figure}

\cref{fig:comp_statics} illustrates the equilibrium outcomes of four baseline elections, holding the left voters' bliss point $L$ fixed and increasing the right voters' bliss point from $R_1$ to $R_4$ (top to bottom).\footnote{\cref{fig:comp_statics} presents the numerical solution for quadratic preferences and uniform prior, with bliss point $L = -0.15$ for left voters and successive bliss points $R_1 = 0.15$, $R_2 = 0.25$, $R_3 = 0.35$, and $R_4 = 0.50$ (top to bottom) for the right voters. The sets of winning policy outcomes (in black) are $[-0.15, 0.15]$, $[-0.25, 0.15]$, $[-0.35, 0.1436]$, and $[-0.4098, 0.1098]$, respectively.} From \cref{lemma:more-extreme-more-persuadable}, as right voters become more extreme, their largest asymmetric interval of approved policy outcomes expands. This has two consequences. First, these voters are now more persuadable, which means that the challenger's odds of winning go up. Second, more extreme right voters approve policy outcomes further to the left. As a result, the left boundary of the equilibrium set of winning policy outcomes shifts to the left, as well. Interestingly, the right endpoint of the equilibrium set of winning policy outcomes, which is determined by left voters, may strictly decrease, too. This happens when right voters are or become persuadable by policy outcomes left of $\floor{A_L}$ (e.g., a change from $R_2$ to $R_3$/$R_4$, or from $R_3$ to $R_4$ in \cref{fig:comp_statics}).

\section{Discussion and Conclusion}\label{section:discussion}

This paper studied how a challenger advertises his privately known policy outcome to an electorate of voters with single-peaked and single-crossing preferences, using verifiable messages.
Below I discuss how the analysis of the main model can be extended to other assumptions common in the political economy literature. I consider the following extensions: the presence of a strategic incumbent, a citizen-candidate challenger, probabilistic voting, instrumental voting, and information spillovers.

\subsection*{Strategic Incumbent}
In the model, the incumbent does not advertise, and his policy outcome is known and normalized to zero. This assumption can be relaxed in a number of ways.

First, suppose that the status quo is a lottery, $\nu_0 \in \Delta X$ (independent of $\mu_0$), and the incumbent is still nonstrategic.
Then, voter $v$'s expected payoff from rejection is $\int u_v(y) d\nu_0(y)$.
While there are no left and right voters anymore (they were defined relative to $0$), voters $v$ and $w$ can still be defined as having {\it diametrically opposing preferences} if $\int u_v(x) d\mu(x) \geq \int u_v(y) d \nu_0(y)$ implies $\int u_w(x) d\mu(x) < \int u_w(y) d \nu_0(y)$ for all $\mu \in \Delta X$, meaning that at most one of these voters prefers to approve given the choice between $\nu_0$ and any $\mu$.
Then, an election is unwinnable for the challenger without targeted advertising if every decisive coalition requires convincing such voters.
With targeted advertising, the challenger induces different posteriors among different voters and is still able to convince voters with diametrically opposing preferences with a positive probability.

Next, suppose that the incumbent is strategic and can change $\nu_0$ to a common belief $\nu$ about the status quo policy outcome, perhaps by publicly advertising it. Assuming that the challenger has time to react, he still benefits from targeted advertising for the same reason as in the above paragraph. In fact, even if the incumbent chooses the status quo policy outcome, the challenger can win as long as not every decisive coalition includes a status quo voter.

Finally, if the candidates are symmetric (e.g., they both use targeted advertising to advertise their own and/or their opponent's policy outcome), then full unraveling of information takes place (\citenop{Janssen2017}; \citenop{schipper2019}). Therefore, the key insight of this paper is that having access to better targeted-advertising technology and/or better voter data allows politicians to win otherwise unwinnable elections. Without this advantage, voters choose the same candidate as they would under complete information.

\subsection*{Citizen-Candidate Challenger}

In the main model, the challenger is office-motivated: his payoff is $1$ if he wins and $0$ if he loses. Let us consider an extension in which the challenger is a citizen-candidate whose payoff is $u_x(x)$ if he wins and $u_x(0)$ if he loses.\footnote{I thank an anonymous referee for suggesting this extension. One possible microfoundation is as follows. The challenger's policy outcome $x$, which is his private information, is his {\it ideology}. The challenger's ideology is drawn from a common prior $\mu_0$. The challenger's message is a verifiable statement about his ideology. If he wins, he implements the policy outcome equal to his ideology, so his payoff is $u_x(x)$. If he loses, the status quo policy outcome $0$ persists, so his payoff is $u_x(0)$.}

It is easy to see that the equilibria described in Theorems \ref{thm1:PA} and \ref{thm2:TA} remain equilibria of the citizen-candidate games with public advertising and targeted advertising, respectively. In these equilibria, the challenger does not have profitable deviations {\it when he wins} because he reaches the highest possible payoff. On the other hand, he does not have profitable deviations {\it when he loses} due to voters' skeptical off-path beliefs, which ensure that he receives at most his complete information payoff if he deviates. Thus, as long as the challenger loses under complete information, whether his payoff depends on $x$ is irrelevant for the purpose of equilibrium construction.

Although the equilibria described in Propositions \ref{prop:baseline-PA} and \ref{prop:baseline-TA} maximize the challenger's {\it odds of winning} in a baseline election, they may not maximize his citizen-candidate ex-ante utility. It may be ex-ante optimal to let the set of winning policy outcomes include the most extreme challengers, as they would obtain the lowest payoff from losing. Characterizing the citizen-candidate's ex-ante optimal set of winning policy outcomes remains an open question.

\subsection*{Probabilistic Voting}

In the model, the challenger knows the set $\mathcal{D}$ of decisive coalitions. In this extension, I explore what happens if the challenger experiences uncertainty about which coalition is decisive. For concreteness, suppose that there are two possible voter bliss points, $L<0$ and $R>0$, and the unique minimal decisive coalition is $\{ L \}$, $\{ R \}$, or $\{ L,R \}$ with probability $\gamma_L$, $\gamma_R$, or $\gamma_{LR}$, respectively. The timing of the game is the same, except the uncertainty about which coalition is decisive is resolved after the challenger chooses his messages.

It is useful to establish the challenger's payoff from fully revealing his policy outcome, as that provides the lower bound on his equilibrium payoff. Under full revelation (i.e., if the challenger's strategy is to send message $\{x\}$ for all $x \in X$), his expected payoff is $\gamma_L$ if $x \in [ \floor{A_L},0 )$, $1$ if $x = 0$, and $\gamma_R$ if $x \in (0,\ceil{A_R}]$. Furthermore, much like in the main model, deviations to full revelation are the only ones we need to rule out; other deviations can be made unprofitable by imposing voters' skeptical off-path beliefs. Below we analyze pure-strategy equilibria of public and targeted advertising games.

\paragraph{Public Advertising}

For any pure-strategy equilibrium, consider the mapping from the challenger's realized policy outcome $x \in X$ to the actions taken by voters $L$ and $R$. Since these voters never approve under a common belief, there are three possible outcomes for each $x$: $L$ approves, $R$ approves, or no one approves. We can thus characterize a pure-strategy equilibrium outcome as a partition of the space of policy outcomes into those approved by the left coalition ($W_L$), those approved by the right coalition ($W_R$), and those approved by nobody ($W_\varnothing = X \cap W_L^c \cap W_R^c$). Furthermore, we can implement any such partition $(W_L,W_R,W_\varnothing)$ by letting the challenger reveal which partition element his policy outcome belongs to: his {\it partitional strategy} is to send message $W_L$ if $x \in W_L$, message $W_R$ if $x \in W_R$, and message $W_\varnothing$ otherwise. For that to be an equilibrium strategy, $W_v$ must be obedient for voter $v \in \{ L,R \}$. Moreover, the challenger must be obtaining at least his full revelation payoff.

One example of an equilibrium partitional strategy is ``divide-and-conquer''---getting left (right) voters to approve the largest interval of left (right) policy outcomes (by letting $W_L = I_L(-1,0)$ and $W_R = I_R(0,1)$). Another example of a partitional strategy is to maximize the odds of convincing the left voters (by letting $W_L = I_L(-1,1)$ and $W_R$ be the largest subset of $X \cap W_L^c$ obedient for $R$). In the end, the partitional strategy that maximizes the challenger's odds of winning among pure-strategy equilibria will depend on parameters $\gamma_L$ and $\gamma_R$.

\paragraph{Targeted Advertising} It is easy to see that any direct strategy $\sigma_{(M_L,M_R)}$, where $M_v$ is obedient for voter $v \in \{ L,R \}$ and includes her approval set (e.g., the strategy described in \cref{prop:baseline-TA}), is an equilibrium strategy. The challenger's ex-ante utility is then $\gamma_L \cdot \mu_0(M_L) + \gamma_R \cdot \mu_0(M_R) + \gamma_{LR} \cdot \mu_0( M_L \cap M_R )$. It is also easy to see that, depending on the values of $\gamma_L$, $\gamma_R$ and $\gamma_{LR}$, letting $(M_L,M_R) = (M_L^*,M_R^*)$---that is, maximizing the odds of convincing the mixed decisive coalition---may not maximize the challenger's ex-ante utility. For example, if the left coalition is almost certainly decisive ($\gamma_L \to 1$), then tailoring the message to that coalition by letting $M_L = I_L(-1,1)$ increases the odds of winning.

Overall, when there is uncertainty about decisive coalitions, the challenger faces tradeoffs that are not present in the main model---in particular, he may want to cater to coalitions that are more likely to be decisive. The challenger-preferred equilibrium outcome characterization, whether the focus on pure-strategy equilibria is without loss, and whether the challenger reaches the same payoff as in information design, remain open questions.

\subsection*{Instrumental Voting}

In the model, the voters have expressive preferences: if voter $v$ votes for policy outcome $y \in X$, then her payoff is $u_v(y)$. While this assumption is reasonable in large elections wherein the probability that an individual vote is pivotal is vanishingly small, in other elections voters may have instrumental concerns and derive utility also from the {\it winning} policy outcome. In this extension, I suppose that voters have both expressive (with probability $\beta \in [0,1]$) and instrumental (with probability $1-\beta$) concerns. Suppose that voter $v$'s payoff when she votes for policy outcome $y$ and the winning policy outcome is $y_w$ is $\widetilde{u}_v(y,y_w) = \beta u_v(y) + (1-\beta) u_v(y_w)$. Below I argue that polarized elections remain winnable if and only if $\beta > 0$, that is, whenever the voters do not have {\it purely} instrumental concerns.

For concreteness, consider a polarized baseline election with voters $L < 0$ and $R > 0$. For the same reasons as in the main model, the challenger-preferred equilibrium here is characterized by a pair $( \widetilde{M}_L, \widetilde{M}_R )$ of the voters' sets of approved policy outcomes obtained by solving\footnote{In that equilibrium, the challenger uses a direct strategy $\sigma_{(\widetilde{M}_L,\widetilde{M}_R)}$, voter $v \in \{ L,R \}$ approves after $\widetilde{M}_v$, and voters have skeptical off-path beliefs. That equilibrium maximizes the challenger's odds of winning across all equilibria because the pair $( \widetilde{M}_L, \widetilde{M}_R )$ also characterizes an optimal experiment that solves the information design problem. The obedience constraint of voter $v$ is obtained by simplifying $\int_{M_L \cap M_R} u_v(x) d\mu_0(x) + \int_{M_v \cap M_w^c} (\beta u_v(x) + (1-\beta) u_v(0)) d\mu_0(x) \geq \int_{M_v} u_v(0) d\mu_0(x)$.}
    \begin{align*}
         & \max_{(M_L,M_R) \subseteq X^2} \mu_0(M_L \cap M_R) \quad \text{subject to}
        \\& \quad \int_{M_L \cap M_R} \alpha_v(x) \, d\mu_0(x) + \beta \int_{M_v \cap M_w^c} \alpha_v(x) d\mu_0(x) \geq 0 \quad \text{for all } v \neq w \in \{L, R\}.
    \end{align*}
    Observe that the voters' obedience constraints now have two terms. The first term reflects the pivotality event, in which the winning policy outcome is $x$ because both voters approve. The second term is present only if the voters have expressive concerns (i.e., if $\beta > 0$), as in that case, the winning policy outcome remains $0$.

    Suppose first that voters are purely instrumental, that is, $\beta = 0$. Then the obedience constraint of voter $v$ in the challenger-preferred equilibrium becomes $\int_{\widetilde{M}_L \cap \widetilde{M}_R} \alpha_v(x) d\mu_0(x) \geq 0$.
    Crucially, that means that each voter's net payoff from approval is non-negative when she is pivotal, i.e., when both voters are recommended to approve. However, both voters are pivotal in the same event, whenever $x \in \widetilde{M}_L \cap \widetilde{M}_R$, and have a common posterior, $\mu_0(\cdot \sep \widetilde{M}_L \cap \widetilde{M}_R)$, in that event. By \eqref{SC1:incompatible}, the unique common belief under which left and right voters prefer to approve is $\delta_0$, so $\widetilde{M}_L \cap \widetilde{M}_R = \{0\}$. Therefore, the challenger's odds of convincing purely instrumental and jointly pivotal voters $L<0$ and $R>0$ are zero even in his most-preferred equilibrium of the targeted advertising game.

    If voters are not purely instrumental, that is, if $\beta > 0$, then there exists a pair $(M_L,M_R)$ that satisfies both voters' obedience constraints and $\mu_0(M_L \cap M_R) > 0$.\footnote{For example, it is easy to see that for any $\beta > 0$ there exists a small enough $\varepsilon > 0$ such that the set $M_v = A_v \cup [-\varepsilon,\varepsilon]$ is obedient for each $v \in \{ L,R \}$.} I thus conclude that polarized elections are winnable with targeted advertising even if there are only a small number of voters, as long as the voters' concerns are not purely instrumental.

    \subsection*{Information Spillovers}
    The final key assumption of the model is that there are no information spillovers, meaning that the challenger's targeted ads stay private. If left and right voters observed each other's messages, they would learn the same information, making targeted advertising as good as public disclosure. Therefore, informing voters of all ads transmitted during an electoral campaign is a useful tool to mitigate the negative effects of targeted advertising. In fact, 1,433 targeted ads of the Vote Leave campaign were released in the aftermath of the 2016 Brexit referendum, but the release occurred after the vote was finalized.\footnote{In July 2018, Facebook released the ads to the United Kingdom's Department for Culture, Media and Sport Committee as part of the committee's inquiry into fake news.}

    \begin{singlespace}
        \setlength\bibitemsep{5pt}
        \printbibliography

@article{kolotilinPersuasionPrivatelyInformed2017,
  title = {Persuasion of a {{Privately Informed Receiver}}},
  author = {Kolotilin, Anton and Mylovanov, Tymofiy and Zapechelnyuk, Andriy and Li, Ming},
  date = {2017},
  journaltitle = {Econometrica},
  volume = {85},
  number = {6},
  eprint = {44955187},
  eprinttype = {jstor},
  pages = {1949--1964},
  publisher = {[Wiley, The Econometric Society]},
  issn = {0012-9682},
  url = {https://www.jstor.org/stable/44955187},
  urldate = {2025-08-29},
  abstract = {We study persuasion mechanisms in linear environments. A receiver has a private type and chooses between two actions. A sender designs a persuasion mechanism or an experiment to disclose information about a payoff-relevant state. A persuasion mechanism conditions information disclosure on the receiver's report about his type, whereas an experiment discloses information independent of the receiver's type. We establish the equivalence of implementation by persuasion mechanisms and by experiments, and characterize optimal persuasion mechanisms.},
  file = {C:\Users\motit\Zotero\storage\GQQFIWJ7\Kolotilin et al. - 2017 - Persuasion of a Privately Informed Receiver.pdf}
}

@article{schipper2019,
  title = {Political {{Awareness}}, {{Microtargeting}} of {{Voters}}, and {{Negative Electoral Campaigning}}},
  author = {Schipper, Burkhard C. and Woo, Hee Yeul},
  date = {2019-01-10},
  journaltitle = {Quarterly Journal of Political Science},
  shortjournal = {QJPS},
  volume = {14},
  number = {1},
  pages = {41--88},
  publisher = {Now Publishers, Inc.},
  issn = {1554-0626, 1554-0634},
  doi = {10.1561/100.00016066},
  url = {https://www.nowpublishers.com/article/Details/QJPS-16066},
  urldate = {2022-10-25},
  abstract = {Political Awareness, Microtargeting of Voters, and Negative Electoral Campaigning},
  langid = {english},
  file = {C\:\\Users\\motit\\Zotero\\storage\\BR67V8KT\\Schipper and Woo - 2019 - Political Awareness, Microtargeting of Voters, and.pdf;C\:\\Users\\motit\\Zotero\\storage\\EE6CSR84\\QJPS-16066.html}
}

@article{grossmanDisclosureLawsTakeover1980,
  title        = {Disclosure {{Laws}} and {{Takeover Bids}}},
  author       = {Sanford J. Grossman and Oliver D. Hart},
  date         = {1980},
  journaltitle = {The Journal of Finance},
  volume       = {35},
  number       = {2},
  pages        = {323--334},
  issn         = {1540-6261},
  doi          = {10.1111/j.1540-6261.1980.tb02161.x},
  url          = {https://onlinelibrary.wiley.com/doi/abs/10.1111/j.1540-6261.1980.tb02161.x},
  urldate      = {2025-07-29},
  langid       = {english},
  file         = {C:\Users\motit\Zotero\storage\BRQINSCY\j.1540-6261.1980.tb02161.html}
}

@unpublished{asseyerInformationDesignCompeting2025,
  title      = {Information {{Design}} with {{Competing Receivers}}},
  author     = {Asseyer, Andreas and Ravindran, Dilip},
  date       = {2025},
  eprinttype = {SSRN},
  doi        = {10.2139/ssrn.5120732},
  url        = {https://www.ssrn.com/abstract=5120732},
  urldate    = {2025-07-29},
  abstract   = {We study information design in a model with one sender and several receivers who participate in a general competitive bidding game. The sender chooses a Blackwell experiment which provides the receivers with information about a payoffrelevant state of the world. The receivers then make offers of which the sender accepts at most one. We show that a sender with state-independent preferences optimally provides the same information to all receivers. Moreover, we identify a condition, competitiveness, under which the sender does not benefit from committing ex-ante to an offer-acceptance strategy. With state-dependent preferences, the sender may benefit from providing private information.},
  langid     = {english},
  pubstate   = {Working Paper},
  file       = {C:\Users\motit\Zotero\storage\FXDDDKC4\Asseyer and Ravindran - 2025 - Information Design with Competing Receivers.pdf}
}

@article{druckman_campaign_2009,
  title        = {Campaign Communications in U.S. Congressional Elections},
  volume       = {103},
  issn         = {1537-5943, 0003-0554},
  url          = {https://www.cambridge.org/core/journals/american-political-science-review/article/abs/campaign-communications-in-us-congressional-elections/91AD70E4A75AFEE14F3B09924B0CD591},
  doi          = {10.1017/S0003055409990037},
  abstract     = {Electoral campaigns are the foundation of democratic governance; yet scholarship on the content of campaign communications remains underdeveloped. In this paper, we advance research on U.S. congressional campaigns by integrating and extending extant theories of campaign communication. We test the resulting predictions with a novel dataset based on candidate Web sites over three election cycles. Unlike television advertisements or newspaper coverage, Web sites provide an unmediated, holistic, and representative portrait of campaigns. We find that incumbents and challengers differ across a broad range of behavior that reflects varying attitudes toward risk, that incumbents’ strategies depend on the competitiveness of the race, and that candidates link negative campaigning to other aspects of their rhetorical strategies. Our efforts provide researchers with a basis for moving toward a more complete understanding of congressional campaigns.},
  pages        = {343--366},
  number       = {3},
  journaltitle = {American Political Science Review},
  author       = {Druckman, James N. and Kifer, Martin J. and Parkin, Michael},
  urldate      = {2024-02-23},
  date         = {2009-08},
  langid       = {english}
}

@article{fowler_political_2021,
  title        = {Political Advertising Online and Offline},
  volume       = {115},
  issn         = {0003-0554, 1537-5943},
  url          = {https://www.cambridge.org/core/journals/american-political-science-review/article/abs/political-advertising-online-and-offline/9E24E81AC74E4644494FF451D5373B71},
  doi          = {10.1017/S0003055420000696},
  abstract     = {Despite the rapid growth of online political advertising, the vast majority of scholarship on political advertising relies exclusively on evidence from candidates’ television advertisements. The relatively low cost of creating and deploying online advertisements and the ability to target online advertisements more precisely may broaden the set of candidates who advertise and allow candidates to craft messages to more narrow audiences than on television. Drawing on data from the newly released Facebook Ad Library {API} and television data from the Wesleyan Media Project, we find that a much broader set of candidates advertises on Facebook than television, particularly in down-ballot races. We then examine within-candidate variation in the strategic use and content of advertising on television relative to Facebook for all federal, statewide, and state legislative candidates in the 2018 election. Among candidates who use both advertising media, Facebook advertising occurs earlier in the campaign, is less negative, less issue focused, and more partisan than television advertising.},
  pages        = {130--149},
  number       = {1},
  journaltitle = {American Political Science Review},
  author       = {Fowler, Erika Franklin and Franz, Michael M. and Martin, Gregory J. and Peskowitz, Zachary and Ridout, Travis N.},
  urldate      = {2024-02-23},
  date         = {2021-02},
  langid       = {english}
}

@article{kendall_how_2015,
  title        = {How Do Voters Respond to Information? Evidence from a Randomized Campaign},
  volume       = {105},
  issn         = {0002-8282},
  url          = {https://www.aeaweb.org/articles?id=10.1257/aer.20131063},
  doi          = {10.1257/aer.20131063},
  shorttitle   = {How Do Voters Respond to Information?},
  abstract     = {In a large-scale controlled trial in collaboration with the reelection campaign of an Italian incumbent mayor, we administered (randomized) messages about the candidate's valence or ideology. Informational treatments affected both actual votes in the precincts and individual vote declarations. Campaigning on valence brought more votes to the incumbent, but both messages affected voters' beliefs. Cross-learning occurred, as voters who received information about the incumbent also updated their beliefs on the opponent. With a novel protocol of beliefs elicitation and structural estimation, we assess the weights voters place upon politicians' valence and ideology, and simulate counterfactual campaigns. ({JEL} D12, D72, D83)},
  pages        = {322--353},
  number       = {1},
  journaltitle = {American Economic Review},
  author       = {Kendall, Chad and Nannicini, Tommaso and Trebbi, Francesco},
  urldate      = {2024-02-23},
  date         = {2015-01},
  langid       = {english},
  keywords     = {Belief, Communication, Consumer Economics: Empirical Analysis, Political Processes: Rent-seeking, Lobbying, Elections, Legislatures, and Voting Behavior, Search, Information and Knowledge, Learning, Unawareness},
  file         = {Submitted Version:C\:\\Users\\motit\\Zotero\\storage\\4XSKGKEF\\Kendall et al. - 2015 - How Do Voters Respond to Information Evidence fro.pdf:application/pdf}
}

@article{le_pennec_how_2023,
  title        = {How do Campaigns Shape Vote Choice? Multicountry Evidence from 62 Elections and 56 {TV} Debates},
  volume       = {138},
  issn         = {0033-5533},
  url          = {https://doi.org/10.1093/qje/qjad002},
  doi          = {10.1093/qje/qjad002},
  shorttitle   = {How do Campaigns Shape Vote Choice?},
  abstract     = {We use two-round survey data from 62 elections in 10 countries since 1952 to study the formation of vote choice, beliefs, and policy preferences and assess how televised debates contribute to this process. Our data include 253,000 observations. We compare the consistency between vote intention and vote choice of respondents surveyed at different points before, and then again after, the election, and show that 17\% to 29\% of voters make up their mind during the final two months of campaigns. Changes in vote choice are concomitant to shifts in issues voters find most important and in beliefs about candidates, and they generate sizable swings in vote shares. In contrast, policy preferences remain remarkably stable throughout the campaign. Finally, we use an event study to estimate the impact of {TV} debates, in which candidates themselves communicate with voters, and of shocks such as natural and technological disasters which, by contrast, occur independently from the campaign. We do not find any effect of either type of event on vote choice formation, suggesting that information received throughout the campaign from other sources such as the media, political activists, and other citizens is more impactful.},
  pages        = {703--767},
  number       = {2},
  journaltitle = {The Quarterly Journal of Economics},
  shortjournal = {The Quarterly Journal of Economics},
  author       = {Le Pennec, Caroline and Pons, Vincent},
  urldate      = {2024-02-23},
  date         = {2023-05-01},
  file         = {Full Text:C\:\\Users\\motit\\Zotero\\storage\\Z8W69I9D\\Le Pennec and Pons - 2023 - How do Campaigns Shape Vote Choice Multicountry E.pdf:application/pdf;Snapshot:C\:\\Users\\motit\\Zotero\\storage\\ARFC48BQ\\7050900.html:text/html}
}

@article{spenkuch_political_2018,
  title        = {Political Advertising and Election Results},
  volume       = {133},
  issn         = {0033-5533},
  url          = {https://doi.org/10.1093/qje/qjy010},
  doi          = {10.1093/qje/qjy010},
  abstract     = {We study the persuasive effects of political advertising. Our empirical strategy exploits {FCC} regulations that result in plausibly exogenous variation in the number of impressions across the borders of neighboring counties. Applying this approach to detailed data on television advertisement broadcasts and viewership patterns during the 2004–12 presidential campaigns, our results indicate that total political advertising has almost no impact on aggregate turnout. By contrast, we find a positive and economically meaningful effect of advertising on candidates’ vote shares. Taken at face value, our estimates imply that a one standard deviation increase in the partisan difference in advertising raises the partisan difference in vote shares by about 0.5 percentage points. Evidence from a regression discontinuity design suggests that advertising affects election results by altering the partisan composition of the electorate.},
  pages        = {1981--2036},
  number       = {4},
  journaltitle = {The Quarterly Journal of Economics},
  shortjournal = {The Quarterly Journal of Economics},
  author       = {Spenkuch, Jörg L and Toniatti, David},
  urldate      = {2024-02-23},
  date         = {2018-11-01},
  file         = {Snapshot:C\:\\Users\\motit\\Zotero\\storage\\45SLNLYB\\4993157.html:text/html}
}

@article{EstebanRay1994,
  title        = {On the Measurement of Polarization},
  volume       = {62},
  issn         = {0012-9682},
  url          = {https://www.jstor.org/stable/2951734},
  doi          = {10.2307/2951734},
  abstract     = {Suppose that a population of individuals may be grouped according to some vector of characteristics into "clusters," such that each cluster is very "similar" in terms of the attributes of its members, but different clusters have members with very "dissimilar" attributes. In that case we say that the society is polarized. Our purpose is to study polarization, and to provide a theory of its measurement. Our contention is that polarization, as conceptualized here, is closely related to the generation of social tensions, to the possibilities of revolution and revolt, and to the existence of social unrest in general. We take special care to distinguish our theory from the theory of inequality measurement. We derive measures of polarization that are easily applicable to distributions of characteristics such as income and wealth.},
  pages        = {819--851},
  number       = {4},
  journaltitle = {Econometrica},
  author       = {Esteban, Joan-María and Ray, Debraj},
  urldate      = {2024-02-07},
  date         = {1994},
  file         = {JSTOR Full Text PDF:C\:\\Users\\motit\\Zotero\\storage\\XWRCHVYL\\Esteban and Ray - 1994 - On the Measurement of Polarization.pdf:application/pdf}
}

@article{Kartik2023,
  author   = {Kartik, Navin and Lee, SangMok and Rappoport, Daniel},
  title    = {Single-Crossing Differences in Convex Environments},
  journal  = {The Review of Economic Studies},
  volume   = {91},
  number   = {5},
  pages    = {2981--3012},
  year     = {2024},
  month    = {10},
  issn     = {0034-6527},
  doi      = {10.1093/restud/rdad103},
  url      = {https://doi.org/10.1093/restud/rdad103},
  abstract = {An agent’s preferences depend on an ordered parameter or type. We characterize the set of utility functions with single-crossing differences (SCD) in convex environments. These include preferences over lotteries, both in expected utility and rank-dependent utility frameworks, and preferences over bundles of goods and over consumption streams. Our notion of SCD does not presume an order on the choice space. This unordered SCD is necessary and sufficient for “interval choice” comparative statics. We present applications to cheap talk, observational learning, and collective choice, showing how convex environments arise in these problems and how SCD/interval choice are useful. Methodologically, our main characterization stems from a result on linear aggregations of single-crossing functions.}
}

@book{Persson2002,
  author    = {Torsten Persson and Guido Tabellini},
  title     = {{Political Economics: Explaining Economic Policy}},
  publisher = {The MIT Press},
  year      = 2002,
  month     = {},
  series    = {MIT Press Books},
  edition   = {},
  keywords  = {political economics; macroeconomic policy; public choice; rational choice},
  doi       = {},
  abstract  = {What determines the size and form of redistributive programs, the extent and type of public goods provision, the burden of taxation across alternative tax bases, the size of government deficits, and the stance of monetary policy during the course of business and electoral cycles? A large and rapidly growing literature in political economics attempts to answer these questions. But so far there is little consensus on the answers and disagreement on the appropriate mode of analysis. Combining the best of three separate traditions—the theory of macroeconomic policy, public choice, and rational choice in political science—Torsten Persson and Guido Tabellini suggest a unified approach to the field. As in modern macroeconomics, individual citizens behave rationally, their preferences over economic outcomes inducing preferences over policy. As in public choice, the delegation of policy decisions to elected representatives may give rise to agency problems between voters and politicians. And, as in rational choice, political institutions shape the procedures for setting policy and electing politicians. The authors outline a common method of analysis, establish several new results, and identify the main outstanding problems.},
  url       = {https://ideas.repec.org/b/mtp/titles/0262661314.html}
}

@article{Kartik2017,
  title        = {Electoral Ambiguity and Political Representation},
  volume       = {61},
  issn         = {0092-5853},
  url          = {https://www.jstor.org/stable/26379538},
  abstract     = {We introduce a Downsian model in which policy-relevant information is revealed to the elected politician after the election. The electorate benefits from giving the elected politician discretion to adapt policies to his information. But limits on discretion are desirable when politicians do not share the electorate's policy preferences. Optimal political representation generally consists of a mixture of the delegate (no discretion) and trustee (full discretion) models. Ambiguous electoral platforms are essential for achieving beneficial representation. Nevertheless, electoral competition does not ensure optimal representation: The winning candidate's platform is generally overly ambiguous. While our theory rationalizes a positive correlation between ambiguity and electoral success, it shows that the relationship need not be causal.},
  pages        = {958--970},
  number       = {4},
  journaltitle = {American Journal of Political Science},
  author       = {Kartik, Navin and Van Weelden, Richard and Wolton, Stephane},
  urldate      = {2022-11-18},
  date         = {2017},
  file         = {JSTOR Full Text PDF:/Users/titovam/Zotero/storage/76I9KIJN/Kartik et al. - 2017 - Electoral Ambiguity and Political Representation.pdf:application/pdf}
}

@unpublished{Alesina2008,
  author       = {Alberto F. Alesina and Richard T. Holden},
  title        = {{Ambiguity and Extremism in Elections}},
  date         = {2008-06},
  journaltitle = {NBER Working Paper},
  url          = {https://www.nber.org/papers/w14143},
  number       = {14143},
  pubstate     = {Working Paper},
  abstract     = {We analyze a model in which voters are uncertain about the policy preferences of candidates. Two forces affect the probability of electoral success: proximity to the median voter and campaign contributions. First, we show how campaign contributions affect elections. Then we show how the candidates may wish to announce a range of policy preferences, rather than a single point. This strategic ambiguity balances voter beliefs about the appeal of candidates both to the median voter and to the campaign contributors. If primaries precede a general election, they add another incentive for ambiguity, because in the primaries the candidates do not want to reveal too much information, to maintain some freedom of movement in the policy space for the general election. Ambiguity has an option value.},
  keywords     = {},
  doi          = {}
}

@article{tolvanenPoliticalAmbiguityAntiMedian2024,
  title        = {On Political Ambiguity and {{Anti-Median Platforms}}},
  author       = {Tolvanen, Juha},
  date         = {2024-11-01},
  journaltitle = {European Economic Review},
  shortjournal = {European Economic Review},
  volume       = {170},
  pages        = {104884},
  issn         = {0014-2921},
  doi          = {10.1016/j.euroecorev.2024.104884},
  url          = {https://www.sciencedirect.com/science/article/pii/S0014292124002137},
  urldate      = {2025-07-29},
  abstract     = {This paper constructs a new, information-based explanation for political ambiguity and the success of anti-median platforms. It argues that voters’ and candidates’ correlated preferences about the appropriate policy combined with ambiguous platforms can help candidates with non-median preferences increase their support and even win against a median candidate. I show how ambiguity can arise in a standard citizen-candidate setting where voters have different preferences, in its extension with primaries, and even in a Condorcet jury model where disagreement arises only from differences in voters’ information. The paper also offers a formal framework that allows for dog whistle politics. The model illustrates how ambiguity can have important negative welfare implications. Specifically, I show that despite having ex-ante identical preferences with voters, politicians may choose ambiguous platforms even if voters would be keen on banning them.},
  keywords     = {Citizen-candidate,Condorcet voting,Dog whistle politics,Political ambiguity},
  file         = {C:\Users\motit\Zotero\storage\QLPETNKJ\S0014292124002137.html}
}

@article{Callander2008,
  title        = {Context-Dependent Voting and Political Ambiguity},
  volume       = {92},
  issn         = {0047-2727},
  url          = {https://www.sciencedirect.com/science/article/pii/S004727270700120X},
  doi          = {10.1016/j.jpubeco.2007.09.002},
  journaltitle = {Journal of Public Economics},
  author       = {Callander, Steven and Wilson, Catherine H.},
  urldate      = {2022-11-12},
  date         = {2008-04-01}
}

@article{Aragones2000,
  title        = {Strategic Ambiguity in Electoral Competition},
  volume       = {12},
  issn         = {0951-6298},
  url          = {https://doi.org/10.1177/0951692800012002003},
  doi          = {10.1177/0951692800012002003},
  abstract     = {Many have observed that political candidates running for election are often purposefully expressing themselves in vague and ambiguous terms. In this paper, we provide a simple formal model of this phenomenon. We model the electoral competition between two candidates as a two-stage game. In the first stage of the game, two candidates simultaneously choose their ideologies, in the second stage, they simultaneously choose their levels of ambiguity. Ambiguity, although disliked by voters, may be sustained in equilibrium. More interestingly, we provide insight into the causes for ideological differentiation by showing that politicians may wish to differentiate themselves ideologically so that they can afford to become more ambiguous.},
  pages        = {183--204},
  number       = {2},
  journaltitle = {Journal of Theoretical Politics},
  author       = {Aragonès, Enriqueta and Neeman, Zvika},
  urldate      = {2022-11-12},
  date         = {2000-04-01},
  langid       = {english},
  file         = {Submitted Version:/Users/titovam/Zotero/storage/UD88L3TK/Aragonès and Neeman - 2000 - Strategic Ambiguity in Electoral Competition.pdf:application/pdf}
}

@article{Meirowitz2005,
  title        = {Informational Party Primaries and Strategic Ambiguity},
  volume       = {17},
  issn         = {0951-6298},
  url          = {https://doi.org/10.1177/0951629805047800},
  doi          = {10.1177/0951629805047800},
  abstract     = {While scholars have thoroughly explored the logic of two candidate electoral competition, much less has been accomplished in gaining an understanding of the role of party primaries. This paper presents an incomplete information model of primary and general elections and argues that party primaries do more than select party candidates. Party primaries serve an informational function. In an environment where candidates are uncertain about the preferences of voters, selection of desirable policy platforms is a risky, if not difficult, undertaking. Primary elections offer voters an early opportunity to signal their preferences to candidates. Before primary elections, the candidates, aware that information about voter preferences is forthcoming, have an incentive to remain ambiguous about their policy platforms. Early commitment makes them vulnerable to better informed candidates that they might face in the general election. The fully revealing equilibrium of the game yields a joint explanation of the role of party primaries and candidate ambiguity. Primaries aggregate information about voter preferences and candidate ambiguity has an option value.},
  pages        = {107--136},
  number       = {1},
  journaltitle = {Journal of Theoretical Politics},
  author       = {Meirowitz, Adam},
  urldate      = {2022-11-12},
  date         = {2005-01-01},
  langid       = {english},
  file         = {Meirowitz - 2005 - Informational Party Primaries and Strategic Ambigu.pdf:/Users/titovam/Zotero/storage/VNG7BV65/Meirowitz - 2005 - Informational Party Primaries and Strategic Ambigu.pdf:application/pdf}
}

@article{Shepsle1972,
  title        = {The Strategy of Ambiguity: Uncertainty and Electoral Competition},
  volume       = {66},
  issn         = {0003-0554, 1537-5943},
  url          = {https://www.cambridge.org/core/journals/american-political-science-review/article/abs/strategy-of-ambiguity-uncertainty-and-electoral-competition/3FDE8241F2B12D049EA9F2C1AAF19332},
  doi          = {10.2307/1957799},
  shorttitle   = {The Strategy of Ambiguity},
  abstract     = {In this paper problems of social choice in general, and political choice in particular, are considered in light of uncertainty. The space of social alternatives in this formulation includes not only pure social states, but lotteries or probability distributions over those states as well. In the context of candidate strategy selection in a spatial model of political choice, candidate strategy sets are represented by pure strategies—points in the space of alternatives—and ambiguous strategies—lotteries over those points. Questions about optimal strategy choice and the equilibrium properties of these choices are then entertained. Duncan Black's theorem about the dominance of the median preference is generalized, and further contingencies in which the theorem is false are specified. The substantive foci of these results are: (1) the conditions in which seekers of political office will rationally choose to appear equivocal in their policy intentions; and (2) the role of institutional structure in defining equilibrium.},
  pages        = {555--568},
  number       = {2},
  journaltitle = {American Political Science Review},
  author       = {Shepsle, Kenneth A.},
  urldate      = {2022-11-12},
  date         = {1972-06},
  langid       = {english},
  file         = {Shepsle - 1972 - The Strategy of Ambiguity Uncertainty and Elector.pdf:/Users/titovam/Zotero/storage/US8RQKIZ/Shepsle - 1972 - The Strategy of Ambiguity Uncertainty and Elector.pdf:application/pdf;Snapshot:/Users/titovam/Zotero/storage/BUSTG2ZI/3FDE8241F2B12D049EA9F2C1AAF19332.html:text/html}
}

@article{Farrell1989,
  title        = {Cheap Talk with Two Audiences},
  volume       = {79},
  issn         = {0002-8282},
  url          = {https://www.jstor.org/stable/1831447},
  pages        = {1214--1223},
  number       = {5},
  journaltitle = {The American Economic Review},
  author       = {Farrell, Joseph and Gibbons, Robert},
  urldate      = {2022-10-25},
  date         = {1989}
}

@article{Bar-Isaac2014,
  title        = {(Good and Bad) Reputation for a Servant of Two Masters},
  volume       = {6},
  issn         = {1945-7669},
  url          = {https://www.aeaweb.org/articles?id=10.1257/mic.6.4.293},
  doi          = {10.1257/mic.6.4.293},
  abstract     = {We present a model in which an agent takes actions to affect her reputation with two audiences with diverse preferences. This contrasts with standard reputation models that consider a homogeneous audience. A new aspect that arises is that different
                  audiences may observe outcomes commonly or separately. We show that, if all audiences commonly observe outcomes, reputation concerns are necessarily efficient- the agent's per-period payoff in the long run is higher than in one-shot play. However, when audiences separately observe different outcomes, the result is the opposite. Therefore, the agent would prefer to deal with audiences commonly. If this is not possible, the second-best solution may be to forgo reputation with one audience and focus entirely on the other.},
  pages        = {293--325},
  number       = {4},
  journaltitle = {American Economic Journal: Microeconomics},
  author       = {Bar-Isaac, Heski and Deb, Joyee},
  urldate      = {2022-10-25},
  date         = {2014-11},
  langid       = {english},
  keywords     = {Consumer Economics: Theory, Asymmetric and Private Information, Mechanism Design},
  file         = {Snapshot:C\:\\Users\\motit\\Zotero\\storage\\8DVV8WCQ\\articles.html:text/html}
}

@article{Koessler2008,
  title        = {Lobbying with Two Audiences: Public vs Private Certification},
  volume       = {55},
  issn         = {0165-4896},
  url          = {https://www.sciencedirect.com/science/article/pii/S0165489607001084},
  doi          = {10.1016/j.mathsocsci.2007.10.003},
  shorttitle   = {Lobbying with two audiences},
  abstract     = {This paper compares public and private information certification in a simple class of communication games with one sender and two receivers. It is shown that, contrary to the cheap talk setting of [Farrell, J., Gibbons, R., 1989. Cheap talk with two audiences. American Economic Review 79, 1214–1223], allowing certifiable statements excludes mutual discipline (i.e., full information revelation in public but not in private) but allows for mutual subversion (i.e., full information revelation in private but not in public). In the latter case, the sender is always better off with public communication, while in other situations he may prefer either private or public communication. Compared to the previous models of strategic information revelation the paper also emphasizes the role of the “common belief ” consistency condition of the strong version of sequential equilibrium.},
  pages        = {305--314},
  number       = {3},
  journaltitle = {Mathematical Social Sciences},
  shortjournal = {Mathematical Social Sciences},
  author       = {Koessler, Frédéric},
  urldate      = {2022-10-25},
  date         = {2008-05-01},
  langid       = {english},
  keywords     = {Cheap talk, Certifiable information, Communication to multiple audiences, Consistency of beliefs},
  file         = {ScienceDirect Snapshot:C\:\\Users\\motit\\Zotero\\storage\\SZLRZAAC\\S0165489607001084.html:text/html;Submitted Version:C\:\\Users\\motit\\Zotero\\storage\\6SWEMC99\\Koessler - 2008 - Lobbying with two audiences Public vs private cer.pdf:application/pdf}
}

@article{Goltsman2011,
  title        = {How to Talk to Multiple Audiences},
  volume       = {72},
  issn         = {0899-8256},
  url          = {https://www.sciencedirect.com/science/article/pii/S0899825610001375},
  doi          = {10.1016/j.geb.2010.08.007},
  abstract     = {We analyze the performance of various communication protocols in a generalization of the Crawford–Sobel (1982) model of cheap talk that allows for multiple receivers. We find that the sender prefers communicating by private messages if the receiversʼ average bias is high, and by public messages if the receiversʼ average bias is low and the receivers are sufficiently polarized. When both public and private messages are allowed, the sender can combine the commitment provided by public communication with the flexibility of private communication and transmit more information to the receivers than under either private or public communication scenarios. When the players can communicate through a mediator and the receivers are biased in the same direction, it is optimal for the sender to communicate with the receivers through independent private noisy communication channels.},
  pages        = {100--122},
  number       = {1},
  journaltitle = {Games and Economic Behavior},
  shortjournal = {Games and Economic Behavior},
  author       = {Goltsman, Maria and Pavlov, Gregory},
  urldate      = {2022-10-25},
  date         = {2011-05-01},
  langid       = {english},
  keywords     = {Cheap talk, Information, Mechanism design, Communication, Long cheap talk, Multiple audiences},
  file         = {ScienceDirect Snapshot:C\:\\Users\\motit\\Zotero\\storage\\PHCHZUNM\\S0899825610001375.html:text/html;Submitted Version:C\:\\Users\\motit\\Zotero\\storage\\2CHQ248W\\Goltsman and Pavlov - 2011 - How to talk to multiple audiences.pdf:application/pdf}
}

@article{Schnakenberg2015,
  title        = {Expert Advice to a Voting Body},
  volume       = {160},
  issn         = {0022-0531},
  url          = {https://www.sciencedirect.com/science/article/pii/S0022053115001647},
  doi          = {10.1016/j.jet.2015.08.005},
  abstract     = {I provide a theory of information transmission in collective choice settings. In the model, an expert has private information on the effect of a policy proposal and communicates to a set of voters prior to a vote over whether or not to implement the proposal. In contrast to previous game-theoretic models of political communication, the results apply to situations involving multiple voters, multidimensional policy spaces and a broad class of voting rules. The results highlight how experts can use information to manipulate collective choices in a way that reduces the ex ante expected utilities of all voters. Opportunities for expert manipulation are the result of collective choice instability: all voting rules that allow collective preference cycles also allow welfare-reducing manipulative persuasion by an expert. The results challenge prevailing theories of institutions in which procedures are designed to maximize information transmission.},
  pages        = {102--113},
  journaltitle = {Journal of Economic Theory},
  shortjournal = {Journal of Economic Theory},
  author       = {Schnakenberg, Keith E.},
  urldate      = {2022-10-08},
  date         = {2015-12-01},
  langid       = {english},
  keywords     = {Cheap talk, Signaling, Social choice},
  file         = {ScienceDirect Full Text PDF:C\:\\Users\\motit\\Zotero\\storage\\KDLMTHVZ\\Schnakenberg - 2015 - Expert advice to a voting body.pdf:application/pdf;ScienceDirect Snapshot:C\:\\Users\\motit\\Zotero\\storage\\JWG5CZN9\\S0022053115001647.html:text/html}
}

@article{TitovaZhangPVI,
  title = {Persuasion with Verifiable Information},
  author = {Titova, Maria and Zhang, Kun},
  date = {2025-12-01},
  journaltitle = {Journal of Economic Theory},
  shortjournal = {Journal of Economic Theory},
  volume = {230},
  pages = {106102},
  issn = {0022-0531},
  doi = {10.1016/j.jet.2025.106102},
  url = {https://www.sciencedirect.com/science/article/pii/S0022053125001486},
  urldate = {2025-11-26},
  abstract = {This paper studies a game in which an informed sender with state-independent preferences uses verifiable messages to convince a receiver to choose an action from a finite set. We characterize the equilibrium outcomes of the game and compare them with commitment outcomes in information design. We provide conditions under which a commitment outcome is an equilibrium outcome and identify environments in which the sender does not benefit from commitment power. Our findings offer insights into the interchangeability of verifiability and commitment in applied settings.},
  keywords = {Commitment,Communication,Disclosure,Persuasion,Verifiable information}
}

@article{Bergemann2016,
  abstract     = {A set of players have preferences over a set of outcomes. We consider the problem of an “information designer” who can choose an information structure for the players to serve his ends, but has no ability to change the mechanism (or force the players to make particular action choices). We describe a unifying perspective for information design. We consider a simple example of Bayesian persuasion with both an uninformed and informed receiver. We extend information design to many player and relate it to the literature on incomplete information correlated equilibrium.},
  author       = {Bergemann, Dirk and Morris, Stephen},
  url          = {https://pubs.aeaweb.org/doi/10.1257/aer.p20161046},
  date         = {2016-05},
  doi          = {10.1257/aer.p20161046},
  file         = {:D$\backslash$:/Documents/Papers/Bergemann, Morris - 2016 - Information Design, Bayesian Persuasion, and Bayes Correlated Equilibrium.pdf:pdf},
  issn         = {0002-8282},
  journaltitle = {American Economic Review},
  number       = {5},
  pages        = {586--591},
  title        = {{Information Design, Bayesian Persuasion, and Bayes Correlated Equilibrium}},
  volume       = {106}
}

@article{Bardhi2018,
  abstract     = {A fully committed sender seeks to sway a collective adoption decision by multiple voters with correlated payoff states and heterogeneous thresh- olds of doubt through designing experiments. We characterize the sender- optimal policy under unanimity rule for two main persuasion modes. Under general persuasion, the sender makes the most demanding voters indifferent between decisions, while the more lenient voters strictly benefit from per- suasion. Under individual persuasion, the sender designates a subgroup of rubber-stampers, another of perfectly informed voters, and a third of partially informed voters. The most demanding voters are strategically accorded high-quality information. In contrast, under nonunanimous rules, general persuasion guarantees a sure adoption, while individual persuasion does not; voters prefer the latter due to the partial check they have on adoption.},
  author       = {Bardhi, Arjada and Guo, Yingni},
  url          = {https://econtheory.org/ojs/index.php/te/article/view/2834},
  date         = {2018},
  doi          = {10.3982/TE2834},
  file         = {:D$\backslash$:/Documents/Papers/Bardhi, Guo - 2018 - Modes of Persuasion toward Unanimous Consent.pdf:pdf},
  issn         = {1933-6837},
  journaltitle = {Theoretical Economics},
  keywords     = {D71,D82,D83,G28,Information design,K20,O32,collective decision-making,information guard,unanimity rule},
  number       = {3},
  pages        = {1111--1149},
  title        = {{Modes of Persuasion toward Unanimous Consent}},
  volume       = {13}
}

@article{Artabe2014,
  abstract     = {The random utility model (RUM) of voting behavior can account for strategic voting by making use of proxy indicators that measure voter incentives to vote strategically. The contribution of this article is to propose a new method to estimate the RUM in the presence of strategic voters, without having to construct proxy measures of strategic voting incentives. Our method can be used to infer the counterfactual sincere vote of those who vote strategically and provides an estimate of the size of strategic voting. We illustrate the procedure using post-electoral survey data from Spain. Our calculations indicate that strategic voting in Spain is about 2.19{\%}.},
  author       = {Artabe, Alaitz and Gardeazabal, Javier},
  url          = {https://www.cambridge.org/core/product/identifier/S1047198700013711/type/journal{\_}article},
  date         = {2014-01},
  doi          = {10.1093/pan/mpt047},
  file         = {:D$\backslash$:/Documents/Papers/Artabe, Gardeazabal - 2014 - Strategic Votes and Sincere Counterfactuals.pdf:pdf},
  issn         = {1047-1987},
  journaltitle = {Political Analysis},
  number       = {2},
  pages        = {243--257},
  title        = {{Strategic Votes and Sincere Counterfactuals}},
  volume       = {22}
}

@article{Janssen2017,
  abstract     = {The purpose of political campaigns in democracies is to provide voters with information that allows them to make “correct” choices, that is, vote for the party/candidate whose proposed policy or “position” is closest to their ideal position. In a world where political talk is often ambiguous and imprecise, it then becomes important to understand whether correct choices can still be made. In this paper we identify two elements of political culture that are key to answering this question: (i) whether or not political statements satisfy a so-called “grain of truth” assumption, and (ii) whether or not politicians make statements that are comparative, that is contain information about politicians' own positions relative to that of their adversaries. The “grain of truth” assumption means that statements, even if vague, do not completely misrepresent the true positions of the parties. We find that only when political campaigning is comparative and has a grain of truth, will voters always make choices as if they were fully informed. Therefore, the imprecision of political statements should not be a problem as long as comparative campaigning is in place.},
  author       = {Janssen, Maarten C. W. and Teteryatnikova, Mariya},
  url          = {http://link.springer.com/10.1007/s11127-017-0459-3},
  date         = {2017-09},
  doi          = {10.1007/s11127-017-0459-3},
  file         = {:D$\backslash$:/Documents/Papers/Janssen, Teteryatnikova - 2017 - Mystifying but not Misleading when does Political Ambiguity not Confuse Voters.pdf:pdf},
  issn         = {0048-5829},
  journaltitle = {Public Choice},
  number       = {3-4},
  pages        = {501--524},
  title        = {{Mystifying but not Misleading: when does Political Ambiguity not Confuse Voters?}},
  volume       = {172}
}

@article{Heese2019,
  title = {Persuasion and {{Information Aggregation}} in {{Elections}}},
  author = {Heese, Carl and Lauermann, Stephan},
  date = {2025-10},
  journaltitle = {Journal of Political Economy},
  pages = {3305-–3348},
  number = {10},
  volume = {133},
  publisher = {The University of Chicago Press},
  issn = {0022-3808},
  doi = {10.1086/736767},
  url = {https://www.journals.uchicago.edu/doi/abs/10.1086/736767},
  urldate = {2025-09-05},
  abstract = {This paper studies a large majority election with voters who have heterogeneous private preferences and exogenous private information about an unknown state of the world. We show that a Bayesian persuader can achieve any state-contingent outcome in some equilibrium by providing additional information. In this setting, without the persuader’s additional information, a version of the Condorcet jury theorem holds, in the sense that outcomes of large elections satisfy full-information equivalence. Persuasion does not require detailed knowledge of the voters’ private information, preferences, or the voting rule. It also requires almost no commitment power on the part of the persuader.}
}

@article{Kan2001,
  abstract     = {People turn out to cast their votes simply because they want to "cheer" or "boo" their favored or unfavored candidates. This expressive voting behavior is in marked contrast to the instrumental voting behavior, i.e., people vote because they perceive voting as a means of achieving a particular election outcome. In this paper we report an econometric study on voting behavior that uses data from the 1988 American National Election Study. The results reveal that the "cheering" and "booing" effects are statistically significant, and that they exert substantial influence on both turnout and voter choice. We also obtain evidence against the proposition that people turn out to vote because they consider themselves to be potentially decisive with regard to the election outcome.},
  author       = {Kan, Kamhon and Yang, C. C.},
  date         = {2001},
  doi          = {10.1023/A:1017556626815},
  file         = {:D$\backslash$:/Documents/Papers/Kan, Yang - 2001 - On Expressive Voting Evidence from the 1988 U.S. Presidential Election.pdf:pdf},
  issn         = {00485829},
  journaltitle = {Public Choice},
  number       = {3-4},
  pages        = {295--312},
  title        = {{On Expressive Voting: Evidence from the 1988 U.S. Presidential Election}},
  volume       = {108}
}

@article{Alonso2016,
  author       = {Alonso, Ricardo and Câmara, Odilon},
  url          = {http://pubs.aeaweb.org/doi/10.1257/aer.20140737},
  date         = {2016-11},
  doi          = {10.1257/aer.20140737},
  file         = {:D$\backslash$:/Documents/Papers/Alonso, C{\^{a}}mara - 2016 - Persuading Voters.pdf:pdf},
  issn         = {0002-8282},
  journaltitle = {American Economic Review},
  number       = {11},
  pages        = {3590--3605},
  title        = {{Persuading Voters}},
  volume       = {106}
}

@article{Arieli2019,
  abstract     = {We consider a multi-receiver Bayesian persuasion problem where an informed sender tries to persuade a group of receivers to adopt a certain product. The sender is allowed to commit to a signaling policy where she sends a private signal to every receiver. The utility of the sender is a function of the subset of adopters and the realized state. We first consider a setting with a binary state space and no payoff externalities among receivers. We characterize an optimal signaling policy and the maximal revenue to the sender for two different types of utility functions: supermodular, and anonymous submodular. In particular, for supermodular utilities we show that the optimal policy correlates positive recommendation to adopt the product as much as possible. We generalize these results to the case of a nonbinary state space. The result for supermodular utilities is generalized to the case where receivers have payoff externalities. We also provide a necessary and sufficient condition under which public and conditionally independent signaling policies are optimal.},
  author       = {Arieli, Itai and Babichenko, Yakov},
  url          = {https://linkinghub.elsevier.com/retrieve/pii/S0022053118302217},
  date         = {2019-07},
  doi          = {10.1016/j.jet.2019.04.008},
  file         = {:D$\backslash$:/Documents/Papers/Arieli, Babichenko - 2019 - Private Bayesian Persuasion.pdf:pdf},
  issn         = {00220531},
  journaltitle = {Journal of Economic Theory},
  keywords     = {Bayesian persuasion,Submodular utility,Supermodular utility},
  pages        = {185--217},
  title        = {{Private Bayesian Persuasion}},
  volume       = {182}
}

@article{Chan2019,
  abstract     = {A sender seeks to persuade a group of heterogeneous voters to adopt an action. We analyze the sender's information-design problem when the collective decision is made through a majority vote and voting for the action is personally costly. We show that the sender can exploit the heterogeneity in voting costs by privately communicating with the voters. Under the optimal information structure, voters with lower costs are more likely to vote for the sender's preferred action when it is the wrong choice than those with higher costs. The sender's preferred action is therefore adopted with a higher probability when private communication is allowed than when it is not. Nevertheless, the sender's preferred action cannot be adopted with probability one if no voter, as a dictator, is willing to vote for it without being persuaded.},
  author       = {Chan, Jimmy and Gupta, Seher and Li, Fei and Wang, Yun},
  publisher    = {Elsevier Inc.},
  url          = {https://doi.org/10.1016/j.jet.2018.12.008 https://linkinghub.elsevier.com/retrieve/pii/S0022053119300018},
  date         = {2019-03},
  doi          = {10.1016/j.jet.2018.12.008},
  file         = {:D$\backslash$:/Documents/Papers/Chan et al. - 2019 - Pivotal Persuasion.pdf:pdf},
  issn         = {00220531},
  journaltitle = {Journal of Economic Theory},
  keywords     = {Bayesian persuasion,Information design,Private persuasion,Strategic voting},
  pages        = {178--202},
  title        = {{Pivotal Persuasion}},
  volume       = {180}
}

@article{Felsenthal1985,
  abstract     = {The theoretical literature on voting behavior has shown that a rational voter may sometimes decide to vote for a candidate or party that does not constitute his or her first preference. Such voters are traditionally called "strategic voters," in contrast to voters who act sincerely, i.e., those who always vote for their first preference regardless of how others are likely to vote. After discussing some of the problems associated with the definition of these two types of voters and suggesting a new operational definition, some attitudes and characteristics of these two types of voters are investigated. It was found that strategic voters constitute a very small percentage of the entire electorate, that their education level is significantly higher than that of sincere voters, that they tend more often to believe that polls influence voters' decisions and hence tend to delay their own final voting decision, that they tend more often than sincere voters to support small parties but do not tend more often than sincere voters to switch the party they decide to support from one election to the next, and that there is no significant difference between them and sincere voters regarding which governing coalition should/will form following an election. {©} 1985 Agathon Press, Inc.},
  author       = {Felsenthal, Dan S. and Brichta, Avraham},
  date         = {1985},
  doi          = {10.1007/BF00987208},
  file         = {:D$\backslash$:/Documents/Papers/Felsenthal, Brichta - 1985 - Sincere and Strategic Voters An Israeli Study.pdf:pdf},
  issn         = {01909320},
  journaltitle = {Political Behavior},
  number       = {4},
  pages        = {311--324},
  title        = {{Sincere and Strategic Voters: An Israeli Study}},
  volume       = {7}
}

@article{Hamlin2011,
  abstract     = {A growing literature has focused attention on ‘expressive' rather than ‘instrumental' behaviour in political settings, particularly voting. A common criticism of the expressive idea is that it is ad hoc and lacks both predictive and normative bite. No clear definition of expressive behaviour has gained wide acceptance yet, and no detailed understanding of the range of foundations of specific expressive motivations has emerged. This article provides a foundational discussion and definition of expressive behaviour accounting for a range of factors. The content of expressive choice – distinguishing between identity-based, moral and social cases – is discussed and related to the specific theories of expressive choice in the literature. There is also a discussion of the normative and institutional implications of expressive behaviour.},
  author       = {Hamlin, Alan and Jennings, Colin},
  url          = {https://www.cambridge.org/core/product/identifier/S0007123411000020/type/journal{\_}article},
  date         = {2011-07},
  doi          = {10.1017/S0007123411000020},
  file         = {:D$\backslash$:/Documents/Papers/Hamlin, Jennings - 2011 - Expressive Political Behaviour Foundations, Scope and Implications(2).pdf:pdf},
  issn         = {0007-1234},
  journaltitle = {British Journal of Political Science},
  number       = {3},
  pages        = {645--670},
  title        = {{Expressive Political Behaviour: Foundations, Scope and Implications}},
  volume       = {41}
}

@article{Grossman1981,
  author       = {Grossman, Sanford J.},
  url          = {http://www.journals.uchicago.edu/doi/10.1086/466995 https://www.journals.uchicago.edu/doi/10.1086/466995},
  date         = {1981-12},
  doi          = {10.1086/466995},
  file         = {:D$\backslash$:/Documents/Papers/Grossman - 1981 - The Informational Role of Warranties and Private Disclosure about Product Quality.pdf:pdf},
  issn         = {0022-2186},
  journaltitle = {The Journal of Law and Economics},
  number       = {3},
  pages        = {461--483},
  title        = {{The Informational Role of Warranties and Private Disclosure about Product Quality}},
  volume       = {24}
}

@article{Milgrom1981,
  abstract     = {This is an article about modeling methods in information economics. A notion of "favorableness" of news is introduced, characterized, and applied to four simple models. In the equilibria of these models, (1) the arrival of good news about a firm's prospects always causes its share price to rise, (2) more favorable evidence about an agent's effort leads the principal to pay a larger bonus, (3) buyers expect that any product information withheld by a salesman is unfavorable to his product, and (4) bidders figure that low bids by their competitors signal a low value for the object being sold.},
  author       = {Paul R. Milgrom},
  url          = {https://www.jstor.org/stable/3003562?origin=crossref http://www.jstor.org/stable/3003562?origin=crossref},
  date         = {1981},
  doi          = {10.2307/3003562},
  file         = {:D$\backslash$:/Documents/Papers/Milgrom - 1981 - Good News and Bad News Representation Theorems and Applications.pdf:pdf},
  issn         = {0361915X},
  journaltitle = {The Bell Journal of Economics},
  number       = {2},
  pages        = {380},
  title        = {{Good News and Bad News: Representation Theorems and Applications}},
  volume       = {12}
}

@book{Hillygus2014,
  author    = {Hillygus, D. Sunshine and Shields, Todd G.},
  publisher = {Princeton University Press},
  url       = {https://lead.to/amazon/com/ex-mendeley-com.html?key=B003F24JAM},
  date      = {2014-04},
  keywords  = {Communication {\&} Media Studies,Elections,Political Parties,Specialty Boutique,United States Executive Government},
  pages     = {267},
  title     = {{The Persuadable Voter: Wedge Issues in Presidential Campaigns}}
}

@book{Page1978,
  title     = {{Choices and Echoes in Presidential Elections: Rational Man and Electoral Democracy}},
  author    = {Page, Benjamin I},
  year      = {1978},
  publisher = {University of Chicago Press}
}

@book{Brennan1993,
  abstract  = {Do voters in large scale democracies reliably vote for the electoral outcomes most in their interest? Much of the literature on voting predicts that they do, but this book argues that fully rational voters will not, in fact, consistently vote for the political outcomes they prefer. The authors offer a theory of voting which they term an 'expressive' theory of electoral politics. This theory is shown to be more coherent and more consistent with actual observed voting behaviour. This important book offers a compelling challenge to the central premises of the prevailing theories of voting behaviour.},
  author    = {Brennan, Geoffrey and Lomasky, Loren},
  publisher = {Cambridge University Press},
  url       = {https://www.cambridge.org/core/product/identifier/9781139173544/type/book},
  date      = {1993-03},
  isbn      = {0521585244},
  title     = {{Democracy and Decision: The Pure Theory of Electoral Preference}}
}
    \end{singlespace}

    \appendix

    \invisiblesection{}
    \section*{Appendix}\label{appendix}

    \begin{lemma}\label{lemma:SC2-appendix}
        Let \( W \subseteq X \) and suppose that voter $w \in V$ is more extreme than $v \in V$. Then:
        \begin{align*}
            \int_{W \cup A_v} \alpha_v(x) \, d\mu_0(x) \geq 0
            \quad \implies \quad
            \int_{W \cup A_w} \alpha_w(x) \, d\mu_0(x) \geq 0.
        \end{align*}
        ~\\[-2\baselineskip]
    \end{lemma}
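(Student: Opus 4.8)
The plan is to reduce the claim to a single application of the single-crossing condition \eqref{SC2:more-extreme} combined with the defining sign property of the approval set, exploiting the fact that enlarging the integration domain from $W \cup A_v$ to $W \cup A_w$ only adds points of $A_w$, where $\alpha_w$ is non-negative. I would argue the case of right voters, $0 < v < w$; the case $w < v < 0$ is entirely symmetric. Recall that \eqref{SC2:more-extreme} was already shown (in the proof of the representative-voter lemma) to imply $A_v \subseteq A_w$, so $W \cup A_v \subseteq W \cup A_w$, and the two domains in the statement differ only on $E := (W \cup A_w) \setminus (W \cup A_v)$.

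First I would handle the common region. Since $A_v = [0, \ceil{A_v}]$ is a nondegenerate interval and $\mu_0$ has full support, $\mu_0(W \cup A_v) \geq \mu_0(A_v) > 0$, so the conditional belief $\mu := \mu_0(\,\cdot \mid W \cup A_v)$ is well defined; because $\mu_0$ is atomless while $A_v \setminus \{0\}$ has positive measure, $\mu \neq \delta_0$. The hypothesis $\int_{W \cup A_v} \alpha_v \, d\mu_0 \geq 0$ is precisely $\mathbbm{E}_\mu[\alpha_v] \geq 0$, so \eqref{SC2:more-extreme} gives $\mathbbm{E}_\mu[\alpha_w] \geq 0$, that is,
\[
    \int_{W \cup A_v} \alpha_w \, d\mu_0 \geq 0 .
\]

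Next I would dispose of the extra region $E$ by sign alone. Any $x \in E$ lies in $W \cup A_w$ but not in $W \cup A_v$; not being in $W$ forces $x \in A_w$, hence $\alpha_w(x) \geq 0$ by the definition $A_w = \{ x \in X \sep \alpha_w(x) \geq 0 \}$. Thus $\int_E \alpha_w \, d\mu_0 \geq 0$, and since $W \cup A_v$ and $E$ partition $W \cup A_w$,
\[
    \int_{W \cup A_w} \alpha_w \, d\mu_0 = \int_{W \cup A_v} \alpha_w \, d\mu_0 + \int_{E} \alpha_w \, d\mu_0 \geq 0,
\]
which is the desired conclusion.

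The argument is short, so the main obstacle is conceptual rather than computational: one must notice that the enlargement from $A_v$ to $A_w$ takes place entirely inside $w$'s own approval set, which is exactly what allows \eqref{SC2:more-extreme} to be invoked only on the shared domain $W \cup A_v$ while the remainder is controlled purely by the non-negativity of $\alpha_w$ on $A_w$. The only technical points requiring care are the two regularity checks that make \eqref{SC2:more-extreme} applicable---that $\mu_0(W \cup A_v) > 0$ and that the induced conditional is not the excluded belief $\delta_0$---both of which follow immediately from $\mu_0$ having full support and no atoms together with $A_v$ being a nondegenerate interval.
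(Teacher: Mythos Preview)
Your proof is correct and follows essentially the same route as the paper: both arguments use \eqref{SC2:more-extreme} to pass from $\int_{W\cup A_v}\alpha_v\,d\mu_0\geq 0$ to $\int_{W\cup A_v}\alpha_w\,d\mu_0\geq 0$, then add back the leftover region $E=(W\cup A_w)\setminus(W\cup A_v)=A_w\cap W^c\cap A_v^c\subseteq A_w$ on which $\alpha_w\geq 0$. Your write-up is in fact slightly more careful than the paper's, since you explicitly verify that the conditional belief $\mu_0(\cdot\mid W\cup A_v)$ is well defined and not equal to $\delta_0$ before invoking \eqref{SC2:more-extreme}.
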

    \begin{proof}
        By \eqref{SC2:more-extreme}, $\alpha_v(x) \geq 0 \implies \alpha_w(x) \geq 0$ for all $x \in X \smallsetminus \{ 0 \}$, so $A_v \subseteq A_w$. Next, write $W \cup A_w$ as a partition into disjoint sets $W \cup A_v$ and $A_w \cap W^c \cap A_v^c$. Therefore,
        \[
            \int_{W \cup A_w} \alpha_w d\mu_0 =
            \int_{W \cup A_v} \alpha_w d\mu_0 + \int_{A_w \cap W^c \cap A_v^c} \alpha_w d\mu_0,
        \]
        where the last term is non-negative because $\alpha_w(x) \geq 0$ for all $x \in A_w$. Consequently, $\int_{W \cup A_v} \alpha_v d\mu_0 \geq 0$ implies $\int_{W \cup A_v} \alpha_w d\mu_0 \geq 0$ (by \ref{SC2:more-extreme}), which implies $\int_{W \cup A_w} \alpha_w d\mu_0 \geq 0$.\qedhere
    \end{proof}

    \subsubsection*{ Proof of \cref{lemma:aux-solution} }

    If $\int\limits_l^r \alpha_v d\mu_0 \geq 0$, then $I_v(l,r) = [l,r]$ solves Problem \eqref{eqn:aux_prob}. We thus assume for the remainder of the proof that $\int\limits_l^r \alpha_v d\mu_0 < 0$.

    We first show that the set $I_v(l,r)$ is well-defined. Since $\alpha_v(x)$ is strictly quasiconcave, the set $S(d) := \{ x \in [l,r] \sep \alpha_v(x) \geq d \}$ is convex (i.e., is an empty set, a point, or a closed interval) for any $d \in \mathbbm{R}$. Furthermore, $S(d)$ expands as $d$ decreases,  i.e., if $d_2 < d_1$, then $S(d_1) \subseteq S(d_2)$; the inclusion is strict unless $S(d_1) = [l,r]$. Now, let $F(d) := \int_{ S(d) } \alpha_v d\mu_0$. By the dominated convergence theorem, $F(d)$ is continuous  ($\alpha_v(x)$ is continuous and bounded, while $\mu_0$ is a finite measure). Observe that $F(0) > 0$ since $S(0) = \{ x \in [l,r] \supseteq A_v \sep \alpha_v(x) \geq 0 \} = A_v$; here, $A_v \subseteq [l,r]$ because $l \leq \floor{A_v} < \ceil{A_v} \leq r$. Next, let $\underline{d} := \min_{x \in [l,r]} \alpha_v(x) = \min\{ \alpha_v(l), \alpha_v(r) \}$; note that $S(d) = [l,r]$ if and only if $d \leq \underline{d}$ and thus $F(\underline{d}) = \int\limits_l^r \alpha_v (x) d\mu_0(x) < 0$. Observe that $F(d)$ is strictly increasing in $d$ for $d \in [\underline{d},0]$. Indeed, if $\underline{d} \leq d_2 < d_1 \leq 0$, then
    \begin{align*}
        F(d_2) - F(d_1) = \int_{ S(d_2) \smallsetminus S(d_1) } \alpha_v d\mu_0 < 0,
    \end{align*}
    since $\alpha_v(x) < 0$ for all $x \in S(d_2) \smallsetminus S(d_1)$ and $S(d_1) \subset S(d_2)$ if $\underline{d} \leq d_2 < d_1$. By the intermediate value theorem, there exists a unique $d^* \in (\underline{d},0)$ such that $F(d^*) = 0$.

    We now show that all solutions to Problem~\eqref{eqn:aux_prob} agree with \( I^* \coloneqq S(d^*) = I_v(l,r) \) \(\mu_0\)-almost everywhere. Let \( \widetilde{I} \subseteq [l, r] \) be an arbitrary solution. The obedience constraint binds for $I^*$ and holds for $\widetilde{I}$, therefore:
    \begin{align*}
        \int_{I^* \cap \widetilde{I}} \alpha_v d\mu_0 + \int_{I^* \cap \widetilde{I}^c} \alpha_v d\mu_0 = 0
        \quad                                             & \text{ and } \quad
        \int_{ \widetilde{I} \cap I^*} \alpha_v d\mu_0 + \int_{\widetilde{I} \cap (I^*)^c} \alpha_v d\mu_0 \geq 0
        \\ \implies
        \int_{\widetilde{I} \cap (I^*)^c} \alpha_v d\mu_0 & \geq \int_{I^* \cap \widetilde{I}^c} \alpha_v d\mu_0.
    \end{align*}
    Now, since $I^* = S(d^*)$, we have $\alpha_v (x_1) < d^* \leq \alpha_v(x_2)$ for all $x_1 \in (I^*)^c$ and $x_2 \in I^*$. We thus obtain:
    \begin{align*}
        d^* \cdot \mu_0( \widetilde{I} \cap (I^*)^c) & \geq d^* \cdot \mu_0( I^* \cap \widetilde{I}^c )
        \\ \implies
        \mu_0( \widetilde{I} \cap (I^*)^c)           & \leq  \mu_0( I^* \cap \widetilde{I}^c )
        \\ \implies
        \mu_0( \widetilde{I} )                       & \leq  \mu_0( I^* ),
    \end{align*}
    since $d^* < 0$; these inequalities are strict unless $\mu_0( \widetilde{I} \cap (I^*)^c ) =\mu_0( I^* \cap \widetilde{I}^c ) = 0$. Therefore, $\widetilde{I}$ is a solution if and only if it agrees with $I^*$ $\mu_0$-almost everywhere.

    \begin{lemma}\label{lemma:IL-appendix}
        Let $I_L(l,1) = [a^*,b^*]$ be the interval solution to Problem \eqref{eqn:aux_prob} for voter $L<0$ with $l \in [-1,\floor{A_L}]$ and $r = 1$. Suppose that $\int_l^1 \alpha_L d\mu_0 < 0$. Let $S(d) := \{ x \in [l,1] \sep \alpha_L(x) \geq d \}$ and $d^* \in ( \min\{ \alpha_L(l), \alpha_L(1) \},0 )$ be the unique solution to $\int_{S(d^*)} \alpha_L(x) d\mu_0 = 0$. Then:
        \begin{enumerate}
            \item $a^* \in (l,\floor{A_L})$ and $b^* \in (0,1)$ if $d^* > \max\{ \alpha_L(l),\alpha_L(1) \}$;
            \item $a^* \in (l,\floor{A_L})$ and $b^* = 1$ if $\alpha_L(l) < d^* \leq \alpha_L(1)$;
            \item $a^* = l$ and $b^* \in (0,1)$ if $\alpha_L(l) \geq d^* > \alpha_L(1)$.
        \end{enumerate}
    \end{lemma}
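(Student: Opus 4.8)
The plan is to exploit the single-peaked shape of $\alpha_L$ to determine the left and right endpoints of $S(d^*) = [a^*, b^*]$ independently, and then to read off the three cases from how $d^*$ compares to $\alpha_L(l)$ and $\alpha_L(1)$. By \cref{lemma:aux-solution} we already know $I_L(l,1) = S(d^*)$, so it suffices to locate the two endpoints of $S(d^*)$. Throughout I would use three facts: from \cref{ass1:single-peaked}, $\alpha_L$ is strictly increasing on $[-1, L)$ and strictly decreasing on $(L, 1]$; the peak satisfies $\floor{A_L} < L < 0$ with $\alpha_L(\floor{A_L}) = \alpha_L(0) = 0$; and the hypothesis gives $d^* < 0$. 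Note that $l$ lies on the increasing branch (as $l \leq \floor{A_L} < L$) while $1$ lies on the decreasing branch (as $1 > 0 > L$).

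For the left endpoint I would split on the sign of $\alpha_L(l) - d^*$. If $\alpha_L(l) \geq d^*$, then $l \in S(d^*)$ and, by monotonicity on the increasing branch, $a^* = l$. If instead $\alpha_L(l) < d^*$, then since $d^* < 0 < \alpha_L(L)$ the intermediate value theorem yields a unique point $a^* \in (l, L)$ with $\alpha_L(a^*) = d^*$; moreover $a^* < \floor{A_L}$, because $\alpha_L(a^*) = d^* < 0 = \alpha_L(\floor{A_L})$ and $\alpha_L$ is strictly increasing there, so $a^* \in (l, \floor{A_L})$. Symmetrically, for the right endpoint I would split on the sign of $\alpha_L(1) - d^*$: if $\alpha_L(1) \geq d^*$ then $b^* = 1$, and if $\alpha_L(1) < d^*$ then there is a unique $b^* \in (L, 1)$ with $\alpha_L(b^*) = d^*$, which satisfies $b^* > 0$ (since $\alpha_L(b^*) = d^* < 0 = \alpha_L(0)$ and $\alpha_L$ strictly decreases past $0$), so $b^* \in (0, 1)$.

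The three cases then follow directly. Case 1 ($d^* > \max\{\alpha_L(l), \alpha_L(1)\}$) puts both endpoints in the interior regime, giving $a^* \in (l, \floor{A_L})$ and $b^* \in (0,1)$. Case 2 ($\alpha_L(l) < d^* \leq \alpha_L(1)$) gives $a^* \in (l, \floor{A_L})$ from the left analysis and $b^* = 1$ from the right. Case 3 ($\alpha_L(1) < d^* \leq \alpha_L(l)$) gives $a^* = l$ and $b^* \in (0,1)$. I would close by remarking that these cases are exhaustive over the stated domain $d^* \in (\min\{\alpha_L(l),\alpha_L(1)\}, 0)$: comparing $\alpha_L(l)$ with $\alpha_L(1)$ partitions this interval into exactly the three listed subcases, with the half-open conventions arranged so that the boundary value $d^* = \alpha_L(1)$ sits in Case 2 (where the right crossing is exactly at $1$) and $d^* = \alpha_L(l)$ sits in Case 3 (where the left crossing is exactly at $l$).

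I do not anticipate a genuine obstacle here; this is a structural fact about upper contour sets of a strictly quasiconcave function. The only points needing care are invoking \emph{strict} monotonicity on each branch to obtain the strict interior containments $a^* < \floor{A_L}$ and $b^* > 0$ (rather than weak ones), and handling the half-open boundary conventions so that the three cases tile the admissible range of $d^*$ without gaps or overlaps.
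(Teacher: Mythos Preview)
Your proposal is correct and follows essentially the same approach as the paper: both identify $[a^*,b^*]$ with $S(d^*)$ via \cref{lemma:aux-solution}, then locate each endpoint by applying the intermediate value theorem on the increasing and decreasing branches of $\alpha_L$ and comparing $d^*$ to the boundary values $\alpha_L(l)$ and $\alpha_L(1)$. Your write-up is somewhat more explicit than the paper's (which relegates the IVT justification to a footnote and does not spell out the strict-monotonicity step yielding $a^*<\floor{A_L}$ and $b^*>0$), but the underlying argument is identical.
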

    \begin{proof}
        Note that we showed the existence of $d^* \in ( \min\{ \alpha_L(l), \alpha_L(1) \},0 )$ in the proof of \cref{lemma:aux-solution}. Also note that $[a^*,b^*] = S(d^*)$. Furthermore, recall that $\alpha_L( \floor{A_L} ) = 0$; since $d^* < 0$, it follows that $d^* < \alpha_L( \floor{A_L} )$. Consequently, if $\alpha_L(l) < d^*$ (as in cases 1 and 2), then $\alpha_L(l) < \alpha_L( \floor{A_L} )$, $l < \floor{A_L}$ (since $\alpha_L$ is strictly increasing on $[-1,\floor{A_L}]$) and thus $a^* \in (l, \floor{A_L})$ from cases 1 and 2 is well-defined. We have:
        \begin{enumerate}
            \item If $d^* > \max\{ \alpha_L(l),\alpha_L(1) \}$, let $a^* \in (l,\floor{A_L})$ and $b^* \in (0,1)$ solve $\alpha_L(a^*) = \alpha_L(b^*) = d^*$.\footnote{Such $a^*$ and $b^*$ exist by the intermediate value theorem---for instance, for $a^*$, we have that $\alpha_L(x)$ is continuous and strictly increasing on $[l,\floor{A_L}]$, and $\alpha_L(l) < d^* < 0 = \alpha_L(\floor{A_L})$}
                  Then, $S(d^*) = [a^*,b^*]$ as desired.
            \item If $\alpha_L(l) < d^* \leq \alpha_L(1)$, let $a^* \in (l,\floor{A_L})$ solve $\alpha_L(a^*) = d^*$ so that $S(d^*) = [a^*,1]$.
            \item If $\alpha_L(l) \geq d^* > \alpha_L(1)$, let $b^* \in (0,1)$ solve $\alpha_L(b^*) = d^*$ so that $S(d^*) = [l,b^*]$.\qedhere
        \end{enumerate}
    \end{proof}

    \begin{lemma}\label{lemma:Ls-constraint-appendix}
        Let $-1 \leq l < l' \leq \floor{A_L}$. Also, let $I_L(l,1) = [a^*,b^*]$ and $I_L(l',1) = [a',b']$ be the interval solutions to Problem \eqref{eqn:aux_prob} for voter $L<0$ with $l,l'$ (resp.) and $r=1$. Then, $a^* \leq a'$ and $b^* \leq b'$.
    \end{lemma}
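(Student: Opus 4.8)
The plan is to express each solution as a \emph{binding upper contour set} of $\alpha_L$ and compare the two cutoff levels. By \cref{lemma:aux-solution}, if $\int_l^1 \alpha_L\,d\mu_0 \ge 0$ the solution is the whole interval $[l,1]$; since $l'\le\floor{A_L}$ forces $\alpha_L<0$ on $[l,l')$, we also have $\int_{l'}^1\alpha_L\,d\mu_0 \ge \int_l^1\alpha_L\,d\mu_0$, so that degenerate case is immediate. I therefore assume the constraints bind and write $[a^*,b^*]=S_l(d^*)$ and $[a',b']=S_{l'}(d')$, where $S_t(d):=\{x\in[t,1]: \alpha_L(x)\ge d\}$ and the cutoffs $d^*,d'<0$ are pinned down by $\int_{S_t(d)}\alpha_L\,d\mu_0=0$. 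Throughout I use the single fact that $\alpha_L(x)<0$ for every $x\in[l,l')$, because $l'\le\floor{A_L}$.

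The first step is to order the cutoffs: $d^*\ge d'$. For any level $d$ the two contour sets differ only on the removed strip, $S_l(d)=S_{l'}(d)\,\sqcup\,\big(S_l(d)\cap[l,l')\big)$, and the extra piece sits in $[l,l')$ where $\alpha_L<0$; hence $\int_{S_l(d)}\alpha_L\,d\mu_0\le\int_{S_{l'}(d)}\alpha_L\,d\mu_0$ for every $d$. Evaluating at $d=d^*$ and using $\int_{S_l(d^*)}\alpha_L\,d\mu_0=0$ gives $\int_{S_{l'}(d^*)}\alpha_L\,d\mu_0\ge 0=\int_{S_{l'}(d')}\alpha_L\,d\mu_0$. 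The map $d\mapsto\int_{S_{l'}(d)}\alpha_L\,d\mu_0$ is strictly increasing (it climbs from $\int_{l'}^1\alpha_L\,d\mu_0<0$ up to $\int_{A_L}\alpha_L\,d\mu_0>0$ as $d$ rises to $0$), exactly by the continuity and monotonicity machinery in the proof of \cref{lemma:aux-solution}, so $d^*\ge d'$ follows.

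Given the cutoff ordering, the right endpoint is straightforward: both $b^*$ and $b'$ are positive and equal the rightmost point at which $\alpha_L$ meets the respective cutoff along its strictly decreasing branch on $[0,1]$, so the larger cutoff $d^*\ge d'$ yields $b^*\le b'$ (trivially if $b'=1$). The left endpoint is the delicate step, and I expect it to be the main obstacle: the ordering $d^*\ge d'$ by itself pushes the \emph{interior} left crossing of $\alpha_L$ to the right, which would give the \emph{wrong} inequality, so I cannot argue $a^*\le a'$ from the cutoffs. Instead $a'$ is controlled by the domain boundary, and I would use a feasibility-transfer dichotomy. If $a^*\le l'$, then $a^*\le l'\le a'$ because $[a',b']\subseteq[l',1]$. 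If instead $a^*>l'$, then $S_l(d^*)=[a^*,b^*]\subseteq[l',1]$, so this set is feasible for Problem~\eqref{eqn:aux_prob} on the smaller domain $[l',1]$; being optimal on the larger domain $[l,1]$, and the feasible region for $[l',1]$ being contained in that for $[l,1]$, it is optimal for $[l',1]$ as well, whence $[a',b']=[a^*,b^*]$ and in particular $a^*=a'$. Either way $a^*\le a'$, completing the proof; the non-binding case for $[l',1]$ is absorbed here too, since then $b'=1\ge b^*$ and optimality on $[l,1]$ again excludes $a^*>l'$, forcing $a^*\le l'=a'$.
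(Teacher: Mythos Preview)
Your proof is correct and rests on the same core idea as the paper's---compare the two cutoff levels $d^*$ and $d'$, then read off the endpoints---but you organize it more economically. The paper invokes \cref{lemma:IL-appendix} and runs a three-case split on where $d^*$ sits relative to $\alpha_L(l)$ and $\alpha_L(1)$, with further sub-cases depending on whether $l'\le a^*$ or $\int_{l'}^1\alpha_L\,d\mu_0\ge 0$; in each branch it either identifies the two solutions or computes $d'<d^*$ from explicit integrals. Your decomposition $S_l(d)=S_{l'}(d)\sqcup\bigl(S_l(d)\cap[l,l')\bigr)$ delivers $d^*\ge d'$ in one stroke, and your feasibility-transfer dichotomy for the left endpoint (either $a^*\le l'$ trivially, or $[a^*,b^*]\subseteq[l',1]$ is optimal on the smaller domain and hence equals $[a',b']$ by the $\mu_0$-a.e.\ uniqueness in \cref{lemma:aux-solution}) packages the paper's sub-cases 1(i)/1(ii)/2/3 into a single optimization principle. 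One small point worth tightening: the map $d\mapsto\int_{S_{l'}(d)}\alpha_L\,d\mu_0$ is strictly increasing only on $[\min\{\alpha_L(l'),\alpha_L(1)\},0]$ and constant (at a negative value) below; your inequality $\int_{S_{l'}(d^*)}\alpha_L\,d\mu_0\ge 0$ already forces $d^*$ into that range, so the conclusion $d^*\ge d'$ stands, but it is worth making that domain check explicit.
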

    \begin{proof}
        Observe that $-1 \leq l < l' \leq \floor{A_L}$ implies $\int_{l}^1 \alpha_L d\mu_0  < \int_{l'}^1 \alpha_L d\mu_0$. Therefore, if $\int_l^1 \alpha_L d\mu_0 \geq 0$, then $[a^*,b^*] = [l,1]$ and $[a',b'] = [l',1]$, and the claim is true. For the rest of the proof, assume $\int_l^1 \alpha_L d\mu_0 < 0$. Let $S(d,l) := \{ x \in [l,1] \sep \alpha_L(x) \geq d \}$ and $d^* \in ( \min\{ \alpha_L(l), \alpha_L(1) \},0 )$ be the unique solution to $\int_{S(d^*,l)} \alpha_L(x) d\mu_0 = 0$. Also, let $d' \in ( \min\{ \alpha_L(l'), \alpha_L(1) \},0 )$ be the unique solution to $\int_{S(d',l')} \alpha_L(x) d\mu_0 = 0$ if it exists (i.e., if $\int_{l'}^1 \alpha_L d\mu_0 < 0$). From \cref{lemma:IL-appendix}, we have three possible cases:
        \begin{enumerate}
            \item $d^* > \max\{ \alpha_L(l),\alpha_L(1) \}$ so that $a^* \in (l,\floor{A_L})$, $b^* \in (0,1)$ and $\alpha_L(a^*) = \alpha_L(b^*) = d^*$:
                  \begin{itemize}
                      \item[(i)] $\alpha_L(l') \leq d^* = \alpha_L(a^*) \iff l' \leq a^*$, so $S(d^*,l) \subseteq [l',1]$, $S(d',l') = S(d^*,l)$, and thus $[a',b'] = [a^*,b^*]$.
                      \item[(ii)] $d^* = \alpha_L(a^*) < \alpha_L(l') \iff a^* < l'$. If $\int_{l'}^1 \alpha_L d\mu_0 \geq 0$, then $[a',b'] = [l',1]$ and the claim holds since $a^* < l'$ and $b^* < 1$. Else, if $\int_{l'}^1 \alpha_L d\mu_0 < 0$, then $d' \in ( \min\{ \alpha_L(l'),\alpha_L(1) \},0 )$ exists, and:
                            \[
                                \int_{S(d',l')} \alpha_L d\mu_0 =0 = \int_{S(d^*,l)}\alpha_L d\mu_0 = \int_{a^*}^{b^*}\alpha_L d\mu_0 < \int_{l'}^{b^*} \alpha_Ld\mu_0 = \int_{S(d^*,l')} \alpha_L d\mu_0,
                            \]
                            so that $d' < d^*$
                            since the function $\int_{S(d,l')}\alpha_L d\mu_0$ is strictly increasing in $d$ (see the proof of \cref{lemma:aux-solution}). Therefore, we have
                            \[
                                \min\{ \alpha_L(l'),\alpha_L(1) \} = \alpha_L(1) < d' < d^* = \alpha_L(a^*) < \alpha_L(l'),
                            \]
                            and by \cref{lemma:IL-appendix} (case 3), $[a',b'] = [l',b']$, where $\alpha_L(b') = d' < d^* = \alpha_L(b^*) \iff b' > b^*$, and the claim holds.
                  \end{itemize}
            \item $\alpha_L(l) < d^* \leq \alpha_L(1)$ so that $a^* \in (l,\floor{A_L})$, $b^* = 1$ and $\int_{a^*}^1 \alpha_L d\mu_0 = 0$. Then, if $a^* \leq l'$, we have $\int_{l'}^1 \alpha_L d\mu_0 > 0$ so that $[a',b'] = [l',1]$. If, on the other hand, $l < l' < a^*$, then $\alpha_L(l') < \alpha_L(a^*) = d^*$, so $S(d^*,l) \subseteq [l',1]$, $S(d',l') = S(d^*,l)$, and thus $[a',b'] = [a^*,b^*]$. Overall, in case 2 we have $[a',b'] = [ \max\{ a^*,l'  \},1]$ and the claim holds.
            \item If $\alpha_L(l) \geq d^* > \alpha_L(1)$, then $a^* = l$, $b^* \in (0,1)$, $\alpha_L(b^*) = d^*$ and $\int_{l}^{b^*} \alpha_L d\mu_0 = 0$. If $\int_{l'}^1 \alpha_L d\mu_0 \geq 0$, then $[a',b'] = [l',1]$ and the claim holds. If $\int_{l'}^1 \alpha_L d\mu_0 < 0$, then $\alpha_L(l') > \alpha_L(l) \geq d^* > d' > \alpha_L(1)$ by the same argument as in case 1.(ii). Consequently, by \cref{lemma:IL-appendix} (case 3), $[a',b'] = [l',b']$, where $\alpha_L(b') = d' < d^* = \alpha_L(b^*) \iff b' > b^*$. Overall, in case 3 we have $[a',b'] = [l',b']$, where $l' > l$, $b' > b$, and the claim holds.\qedhere
        \end{enumerate}
    \end{proof}

    \subsubsection*{Proof of \cref{prop:baseline-TA}}

    \noindent\ul{\bf Step 1}: Show that Problem \eqref{problem:AUX-TA}
    admits a solution
    \begin{enumerate}
        \item $M_L^* = [ \floor{A_L},b_L ]$ and $M_R^* = [ a_R, \ceil{A_R} ]$ if $\floor{A_L} \leq a_R$ and $b_L \leq \ceil{A_R}$;
        \item $M_L^* = I_L( a_R,1 )$ and $M_R^* = [ a_R, \ceil{A_R} ]$ if $a_R < \floor{A_L}$ and $b_L \leq \ceil{A_R}$;
        \item $M_L^* = [ \floor{A_L},b_L ]$ and $M_R^* = I_R( -1,b_L )$ if $\floor{A_L} \leq a_R$ and $\ceil{A_R} < b_L$.
    \end{enumerate}

    Suppose that $(\widetilde{M}_L,\widetilde{M}_R)$ such that $A_v \subseteq \widetilde{M}_v$ for each $v$ is an arbitrary solution to Problem \eqref{problem:AUX-TA}.\footnote{If a pair $(M_L,M_R)$ is a solution to \eqref{problem:AUX-TA}, then the pair $(M_L \cup A_L,M_R \cup A_R)$ is also a solution. We consider solutions that include voters' approval sets for ease of exposition.} Let $\widetilde{W} := \widetilde{M}_L \cap \widetilde{M}_R$, $W^* := M_L^* \cap M_R^*$, and:
    \begin{align*}
         & Z_L := [-1,0] \cap W^* \cap \widetilde{W}^c, \quad   & Z_R := [0, 1] \cap W^* \cap \widetilde{W}^c,  \\
         & Y_L := [-1,0] \cap \widetilde{W} \cap (W^*)^c, \quad & Y_R := [0,1] \cap \widetilde{W} \cap (W^*)^c.
    \end{align*}
    We will use the voters' obedience constraints to show that $\mu_0(W^*) \geq \mu_0(\widetilde{W})$, which implies that $(M_L^*,M_R^*)$ is a solution to Problem \eqref{problem:AUX-TA}. By contradiction, suppose that $ \mu_0(\widetilde{W}) > \mu_0(W^*) \iff \mu_0(Y_L) + \mu_0(Y_R) > \mu_0(Z_L) + \mu_0(Z_R)$. We will address cases 1 and 2 (each with multiple subcases), in which $M_R^* = [a_R, \lceil A_R \rceil]$; case 3 is proved analogously to case 2.
    \begin{itemize}
        \item[1.1] $-1 < \floor{A_L} \leq a_R$ and $b_L \leq \ceil{A_R} < 1$. In this case, $M_L^* = I_L(\floor{A_L},1) = [ \floor{A_L},b_L ]$; also, $Z_L = [a_R,0] \cap \widetilde{W}^c$, $Y_L = [-1,a_R) \cap \widetilde{W}$, $Z_R = [0,b_L] \cap \widetilde{W}^c$, and $Y_R = (b_L,1] \cap \widetilde{W}$.

        First, note that $M_L^*$ binds $L$'s obedience constraint since $b_L < 1$, we have $\int_{\floor{A_L}}^{1} \alpha_L d\mu_0 < 0$ and $\int_{\floor{A_L}}^{b_L} \alpha_L d\mu_0 = 0$. Similarly, $M_R^*$ binds $R$'s obedience constraint since $-1 < a_R$.

              Then, we can partition\footnote{The objects that we call partitions in this proof are partitions $\mu_0$-almost everywhere, i.e., there could be points belonging to multiple elements of the partition.} $M_L^* = [ \floor{A_L},b_L ]$ into three sets $[\floor{A_L},0] = A_L$, $[0,b_L] \cap \widetilde{W}$ and $[0,b_L] \cap \widetilde{W}^c = Z_R$, and write down $L$'s (binding) obedience constraint for $M_L^*$ as:
              \begin{equation}\label{appendix-P2-step1-obedience-star}
                  \int_{A_L} \alpha_L d\mu_0 + \int_{[0,b_L] \cap \widetilde{W}} \alpha_L d\mu_0 + \int_{Z_R}\alpha_L d\mu_0 = 0.
              \end{equation}

              Similarly, we partition $\widetilde{M}_L$ into four sets $[-1,\floor{A_L})\cap \widetilde{M}_L$, $A_L$, $[0,b_L]\cap \widetilde{M}_L$ and $(b_L,1] \cap \widetilde{M}_L$, and write down $L$'s obedience constraint for $\widetilde{M}_L$ as:
              \begin{equation}\label{appendix:step1-1-tilde}
                  \begin{aligned}
                      0
                       & \leq
                      \int_{[-1,\floor{A_L})\cap \widetilde{M}_L} \alpha_L d\mu_0 +
                      \int_{A_L} \alpha_L d\mu_0 +
                      \int_{[0,b_L]\cap \widetilde{M}_L} \alpha_L d\mu_0 +
                      \int_{(b_L,1] \cap \widetilde{M}_L} \alpha_L d\mu_0 \\
                       & \leq
                      \int_{[-1,\floor{A_L})\cap \widetilde{W}} \alpha_L d\mu_0 +
                      \int_{A_L} \alpha_L d\mu_0 +
                      \int_{[0,b_L]\cap \widetilde{W}} \alpha_L d\mu_0 +
                      \int_{Y_R} \alpha_L d\mu_0,
                  \end{aligned}
              \end{equation}
              where the last inequality holds because $\widetilde{W} \subseteq \widetilde{M}_L$ (so that $[0,b_L]\cap \widetilde{W} \subseteq [0,b_L]\cap \widetilde{M}_L$ and $Y_R \subseteq (b_L,1] \cap \widetilde{M}_L$) and $\alpha_L$ is negative outside $A_L$. Combining \eqref{appendix-P2-step1-obedience-star} and \eqref{appendix:step1-1-tilde}, we get:
              \[
                  \int_{Y_R}\alpha_L d\mu_0 - \int_{Z_R}\alpha_L d\mu_0 \geq - \int_{[-1,\floor{A_L})\cap \widetilde{W}} \alpha_L d\mu_0 \geq 0.
              \]
              Since $Z_R \subseteq [0,b_L]$, $Y_R \subseteq (b_L,1]$, and $\alpha_L$ is strictly decreasing on $[0,1]$, we get:
              \[
                  \alpha_L(b_L) \mu_0(Y_R) \geq \int_{Y_R}\alpha_L d\mu_0
                  \geq \int_{Z_R}\alpha_L d\mu_0 \geq \alpha_L(b_L) \mu_0(Z_R),
              \]
              which implies that $\mu_0(Z_R) \geq \mu_0(Y_R)$ since $\alpha_L(b_L) < 0$. Using the same argument (i.e., by comparing the terms in $R$'s obedience constraints for $M_R^*$ and $\widetilde{M}_R$), we obtain $\mu_0(Z_L) \geq \mu_0(Y_L)$. Therefore, $\mu_0(Z_L) + \mu_0(Z_R) \geq \mu_0(Y_L) + \mu_0(Y_R)$, a contradiction.

        \item[1.2] $-1 = \floor{A_L} = a_R$ and $b_L \leq \ceil{A_R} < 1$. In this case, $L$'s constraint for $M_L^*$ is binding (since $b_L < 1$), while $R$'s constraint for $M_R^*$ may or may not bind.

              Using the fact that $M_L^*$ binds $L$'s constraint, we obtain $\mu_0(Z_R) \geq \mu_0(Y_R)$ (see step 1.1). While $R$'s constraint for $M_R^*$ may not bind, from $-1 = \floor{A_L} = a_R$ we have $M_R^* = [-1,\ceil{A_R}]$ and $[-1,0] \cap W^* = [-1,0]$, so $Y_L = \varnothing$ and $\mu_0 (Z_L) \geq 0 = \mu_0(Y_L)$. Therefore, $\mu_0(Z_L) + \mu_0(Z_R) \geq \mu_0(Y_L) + \mu_0(Y_R)$, a contradiction.

        \item[1.3] $-1 < \floor{A_L} \leq a_R$ and $b_L = \ceil{A_R} = 1$. This case is analogous to 1.2.
        \item[1.4] $-1 = \floor{A_L} = a_R$ and $b_L = \ceil{A_R} = 1$. This case is impossible because then $\int_{-1}^{1} \alpha_v d\mu_0 \geq 0 $ for each $v \in \{ L,R \}$, i.e., both voters weakly prefer to approve under the prior $\mu_0$; that contradicts \eqref{SC1:incompatible}.

        \item[2.1] $a_R \leq \floor{ I_L(-1,1) } < \floor{A_L}$ and $b_L \leq \ceil{A_R}$. In this case, $M_L^* = W^*= I_L(-1,1)$ and $M_R^* = I_R(-1,\ceil{A_R})$.\footnote{See the proof of \cref{lemma:Ls-constraint-appendix} with $l = -1$ and $l' = a_R$. If $a_R \leq \floor{I_L(-1,1)}$, then $I_L(a_R,1) = I_L(-1,1)$. If, on the other hand, $a_R > \floor{I_L(-1,1)}$, then $\floor{I_L(a_R,1)} = a_R$.} Since $W^*$ solves Problem \eqref{eqn:aux_prob} with $l=-1$ and $r = 1$ for voter $L$, which is Problem \eqref{problem:AUX-TA} without $R$'s obedience constraint, the pair $(M_L^*,M_R^*)$ also solves \eqref{problem:AUX-TA}.

        \item[2.2] $\floor{ I_L(-1,1) } < a_R < \floor{A_L}$ and $b_L \leq \ceil{A_R}$. In this case, $M_L^* = I_L(a_R,1) =: [a_R,b]$, $M_R^* = I_R(-1,\ceil{A_R}) = [a_R,\ceil{A_R}]$ and $W^* = [a_R,b]$. Also, $Z_L = [a_R,0]\cap \widetilde{W}^c$, $Y_L = [-1,a_R) \cap \widetilde{W}$, $Z_R = [0,b] \cap \widetilde{W}^c$, and $Y_R = (b,1] \cap \widetilde{W}$. It is worth mentioning that $b_L$ and $b$ are different objects: $b_L$ is the upper bound of $I_L(\floor{A_L},1) = [\floor{A_L},b_L]$, while $b$ is the upper bound of $I_L(a_R,1) = [a_R,b]$. By \cref{lemma:Ls-constraint-appendix}, we have $b \leq b_L$.
        
        Now, since $-1 \leq \floor{I_L(-1,1)} < a_R$, we have $\int_{-1}^{\ceil{A_R}}\alpha_R d\mu_0 < 0$, so $\int_{a_R}^{\ceil{A_R}}\alpha_R d\mu_0 = 0$ and $R$'s constraint for $M_R^*$ binds. From that, $\mu_0(Z_L) \geq \mu_0(Y_L)$ (see step 1.1).

              For voter $L$, two cases are possible: $b=1$ and $b < 1$. If $b = 1$, then $[a_R,1]$ satisfies $L$ and $R$'s obedience constraints, i.e., both voters prefer to approve under belief $\mu_0(\cdot \sep [a_R,1])$, which contradicts \eqref{SC1:incompatible}. Therefore, $b < 1$.

              Now, we have $\int_{a_R}^1 \alpha_L d\mu_0 < 0$ and $\int_{a_R}^{b} \alpha_L d\mu_0 =0$, i.e., $M_L^*$ binds $L$'s obedience constraint. Partition $M_L^*$ into $[a_R,\floor{A_L}) \cap \widetilde{W} =: i_1$, $[a_R,\floor{A_L}) \cap \widetilde{W}^c = Z_L \cap A_L^c$, $A_L$, $[0,b]\cap \widetilde{W} =: i_2$, $[0,b] \cap \widetilde{W}^c = Z_R$, and write down $L$'s (binding) obedience constraint for $M_L^*$ as:
              \begin{equation}\label{appendix-step2-MRstar}
                  \int_{i_1} \alpha_L d\mu_0 +
                  \int_{Z_L \cap A_L^c} \alpha_L d\mu_0 +
                  \int_{A_L} \alpha_L d\mu_0 +
                  \int_{i_2} \alpha_L d\mu_0 +
                  \int_{Z_R} \alpha_L d\mu_0
                  = 0.
              \end{equation}
              Similarly, partition $\widetilde{M}_L$ into $[-1,a_R) \cap \widetilde{M}_L$, $[a_R,\floor{A_L}) \cap \widetilde{M}_L$, $A_L$, $[0,b] \cap \widetilde{M}_L$, $(b,1] \cap \widetilde{M}_L$. Using the fact that $\widetilde{W} \subseteq \widetilde{M}_L$ and $\alpha_L$ is negative outside $A_L$, from the obedience constraint for $\widetilde{M}_L$ we obtain
              \begin{equation}\label{appendix-step2-MLtilde}
                  \int_{Y_L} \alpha_L d\mu_0 +
                  \int_{i_1} \alpha_L d\mu_0 +
                  \int_{A_L} \alpha_L d\mu_0 +
                  \int_{i_2} \alpha_L d\mu_0 +
                  \int_{Y_R} \alpha_L d\mu_0
                  \geq 0.
              \end{equation}
              Combining \eqref{appendix-step2-MRstar} and \eqref{appendix-step2-MLtilde}, we get
              \[
                  \int_{Y_L}\alpha_L d\mu_0 + \int_{Y_R}\alpha_L d\mu_0 \geq \int_{Z_L \cap A_L^c}\alpha_L d\mu_0 + \int_{Z_R}\alpha_L d\mu_0
              \]
              and, since $\alpha_L(y) \leq \alpha_L( a_R ) \leq \alpha_L(z) \leq 0$ for all $y \in Y_L$, $z \in Z_L \cap A_L^c$ and $\alpha_L(y) \leq \alpha_L(b) \leq \alpha_L(z) \leq 0$ for all $y \in Y_R$, $z \in Z_R$, we obtain
              \begin{align*}
                  \alpha_L( a_R ) \mu_0(Y_L) + \alpha_L(b) \mu_0(Y_R)
                  \geq & \alpha_L( a_R ) \mu_0(Z_L \cap A_L^c) + \alpha_L(b) \mu_0(Z_R) \\
                  =    & \alpha_L( a_R ) \mu_0(Z_L) + \alpha_L(b) \mu_0(Z_R)
                  - \alpha_L( a_R ) \mu_0(Z_L \cap A_L)                                 \\
                  \geq & \alpha_L( a_R ) \mu_0(Z_L) + \alpha_L(b) \mu_0(Z_R)
              \end{align*}
              since $\alpha_L(a_R) < 0$.

              Next, observe that if $b < 1$, then $\alpha_L(a_R) \geq \alpha_L(b)$. Indeed, since $[a_R,b] = I_L(a_R,1)$ and $\int_{a_R}^1 \alpha_L d\mu_0 < 0$, from \cref{lemma:aux-solution} we get that $[a_R,b] = \{ x \in [a_R,1] \sep \alpha_L(x) \geq d^* \}$ for some $d^* < 0$. In particular, we have $\alpha_L(x) \geq d^*$ for all $x \in [0,b]$ and $\alpha_L(x) < d^*$ for all $x \in (b,1]$, which is a non-empty set if $b < 1$. By the continuity and strict monotonicity of $\alpha_L$ on $[0,1]$, we have $\alpha_L(b) = d^*$ so that $\alpha_L(a_R) \geq d^* = \alpha_L(b)$.

              Now, dividing the last inequality by $\alpha_L(b) < 0$, we get:
              \[
                  \frac{\alpha_L(a_R)}{\alpha_L(b)} \mu_0(Y_L) + \mu_0(Y_R) \leq
                  \frac{\alpha_L(a_R)}{\alpha_L(b)} \mu_0(Z_L) + \mu_0(Z_R).
              \]
              Next, add $\left( 1-\frac{\alpha_L(a_R)}{\alpha_L(b)} \right)(\mu_0(Y_L) + \mu_0(Z_L))$ to both sides to obtain:
              \[
                  \mu_0(Y_L) + \mu_0(Y_R) + \left( 1-\frac{\alpha_L(a_R)}{\alpha_L(b)} \right) \mu_0(Z_L)
                  \leq
                  \mu_0(Z_L) + \mu_0(Z_R) + \left( 1-\frac{\alpha_L(a_R)}{\alpha_L(b)} \right) \mu_0(Y_L).
              \]
              Rearranging terms, we get:
              \begin{align*}
                  \mu_0(Z_L) + \mu_0(Z_R) \geq & \mu_0(Y_L) + \mu_0(Y_R) +
                  \left( 1-\frac{\alpha_L(a_R)}{\alpha_L(b)} \right)( \mu_0(Z_L) - \mu_0(Y_L) ) \\
                  \geq                         & \mu_0(Y_L) + \mu_0(Y_R)
              \end{align*}
              since $1-\frac{\alpha_L(a_R)}{\alpha_L(b)} \geq 0$ if $\alpha_L(a_R) \geq \alpha_L(b)$ and $\mu_0(Z_L) - \mu_0(Y_L) \geq 0$ from the previous calculations in the beginning of step 2.2. Therefore, $\mu_0(Z_L) + \mu_0(Z_R) \geq \mu_0(Y_L) + \mu_0(Y_R)$, a contradiction.
        \item[3] $\floor{A_L} \leq a_R$ and $\ceil{A_R} < b_L$. This case is analogous to case 2.
    \end{itemize}

    \noindent\ul{\bf Step 2}: Equilibrium Characterization.

    Observe that the mixed decisive coalition is $\{ L,R \}$; the set $M_L^* \cap M_R^* = [a,b]$ is an interval such that $a < 0$ and $b > 0$; the set $A_v \cup [a,b] = M_v^*$ satisfies the obedience constraint of voter $v \in \{ L,R \}$. Thus, $[a,b]$ is implementable and the equilibrium that implements is described in the proof of \cref{thm2:TA}; the equilibrium strategy of the challenger is $\sigma_{(M_L^*,M_R^*)}$

    \noindent\ul{\bf Step 3}: Show that there exists an optimal experiment that is characterized by $(M_L^*,M_R^*)$, for which the challenger's odds of winning are $\mu_0( M_L^* \cap M_R^* )$.

    The challenger's information design problem is formulated as follows.\footnote{In what follows, we employ the revelation principle (\citenop{Bergemann2016}) that allows us to restrict attention to action recommendations that are obeyed.} First, the challenger chooses and commits to an experiment, which is a measurable map $\psi: X \to \Delta \{0,1\}^2$. Next, the challenger's policy outcome $x$ is realized according to $\mu_0$ and the signals $s_L \in \{0,1\}$ and $s_R \in \{0,1\}$ are sent to voters with bliss points $L$ and $R$ (resp.) with probability $\psi( (s_L,s_R) \sep x )$. Then, voter $v \in \{ L,R \}$ privately observes her signal $s_v$, forms a posterior belief $\mu_v(\cdot \sep s_v) \in \Delta X$ using the Bayes rule, and approves after $s_v =1$ and rejects after $s_v = 0$. Let $\psi_v( s_v \sep x ) := \sum\limits_{s_{-v} \in\{0,1\} } \psi( (s_v,s_{-v}) \sep x )$ be the marginal probability that $v$ receives signal $s_v$. For $v$ to approve after signal $s_v = 1$, her net payoff from approval must be non-negative:
$$
    \int \alpha_v(x) d\mu_v(x \sep 1) \geq 0 \iff
    \int \alpha_v(x) \psi_v(1 \sep x) d\mu_0(x) \geq 0.
$$
Similarly, for $v$ to reject after signal $s_v = 0$, her expected net payoff from approval must be negative, $\int \alpha_v(x) \psi_v(0 \sep x) d\mu_0(x) < 0$. An optimal experiment maximizes the challenger's odds of winning and solves
\begin{align*}
     & \max\limits_{\psi}
    \int \psi( (1,1) \sep x )d\mu_0(x)
    \quad \text{subject to}
    \\
     & \int \alpha_v(x) \psi_v(1\sep x) d\mu_0(x) \geq 0
    \quad \text{and} \quad
    \int \alpha_v(x) \psi_v(0\sep x) d\mu_0(x) < 0,
    \quad \forall v \in \{L,R\}.
    \\[-2\baselineskip]
\end{align*}

Moreover, we can drop the less-than-zero constraints as letting $\psi_v(1 \sep x) = 1$ for each $v \in \{ L,R \}$ and $x \in A_v$ weakly increases the objective and loosens $v$'s constraints. Now, since each $\alpha_v$ is bounded, $\mu_0$ is a finite and atomless positive measure, and $X$ is a closed interval, a deterministic optimal experiment $\psi^*: X \to \{0,1\}^2$ exists (by an argument similar to one in the proof of Proposition 2 in \citenop{TitovaZhangPVI}). Next, let $C_v^* := \{ x \in X \sep \psi^*_v (1\sep x) = 1 \}$ for each $v \in \{L,R\}$ be the set of policy outcomes that $v$ is recommended to approve. Then, an optimal deterministic experiment  is characterized by a pair $(C_L^*,C_R^*)$ that solves Problem \eqref{problem:AUX-TA}. Hence, $\mu_0(M_L^* \cap M_R^*)$ is the challenger's odds of winning in the information design problem.

\subsubsection*{Proof of \cref{lemma:more-extreme-more-persuadable}}

We prove this statement for left voters $w < v < 0$. The proof for right voters $0 < v < w$ is analogous. From the definition of $I_v$ ($I_w$) as a solution to Problem \eqref{eqn:aux_prob} with $l = \floor{A_v}$ ($l = \floor{A_w}$) and $r = 1$, three cases are possible:

\paragraph{Case 1} $\int_{ \floor{A_v} }^1 \alpha_v d\mu_0 \geq 0$. Then, $I_v = [ \floor{A_v}, 1 ]$. By \eqref{SC2:more-extreme}, $\int_{ \floor{A_v} }^1 \alpha_v d\mu_0 \geq 0 \implies \int_{ \floor{A_v} }^1 \alpha_w d\mu_0 \geq 0 \implies \int_{ \floor{A_w} }^1 \alpha_w d\mu_0 \geq 0$ so $I_w = [ \floor{A_w}, 1 ]$ and $I_w \supseteq I_v$ since $\floor{A_w} \leq \floor{A_v}$.

\paragraph{Case 2} $\int_{ \floor{A_v} }^1 \alpha_v d\mu_0 < 0$ and $\int_{ \floor{A_w} }^1 \alpha_w d\mu_0 \geq 0$. Then, $I_v = [ \floor{A_v},b_v ]$, where $\int_{\floor{A_v}}^{b_v} \alpha_v d\mu_0 = 0$ and $b_v < 1$. Therefore, $I_v = [ \floor{A_v}, b_v ] \subset [ \floor{A_w}, 1 ] = I_w$.

\paragraph{Case 3} $\int_{ \floor{A_v} }^1 \alpha_v d\mu_0 < 0$ and $\int_{ \floor{A_w} }^1 \alpha_w d\mu_0 < 0$. Then, $I_v = [ \floor{A_v},b_v ]$, where $\int_{\floor{A_v}}^{b_v} \alpha_v d\mu_0 = 0$ and $b_v < 1$; the same is true for $I_w$---in particular, $\int_{\floor{A_w}}^{b_w} \alpha_w d\mu_0 = 0$. By \eqref{SC2:more-extreme}, $\int_{\floor{A_v}}^{b_v} \alpha_v d\mu_0 = 0 \implies \int_{\floor{A_v}}^{b_v} \alpha_w d\mu_0 \geq 0 \implies \int_{\floor{A_w}}^{b_v} \alpha_w d\mu_0 \geq 0$ (the last inequality is strict unless $\floor{A_v} = \floor{A_w}$). Now, since the function $\int_{\floor{A_w}}^{z} \alpha_w d\mu_0$ is continuous and strictly decreasing in $z$ for $z \in (0,1)$, we have
\[
    \int_{\floor{A_w}}^{b_v} \alpha_w d\mu_0 \geq (>)~0 = \int_{\floor{A_w}}^{b_w} \alpha_w d\mu_0 \iff b_v \leq (<)~b_w,
\]
so that $I_v = [ \floor{A_v}, b_v ] \subseteq [ \floor{A_w}, b_w ] = I_w$. Furthermore, unless $\floor{A_v} = \floor{A_w}$, we have $\floor{A_w} < \floor{A_v}$ and $b_v < b_w$. \qedhere

\subsubsection*{Proof of \cref{prop:comp_stats}}

Let $\{L,0,E\}$ be the baseline electorate with the more extreme right voter $E > R$. Let $(\widetilde{M}_L,\widetilde{M}_E)$ be the solution to \eqref{problem:AUX-TA} for this electorate (described in \cref{prop:baseline-TA}). Note that $b_L \leq \ceil{A_R}$ and $R < E$ imply that $b_L \leq \ceil{A_E}$, so $(M_L^*,M_R^*)$ and $(\widetilde{M}_L,\widetilde{M}_E)$ are both described by Case 1 or 2 of \cref{prop:baseline-TA}. Therefore, $M_R^* = [a_R, \ceil{A_R}]$, $\widetilde{M}_E = [a_E, \ceil{A_E}]$ and, by \cref{lemma:more-extreme-more-persuadable}, $a_E \leq a_R$. To simplify exposition, let $W^* := M_L^* \cap M_R^*$ and $\widetilde{W} := \widetilde{M}_L \cap \widetilde{M}_E$.

If $\floor{A_L} \leq a_E \leq a_R$, then both elections fall into case 1 of \cref{prop:baseline-TA}. We have $W^* = [a_R,b_L]$, $\widetilde{W} = [a_E,b_L]$, and the claims of the proposition are true because $a_E \leq a_R$.

Next, suppose that $a_E < \floor{A_L}$, in which case $\widetilde{M}_L = \widetilde{W} = I_L(a_E,1) =: [\widetilde{a}_L,\widetilde{b}_L]$. Observe that $\mu_0(I_L(l,1))$ is {\it decreasing} in $l$ (which is a parameter in Problem \ref{eqn:aux_prob}) on $[-1,\floor{A_L}]$: increasing $l$ shrinks the feasible region $[l,1]$, so the objective value $\mu_0(I_L(l,1))$ can only go down. Therefore, $\mu_0(I_L(a_E,1)) \geq \mu_0( I_L(l,1) )$ for all $l \in [a_E,\floor{A_L}]$. In particular, if $a_E < \floor{A_L} \leq a_R$, then $\mu_0(I_L(a_E,1)) \geq \mu_0( I_L(\floor{A_L},1) ) \geq \mu_0([a_R,b_L])$, and if $a_E \leq a_R < \floor{A_L}$, then $\mu_0(I_L(a_E,1)) \geq \mu_0( I_L(a_R,1) )$. Either way, $\mu_0(\widetilde{W}) \geq \mu_0(W^*)$.

Finally, from \cref{lemma:Ls-constraint-appendix}, both boundaries of $I_L(a_E,1)$ are left of the corresponding boundaries of $I_L(a_R,1)$ (often strictly so---see the proof of \cref{lemma:Ls-constraint-appendix}). If $a_E \leq a_R < \floor{A_L}$, then $\widetilde{W} = I_L(a_E,1)$ and $W^* = I_L(a_R,1)$ so $\widetilde{W}$ is left of $W^*$. If, on the other hand, $a_E < \floor{A_L} \leq a_R$, then $\widetilde{W} = I_L(a_E,1)$ is left of $I_L(\floor{A_L},1) = [\floor{A_L},b_L]$, which is a superset of $[a_R,b_L] = W^*$, so $\widetilde{W}$ is left of $W^*$. \qedhere

\end{document}